\documentclass[11pt]{article}
\usepackage[margin=1in]{geometry}
\usepackage{times}
\usepackage[compact]{titlesec}
\usepackage{amsfonts}
\usepackage{amsmath}
\usepackage{amsthm}
\usepackage{amssymb}
\usepackage{latexsym}
\usepackage{epic}
\usepackage{epsfig}
\usepackage{hyperref}
\usepackage{verbatim}
\usepackage[justification=centering]{caption}
\usepackage{color}
\usepackage{xcolor}
\allowdisplaybreaks[1]
\usepackage{natbib}
\usepackage{setspace}
\usepackage{derivative}
\usepackage{cancel}
\usepackage{array}
\usepackage{tabu}
\usepackage{pdfpages}
\usepackage{multicol}
\usepackage{graphicx}
\usepackage{subcaption}
\usepackage{mathtools}
\usepackage[ruled]{algorithm2e}
\usepackage{multirow}
\usepackage{booktabs}

\usepackage{optidef}

\DeclareMathOperator*{\arginf}{\arg\,\inf}
\newcommand{\D}{\,\mathrm{d}}

\setcitestyle{authoryear}

\newtheorem{theorem}{Theorem} 

\newtheorem{lemma}{Lemma}
\newtheorem*{lemma*}{Lemma}

\newtheorem{assumption}{Assumption}
\newtheorem*{proposition*}{Proposition}

\newtheorem{definition}{Definition}
\newtheorem{remark}{Remark}

\newtheorem{example}[theorem]{Example}

\newtheorem*{conjecture*}{Conjecture}
\newtheoremstyle{nonindented}{1ex}{1ex}{}{}{\bfseries}{.}{.5em}{}
\newtheoremstyle{indented}{1ex}{1ex}{\itshape\addtolength{\leftskip}{0.6cm}\addtolength{\rightskip}{0.6cm}}{}{\bfseries}{.}{.5em}{}
\theoremstyle{nonindented}

\theoremstyle{indented}
\newtheorem*{direction*}{Research Direction}
\theoremstyle{plain}

\newcommand\blfootnote[1]{%
  \let\thefootnote\relax%
  \footnotetext{#1}%
  \let\thefootnote\svthefootnote%
}

\renewcommand{\hat}{\widehat}
\renewcommand{\tilde}{\widetilde}
\renewcommand{\bar}{\overline}

\def\min{\qopname\relax n{min}}
\def\max{\qopname\relax n{max}}
\def\argmin{\qopname\relax n{argmin}}
\def\argmax{\qopname\relax n{argmax}}

\newcommand{\eat}[1]{}

    \title{Mechanism Design via Market Clearing-Prices for Value Maximizers under Budget and RoS Constraints}
    
\author{Xiaodong Liu\\  Renmin University of China \\ \small xiaodong.liu@ruc.edu.cn 
        \and 
        Weiran Shen\\ Renmin University of China  \\ \small shenweiran@ruc.edu.cn
        \and 
        Zihe Wang\\  Renmin University of China \\ \small wang.zihe@ruc.edu.cn }
\begin{document}
 
\begin{titlepage}

\date{}

\maketitle
\thispagestyle{empty}    
    
    \begin{abstract}
    The transition to auto-bidding in online advertising has shifted the focus of auction theory from quasi-linear utility maximization to value maximization subject to financial constraints. We study mechanism design for buyers with private budgets and private Return-on-Spend (RoS) constraints, but public valuations, a setting motivated by modern advertising platforms where valuations are predicted via machine learning models.

We introduce the extended Eisenberg-Gale program, a convex optimization framework generalized to incorporate RoS constraints. We demonstrate that the solution to this program is unique and characterizes the market's competitive equilibrium. Based on this theoretical analysis, we design a market-clearing mechanism and prove two key properties: (1) it is incentive-compatible with respect to financial constraints, making truthful reporting the optimal strategy; and (2) it achieves a tight $\frac{1}{2}$-approximation of the first-best revenue benchmark, the maximum revenue of any feasible mechanism, regardless of IC. Finally, to enable practical implementation, we present a decentralized online algorithm.  Ignoring logarithmic factor, we prove that under this algorithm, both the seller’s revenue and each buyer’s utility converge to the equilibrium benchmarks with a sublinear regret of $\tilde{O}(\sqrt{m})$ over $m$ auctions.
    \end{abstract}

    %  A decision maker is deciding between an active action (e.g., purchase a house, invest certain stock) and a passive action.  The payoff of the active action depends on the buyer's private type and also an unknown state of nature. 
    %  An information seller can design experiments to reveal information about the realized state to the decision maker, and  would like to maximize profit from selling such information.  We  fully characterize, in closed-form, the revenue-optimal information selling mechanism for the seller.
    %     After eliciting the buyer's type, the optimal mechanism charges the buyer an upfront payment and then simply reveals whether the realized state passed a certain threshold or not. The optimal mechanism features both price discrimination and information discrimination. 
    %   The special buyer type who is a-priori indifferent between the active and passive action benefits the most from participating the mechanism.
       
\end{titlepage}
% \begin{frontmatter}

    % \runtitle{Optimal Pricing of Information}
    
%     \begin{aug}
%         % use \particle for den|der|de|van|von (only lc!)
%         % [id=?,addressref=?,corref]{\fnms{}~\snm{}\ead[label=e?]{}\thanksref{}}
%         %
%         %% e-mail is mandatory for each author
%         %
%         %%% initials in fnms (if any) with spaces
%         %
%         \author[id=au1,addressref={add1}]{\fnms{Shuze}~\snm{Liu}\ead[label=e1]{sl5nw@virginia.edu}}
%         \author[id=au2,addressref={add2}]{\fnms{Weiran}~\snm{Shen}\ead[label=e2]{shenweiran@ruc.edu.cn}}
%         \author[id=au3,addressref={add1}]{\fnms{Haifeng}~\snm{Xu}\ead[label=e3]{hx4ad@virginia.edu}}
%         %%%%%%%%%%%%%%%%%%%%%%%%%%%%%%%%%%%%%%%%%%%%%%
%         %% Addresses                                %%
%         %%%%%%%%%%%%%%%%%%%%%%%%%%%%%%%%%%%%%%%%%%%%%%
%         \address[id=add1]{%
% %            \orgdiv{Department of Computer Science},
%             \orgname{University of Virginia}}
        
%         \address[id=add2]{%
% %            \orgdiv{Department of the Second and Third Authors},
%             \orgname{Renmin University of China}}
%     \end{aug}
    
    %% Put support info here.  Reminder: do not thank the handling coeditor anonymously or by name

    % \begin{keyword}
    %     % \kwd{\color{red} Information Pricing}
    %     \kwd{Selling information}
    %     \kwd{mechanism design}
    %     \kwd{threshold experiments}
    %     \kwd{mixed virtual values}
    % \end{keyword}
    
% \end{frontmatter}

% \sz{
% econometrica requirements:
% 45 pages published 
% 25 online appendix

% %Related econometrica journal:
% %\url{https://onlinelibrary.wiley.com/doi/abs/10.3982/ECTA13251}
% %\url{https://onlinelibrary.wiley.com/doi/abs/10.3982/ECTA17260}
% }

 \section{Introduction}
 %Auction design, as a central topic at the intersection of economics and computer science, has attracted much attention since the seminal studies on the Vickrey-Clarke-Groves (VCG) auction model~\cite{groves1973incentives,vickrey1961counterspeculation,clarke1971multipart} and the optimal auction design for a single item~\cite{myerson1981optimal}. A wide range of auction mechanisms across diverse fields have been proposed in recent decades~\cite{cramton2004combinatorial,aggarwal2006truthful,varian2007position,edelman2007internet}. One of the most impactful applications of auction theory is online advertising, which has become the primary revenue source for numerous global Internet companies. Platforms such as Google, TikTok, and Meta rely heavily on auction-based mechanisms to allocate advertising slots and charge advertisers accordingly. 

The rise of auto-bidding systems has fundamentally transformed online advertising auctions. Traditional auction theory assumes that buyers are quasi-linear utility maximizers who directly submit bids for each auction. In practice, however, many advertisers delegate bidding decisions to algorithmic auto-bidders\footnote{See \url{https://support.google.com/google-ads/answer/2979071?hl=en} and \url{https://www.facebook.com/business/help/1619591734742116?id=2196356200683573}.}. These systems simplify the bidding process for advertisers by removing the need to set fine-grained bids for each auction. Instead, advertisers only need to submit their financial constraints, such as budget limits and target Return on Spend (RoS), to the auto-bidders. The auto-bidders then use algorithms to optimize bids for each auction, aiming to maximize the advertiser's total value across multiple auctions subject to financial constraints. In the auto-bidding systems, the buyers' valuations are not private information, because the platform can use historical data and machine learning models to predict valuations accurately, but buyers' budgets and target RoS are private information, which relies on the buyers' reporting.

% In an auto-bidding system, the advertisers first strategically report their financial constraints to the auto-bidders. Their valuations for each auction are considered as public information to the corresponding auto-bidder, as they can be accurately predicted via machine learning models based on historical data. The auto-bidders then optimize bids based on both the predicted valuations and the financial constraints reported by the advertisers. 

The change in buyers' models opens up opportunities to redesign mechanisms that are better suited for value maximizers, as traditional mechanisms, designed for utility maximizers, may not perform well for value maximizers. A growing body of literature addresses this topic. For instance, \citet{balseiro2021landscape} considers value maximizers subject to only the RoS constraint; \citet{balseiro2022optimal} designs a mechanism for value maximizers with a public budget constraint and a private RoS constraint; \citet{xing2023truthful} characterizes the mechanisms for value maximizers with a private budget constraint and a private RoS constraint, but does not provide an explicit and simple mechanism, while \citet{balseiro2023optimal} offers an optimal mechanism for a single value maximizer with both private valuations and a private RoS constraint.

% However, a general, explicit and simple mechanism for multiple value-maximizer with private budget and RoS constraints remains elusive.

In this paper, we focus on the problem of designing a mechanism for value maximizers with both private budget and RoS constraints. Rather than directly apply the revelation principle~\cite{myerson1979incentive,groves1973incentives}, we take a market perspective via a competitive equilibrium framework. In detail, we treat the auction ecosystem as a market where all buyers act as price-takers, maximizing value subject to their budget and RoS constraints. The platform's role is to compute market-clearing prices, where all buyers' total demand equals the platform's total supply. The competitive equilibrium framework offers several compelling advantages. First, the market-clearing price balances the total demand and the total supply, making them equal. Second, under market-clearing prices, the seller maximizes revenue, and all buyers maximize their own utilities.

Following the literature~\cite{aggarwal2024auto}, we consider buyers participating in multiple auctions simultaneously. This modeling approach simplifies the complexity of Bayesian modeling. Because numerous auctions run in online advertising systems, with the law of large numbers, these two modeling perspectives are asymptotically equivalent. This perspective also connects our problem to the Fisher market model and the Eisenberg-Gale program~\cite{eisenberg1959consensus}, which characterizes competitive equilibrium for value-maximizers with only budget constraints.

The standard competitive equilibrium framework does not directly extend to our setting due tothe presence of RoS constraints. With budget constraints alone, buyer demand decreases continuously as prices rise. If multiple buyers want the same item, raising the price smoothly reduces demand until supply equals demand. This continuity ensures the existence of equilibrium through standard fixed-point arguments. However, RoS constraints induce discontinuities in demand. A buyer purchases an item only if its value-to-price ratio meets their RoS threshold. At the critical price, the buyer's demand jumps discontinuously. They either demand one full unit or nothing. When multiple buyers have identical prices for the same item, traditional price adjustments fail:
\begin{itemize}
    \item Raising the price even slightly causes buyers to drop out (demand collapses to zero);
    \item Lowering the price maintains excess demand (multiple buyers want one item);
    \item No price can resolve the tie through the standard competitive equilibrium.
\end{itemize}

We resolve this through a modified competitive equilibrium framework. The platform can act as a centralized coordinator; when ties occur, the platform strategically allocates fractional supplies to tied buyers, ensuring global market clearance. 

To compute the modified competitive equilibrium, we extend the standard Eisenberg-Gale program to incorporate RoS constraints, named the extended Eisenberg-Gale program. We characterize the competitive equilibrium for value maximizers with budget and RoS constraints and construct the market-clearing mechanism based on the program's solution. While the centralized solution requires global knowledge of all auctions, we leverage the structure of the solution to interpret it as a first-price auction with uniform bidding strategies. This insight bridges the gap between market-clearing mechanisms and auto-bidding in first-price auctions~\cite{alimohammadi2023incentive,conitzer2022pacing,deng2025no,aggarwal2025no}. Consequently, we propose an online implementation of our mechanism using the Regularized Dual Averaging (RDA) algorithm~\cite{xiao2010dual}. Unlike previous works~\cite{deng2025no,aggarwal2025no} that treat other buyers' highest bids as a static distribution, our approach takes an equilibrium perspective. Our algorithm is decentralized and parameter-free: each buyer updates their strategy based solely on binary feedback (whether he wins the auction or not).  Ignoring logarithmic factor, we prove that our algorithm converges with a utility regret of $\tilde{O}(\sqrt{m})$ over $m$ auctions, significantly improving upon the $\tilde{O}(m^{3/4})$ bound under bandit feedback in this setting.

\subsection{Main Results}
We summarize our contributions as follows:
\begin{enumerate}
    \item The standard Eisenberg-Gale convex program models buyers with budget constraints only. We extend this program to accommodate buyers with both budget and RoS constraints. Under mild assumptions, we characterize the solution of this extended program as a competitive equilibrium, showing its existence, uniqueness, and Pareto efficiency (See Theorem \ref{thm:competitive equilibrium characterization}).
    \item We construct the market-clearing mechanism based on the outcome of the extended Eisenberg-Gale program, analyzing its IC properties (See Theorem \ref{thm:ic}) and the revenue approximation compared to the first-best revenue, the maximum revenue achieved for any feasible allocation without IC constraints (See Theorem \ref{thm:revenue_approximation}).

    \item We formulate the dual program of the extended Eisenberg-Gale program as an online learning problem. The dual objective consists of both a price term and a strongly convex regularizer. We apply the regularized dual averaging (RDA) algorithm~\cite{xiao2010dual} in our setting and propose an online implementation of the market-clearing mechanism (See Algorithm~\ref{alg:RDA}). Algorithm~\ref{alg:RDA} requires no pre-set hyperparameters. Due to the structure of the dual program, Algorithm~\ref{alg:RDA} can operate in a decentralized manner, with each buyer updating their strategy based on binary feedback (whether he wins the auction or not). We further prove that the seller's revenue over $m$ auctions is at least the optimal offline revenue plus a sublinear regret of $\tilde{O}(\sqrt{m})$ and each advertiser's total utility over $m$ auctions suffers a sublinear regret of $\tilde{O}(\sqrt{m})$ (See Theorem \ref{thm:alg}).

\end{enumerate}

% \subsection{Paper Organization}
% The remainder of this paper is organized as follows.

% \textbf{Section~2} introduces the model setup, including the buyer utility model with budget and RoS constraints, the competitive equilibrium definition, and the market-clearing price.

% \textbf{Section~3} begins with reviewing the classical Eisenberg-Gale convex program and then presents our extended Eisenberg-Gale convex program that incorporates RoS constraints. We establish the correspondence between optimal solutions to our program and competitive equilibria (Theorem~\ref{thm:competitive equilibrium characterization}), prove uniqueness and Pareto efficiency, and show incentive compatibility (Theorem~\ref{thm:ic}). We then analyze the platform's revenue, proving a tight $\frac{1}{2}$-approximation to the first-best benchmark (Theorem~\ref{thm:revenue_approximation}).

% \textbf{Section~4} develops our online learning algorithm for sequential auctions. We present a decentralized algorithm based on Regularized Dual Averaging and prove convergence guarantees: $\tilde{O}\left(\sqrt{m }\right)$ regret for each buyer's utility and the seller's revenue (Theorem~\ref{thm:alg}). This section demonstrates how the market-clearing mechanism can be implemented in practice without requiring global information.

% All formal proofs are provided in the Appendix, along with auxiliary lemmas and technical details.

\subsection{Related Work}
Our work lies at the intersection of mechanism design for auto-bidding, Fisher market theory, and online learning. We discuss the most closely related work in each area below.

\textbf{Mechanism design in auto-bidding auctions.} The design of optimal mechanisms for advertisers with financial constraints has been studied extensively. \citet{balseiro2021landscape} characterize optimal mechanisms for value maximizers and utility maximizers under RoS constraints across various information structures. \citet{balseiro2022optimal} propose optimal mechanisms for value maximizers with public valuations, public budgets, and private RoS constraints. \citet{xing2023truthful} characterize feasible mechanisms for value maximizers with public valuations and private budget and RoS constraints. \citet{balseiro2023optimal} study optimal mechanisms for a single buyer with private valuation and RoS constraint.

Our work takes a different approach: rather than optimizing for revenue directly, we characterize the market-clearing mechanism and analyze its incentive and revenue properties. Our mechanism is not optimal in general but has the advantage of being simple, implementable via standard first-price auctions, and amenable to online learning.

\textbf{Auction analysis in auto-bidding auctions} Another line of work studies the properties of traditional auctions for buyers with financial constraints. \citet{aggarwal2019autobidding} formulate the problem mathematically and prove that uniform bidding is optimal in truthful auctions. They also prove the Price of Anarchy (PoA) is $\frac{1}{2}$ when buyers have financial constraints in truthful auctions. \citet{deng2022efficiency} and \citet{liaw2024efficiency} extend to study the PoA of first-price auctions for buyers having financial constraints. \citet{feng2024strategic} consider scenarios where advertisers' objectives differ from auto-bidders', showing that misreporting can be beneficial, and showing that the PoA is $\frac{1}{2}$ at any pure Nash equilibrium and $\frac{1}{4}$ at any mixed Nash equilibrium.

Most relevant to our work, \citet{alimohammadi2023incentive} prove that first-price auctions with uniform-bidding strategies are auto-bidding incentive compatible. We provide an extended Eisenberg-Gale program, showing that the solution of this program is the equilibrium of first-price auctions. We also give a proof of incentive compatibility of first price auctions from a market perspective.

\textbf{Fisher markets and Eisenberg-Gale program.} Our work builds on the Fisher market model and the Eisenberg-Gale convex program~\citep{eisenberg1959consensus}, which computes market equilibrium for buyers with budget constraints. Several extensions have been proposed: \citet{bei2017earning} introduce earning limits for sellers; \citet{birnbaum2010new,devanur2004spending,vazirani2010spending} extend the framework to price-dependent utilities; \citet{devanur2016new} considers utility upper bounds; \citet{jalota2023fisher} introduces knapsack constraints.

Most closely related, \citet{conitzer2022pacing} use the Eisenberg-Gale program to compute first-price pacing equilibria for buyers with quasi-linear utilities under budget constraints. However, their solution concept differs from ours: their pacing equilibrium does not require buyers to maximize utilities individually. Furthermore, they do not consider RoS constraints, which fundamentally alter the buyer model (from quasi-linear to value maximizers). Our equilibrium analysis is substantially different from theirs.

\textbf{Online learning in Fisher markets.} Online variants of Fisher markets have been studied in settings where goods arrive sequentially~\citep{azar2010allocate,banerjee2022online,gao2021online} or users arrive sequentially~\citep{jalota2024stochastic}. Our work is most related to \citet{gao2021online}, who analyze online Fisher markets with budget constraints using dual averaging. The key distinction is our inclusion of RoS constraints. Also, the approach used to analyze each buyer's utility is different. For value maximizers with additional RoS constraints, their method cannot be applied to our setting.

\textbf{Online learning in first price auctions.} Prior work studies learning to bid in first-price auctions with financial constraints. \citet{ai2022no,castiglioni2022online,han2024optimal,wang2023learning} focus on budget-constrained buyers. \citet{aggarwal2025no,castiglioni2024online,liang2024online,deng2025no} study budget and RoS constraints. However, these studies typically analyze a single advertiser learning algorithm against a fixed environment, often using bandit algorithms. 
In contrast, our algorithm is built on the dual program of the extended Eisenberg-Gale program. We prove that when each buyer uses our proposed algorithm, each buyer's strategy converges to the optimal strategy.
Combining these results, we show that under bandit feedback, our algorithm guarantees that each buyer's utility suffers at most $\tilde{O}\left(\sqrt{m} \right)$. A key improvement of our work over \citet{aggarwal2025no} and \citet{deng2025no} lies in the information structure required for learning. In their frameworks, a buyer must learn the distribution of the highest competing bid to determine their optimal strategy. Conversely, our algorithm establishes that when buyers update their strategies simultaneously, this external information is unnecessary. We show that local feedback—specifically, a buyer's own realized utility—is sufficient to guide the market to equilibrium, effectively decoupling the buyers' learning processes.

Our algorithm is simple and decentralized. Each buyer's strategy update algorithm only depends on their own valuation and binary feedback (whether he wins the auction or not). Moreover, the algorithm is parameter-free. 

\section{Preliminaries}
We consider an online advertising platform (the seller) aiming to sell a set of items (advertising slots)  to multiple buyers (advertisers). Let $[m] = \{1, 2, \dots, m\}$ and $[n] = \{1, 2, \dots, n\}$ denote the set of items and the set of buyers, respectively. The items are sold through $m$ auctions, where each auction sells one item.
\subsection{Valuations and Financial Constraints}
Each buyer $i \in [n]$ has a valuation $v_{i, j} \in \mathcal{V} = [0, \bar{v}]$ for item $j \in [m]$, which represents the maximum amount buyer $i$ is willing to pay for item $j$. Suppose that for every item $j$, there exists at least one buyer with strictly positive valuation, i.e., $\max_i \{v_{i,j}\} > 0$, since otherwise the item can be safely ignored. Denote by $v_{\cdot, j} = (v_{1, j}, \cdots, v_{n,j})$ the buyers' value profile for item $j$. We assume the valuation profile $v = (v_{i,j})_{i \in [n], j \in [m]}$ is public information to the platform. This assumption is standard in the auto-bidding literature, justified by the platform's ability to predict valuations using high-volume historical data accurately. While valuations are public, the buyers face private financial constraints. Specifically, the private information of buyer $i$ is characterized by a type $\theta_i = (\lambda_i, \tau_i)$, where:
\begin{itemize}
    \item $\lambda_i \in \Lambda \subseteq \mathbb{R}_+$ is buyer $i$'s budget, which bounds the total payment buyer $i$ can incur across all $m$ auctions;
    \item $\tau_i \in \mathcal{T} \subseteq [1, \infty)$ is buyer $i$'s target Return-on-Spend, which bounds the efficiency of spend, requiring the total value obtained to be at least $\tau_i$ times the total payment.
\end{itemize}
Let $\lambda = (\lambda_1, \dots, \lambda_n)$ be the budget constraint profile and $\tau = (\tau_1, \dots, \tau_n)$ the RoS constraint profile. For simplicity, we denote the profile of constraints excluding buyer $i$ by $\lambda_{-i} = (\lambda_1, \cdots, \lambda_{i-1}, \lambda_{i+1},\cdots,  \lambda_n)$ and $\tau_{-i} = (\tau_1, \cdots, \tau_{i-1}, \tau_{i+1}, \cdots, \tau_{n})$, respectively.
\subsection{Mechanism}
Let $\hat{\lambda} = (\hat{\lambda}_1, \dots, \hat{\lambda}_n)$ and $\hat{\tau} = (\hat{\tau}_1, \dots, \hat{\tau}_n)$ denote the reported constraint profiles of the buyers, respectively. A mechanism consists of both an allocation rule $a$ and a payment rule $t$.
\begin{definition}[Mechanism]
A mechanism is a tuple $(a, t)$, where:
\begin{itemize}
    \item $a(v, \hat{\lambda}, \hat{\tau}) = \{a_{i,j}(v_{\cdot, j}, \hat{\lambda}, \hat{\tau})\}_{i \in [n], j \in [m]}$ is the allocation rule, where $a_{i,j}(v_{\cdot, j}, \hat{\lambda}, \hat{\tau}):(\mathcal{V} \times \Lambda \times \mathcal{T})^n \mapsto [0, 1]^n$ denotes the probability (or fraction) of item $j$ allocated to buyer $i$. Feasibility requires $\sum_{i \in [n]} a_{i,j}(v_{\cdot, j}, \hat{\lambda}, \hat{\tau}) \le 1$ for all $j$.
    \item $t(v, \hat{\lambda}, \hat{\tau}) = \{t_{i,j}(v_{\cdot, j}, \hat{\lambda}, \hat{\tau})\}_{i \in [n], j \in [m]}$ is the payment rule, where $t_{i,j}(v_{\cdot, j}, \hat{\lambda}, \hat{\tau}): (\mathcal{V} \times \Lambda \times \mathcal{T})^n \mapsto  \mathbb{R}_+$ denotes the payment collected from buyer $i$ for item $j$.
    \end{itemize}
\end{definition}
% Note that since valuations $v$ are public, the mechanism may depend on them implicitly; we omit $v$ from the notation for brevity.
\subsection{Buyer Utility, Individual Rationality, and Incentive Compatibility}
%Buyers are value maximizers subject to budget and RoS constraints. Now suppose buyers' reports are $(\hat{\lambda},  \hat{\tau})$, while buyers' private constraints are $(\lambda, \tau)$.
Given buyers' reported constraints $(\hat{\lambda}, \hat{\tau})$, the budget constraint of buyer $i$ requires that their total payment does not exceed the budget $\lambda_i$:
\begin{align}
    \label{eq:budget constraint}
	\sum_{j=1}^m t_{i, j}(v_{\cdot, j}, \hat{\lambda}, \hat{\tau}) \le \lambda_i.
\end{align}
The RoS constraint of buyer $i$ is satisfied if the total value obtained from the auctions is at least $\tau_i$ times the total payment:
\begin{align}
    \label{eq:ros constraint}
	\sum_{j=1}^m a_{i, j}(v_{\cdot, j}, \hat{\lambda}, \hat{\tau})v_{i,j} \ge \tau_i \sum_{j=1}^m t_{i, j}(v_{\cdot, j}, \hat{\lambda}, \hat{\tau}).	
\end{align}
The buyers are value maximizers subject to the above budget and RoS constraints, i.e., the utility of buyer $i$ is the total realized value if the constraints are satisfied, and $-\infty$ otherwise. Formally:
\begin{align}
    u_i(\hat{\lambda}, \hat{\tau}; \lambda_i, \tau_i) = 
    \begin{cases}
        \sum_{j=1}^m a_{i,j}(v_{\cdot,j}, \hat{\lambda}, \hat{\tau}) v_{i,j} & \text{if constraints \eqref{eq:budget constraint} and \eqref{eq:ros constraint} are satisfied,} \\
        -\infty &\text{otherwise.}
    \end{cases}
    \label{eq:utility}
\end{align}
Note that a buyer's utility drops to $-\infty$ if the outcome violates their \textit{true} constraints, rather than their reported ones.

The mechanism proceeds as follows:
\begin{enumerate}
    \item Each buyer privately observes their true constraints $(\lambda_i, \tau_i)$;
    \item Each buyer reports constraints $(\hat{\lambda}_i, \hat{\tau}_i)$ to the platform;
    \item The platform computes allocations $x(v, \hat{\lambda}, \hat{\tau})$ and payments $t(v, \hat{\lambda}, \hat{\tau})$ according to the mechanism rules;
    \item Each buyer obtains utility $u_i(\hat{\lambda},\hat{\tau}; \lambda_i, \tau_i)$.
\end{enumerate}

We consider mechanisms that are incentive compatible (IC) and individually rational (IR).

\begin{definition}[Incentive Compatibility]
    A mechanism is incentive compatible (or truthful), if for any buyer $i$, any true type $(\lambda_i, \tau_i)$, any reported type $(\hat{\lambda}_i, \hat{\tau}_i)$, and any type profile $(\lambda_{-i}, \tau_{-i})$ of other buyers, we have:
    \begin{equation*}
        u_i(\lambda_i, \lambda_{-i},\tau_{i}, \tau_{-i}; \lambda_{i}, \tau_{i}) \ge u_i(\hat{\lambda}_i, \lambda_{-i}, \hat{\tau}_i, \tau_{-i}; \lambda_{i}, \tau_{i}).
    \end{equation*}
\end{definition}

\begin{definition}[Individual Rationality]
    A mechanism is individually rational if for any buyer $i\in[n]$, with true type profile $(\lambda, \tau)$, reporting truthfully guarantees non-negative utility regardless of other buyers' reports:
    \begin{gather*}
        u_i(\lambda_i, \hat{\lambda}_{-i}, \tau_i, \hat{\tau}_{-i}; \lambda_i, \tau_i) \ge 0, \quad \forall \hat{\lambda}_{-i}, \hat{\tau}_{-i}.
    \end{gather*}
\end{definition}

Note that according to Equation \eqref{eq:utility}, a buyer's utility is non-negative if both the budget and RoS constraints are satisfied. 

\subsection{Competitive Equilibrium and Market-Clearing Price}

Rather than directly designing allocation and payment rules via the revelation principle~\citep{myerson1979incentive}, we adopt a \emph{market-based perspective} inspired by classical equilibrium theory. We reformulate the mechanism design problem as a market where:
\begin{itemize}
    \item Each item $j$ has a unit price $p_j$;
    \item Buyers act as price-takers, optimizing their demand given prices and constraints;
    \item The seller coordinates supply to ensure market clearance.
\end{itemize}

This perspective offers several advantages. First, it allows us to leverage powerful convex optimization techniques to compute equilibrium outcomes. Second, market-clearing prices naturally balance supply and demand, leading to economically interpretable mechanisms. Third, as we show in Section~\ref{sec:theory}, the resulting allocation can be implemented via standard first-price auctions with uniform bidding strategies.

In standard competitive equilibrium theory, equilibrium consists of prices and allocations where (i) each agent optimizes given prices, and (ii) markets clear (supply equals demand). However, as we discuss below, RoS constraints introduce discontinuities in buyer demand functions that can prevent the existence of standard competitive equilibria. We address this by introducing the concept of a \emph{modified competitive equilibrium}, which empowers the seller to coordinate supply allocation.

\begin{definition}[Modified Competitive Equilibrium]\label{def:modified_competitive_equilibrium}
    A modified competitive equilibrium consists of a price vector $p = (p_1, \ldots, p_m)$, buyer demands $\{x_{i,j}\}_{i \in [n], j \in [m]}$, and seller supplies $\{y_{i,j}\}_{i\in[n],j \in [m]}$ satisfying:
\begin{itemize}
    \item \textbf{Buyer Utility Maximization:} Each buyer $i$ maximizes utility given prices $p$ and their financial constraints:
    \begin{maxi}|m|[2]{x_{i, \cdot}}{\sum_{j=1}^m x_{i,j} v_{i,j}}{\tag{I-PRIMAL}\label{IOP:primal}}{}
    \addConstraint{\sum_{j=1}^m p_j x_{i,j}}{\le \lambda_i\quad}{\text{(Budget)}}
    \addConstraint{\sum_{j=1}^m x_{i,j}v_{i,j}}{\ge \tau_i\sum_{j=1}^m p_j x_{i,j}\quad}{\text{(RoS)}}
    \addConstraint{x_{i,j}}{\in [0, q_{i,j}]\quad}{\forall j \in [m],}
    \end{maxi}
    where $q_{i,j}$ is defined as:
    \begin{gather*}
        q_{i,j} = \begin{cases}
            y_{i,j} & \text{if } 0 < y_{i,j} < 1 \\
            1 &\text{otherwise.}
        \end{cases}
    \end{gather*}
    The buyer can demand up to the seller's allocated supply when the supply is fractional.
    \item \textbf{Seller Revenue Maximization:} The seller chooses supply $\left\{y_{i,j} \right\}_{i\in[n],j\in[m]}$ to maximize revenue:
    \begin{maxi*}|m|[2]{y}{\sum_{j=1}^m \sum_{i=1}^n p_{j}y_{i,j}}{}{}
    \addConstraint{\sum_{i=1}^n y_{i,j}}{\le 1}{\quad \forall j \in [m].}
    \end{maxi*}

    \item \textbf{Market Clearance:} Supply equals demand for each item:
    \begin{gather*}
        \sum_{i=1}^n x_{i, j}=\sum_{i=1}^n y_{i,j}, \quad \forall j\in [m].
    \end{gather*}
\end{itemize}
\end{definition}

\begin{definition}[Market-Clearing Price]
    A price vector $p = (p_1, \ldots, p_m)$ is a \emph{market-clearing price} if there exist demands $x$ and supplies $y$ such that $(p, x, y)$ forms a competitive equilibrium.
\end{definition}

\begin{remark}
    Although the seller is indeed a monopolist, we adopt competitive equilibrium as our solution concept because it accurately reflects the operational reality of modern advertising platforms. These platforms act as algorithmic market makers, computing clearing prices to balance supply and demand rather than strategically manipulating prices. This approach yields mechanisms that are simple, transparent, and amenable to decentralized implementation.

    The modification is necessitated by RoS constraints, which induce discontinuities in buyer demand functions. In markets with only budget constraints, fractional allocations naturally arise as prices adjust—higher prices reduce demand continuously until supply equals demand. However, with RoS constraints, a buyer's demand can drop discontinuously from 1 to 0 when the price crosses their RoS threshold $\frac{v_{i,j}}{\tau_i}$. This creates situations where multiple buyers simultaneously demand an item at a given price, yet no price adjustment can resolve the tie without eliminating all demand. Our modification empowers the seller to act as a centralized coordinator: when ties occur, the seller strategically allocates fractional supplies to ensure global market clearance. Crucially, buyers are indifferent among all feasible demands at the equilibrium price (as their RoS constraint binds), so accepting the seller's allocated supply is optimal. We formalize this in Section~\ref{sec:theory}.

\end{remark}

Since the seller incurs no production cost, their optimal strategy is to supply the full quantity of each item ($\sum_{i=1}^ny_{i,j} = 1$) at any non-negative price. Therefore, our mechanism design problem reduces to finding a price vector $p$ and allocation $x$ such that: (1) each buyer optimally responds to $p$ given their constraints, and (2) total demand equals total supply for each item.

\section{The Extended Eisenberg-Gale Market and the Market-Clearing Mechanism}
\label{sec:theory}

We begin our analysis by relaxing the incentive compatibility constraints, assuming that each buyer $i$ reports their private type $(\lambda_i, \tau_i)$ truthfully. This assumption enables us to focus on computing the market-clearing price $p$ and the associated allocation $x$. We define a mechanism $(a, t)$ that produces the same outcome of the competitive equilibrium given the profiles $(\lambda, \tau)$ as input. We then verify that the mechanism is indeed incentive compatible, ensuring truthful reporting.

In this section, we develop a computational framework for finding market-clearing prices and allocations. Our approach builds on the classical Eisenberg-Gale convex program, which characterizes competitive equilibria in markets with value-maximizing buyers subject to only budget constraints~\citep{eisenberg1959consensus}. However, the standard Eisenberg-Gale framework cannot be directly applied to our setting because our buyers are subject to both budget and RoS constraints. A critical technical challenge is that RoS constraints are defined over their valuations and payments. In the standard Eisenberg-Gale program, the decision variables are allocations, while prices emerge only as dual variables. 

We resolve this by introducing an \emph{extended Eisenberg-Gale convex program} that explicitly handles RoS constraints and show that optimal solutions to these programs precisely characterize competitive equilibria in our setting.

\subsection{The Extended Eisenberg-Gale Convex Program}
We start with the standard Eisenberg-Gale convex program. The program features two sets of variables:$\{x_{i,j}\}_{i\in [n],j\in [m]}$ and $\{c_i\}_{i \in [n]}$, where $x$ is the allocation and $c_i$ is the utility of buyer $i$:
\begin{maxi}|l|[2]{x,c}{\sum_{i=1}^n \lambda_i\log(c_i)}{\label{prog:standard}}{}
\addConstraint{c_i}{= \sum_{j=1}^m x_{i,j}v_{i,j}}{\quad\forall i \in [n]}
\addConstraint{\sum_{i=1}^nx_{i,j}}{\le 1}{\quad\forall j \in [m]}
\addConstraint{x_{i,j}}{\ge 0}{\quad\forall i \in [n], \forall j \in [m]}.
\end{maxi}
As described above, the standard Eisenberg-Gale convex program can only handle the budget constraints. To incorporate the RoS constraints, we need to build connections between the utility and the payment of each buyer. Note that the dual program of Program \eqref{prog:standard} involves the price of each item, which can be used to represent the payments of the buyers. Therefore, the intuition behind our extended program is to add RoS constraints to the dual program.

Introducing dual variables $\{w_i\}_{i\in[n]}$ and $\{ p_j \}_{j\in[m]}$, the dual program of Program \eqref{prog:standard} can be written as follows:
 \begin{mini*}|l|[2]{p, w}{\sum_{j=1}^m p_j - \sum_{i=1}^n \lambda_i \log(w_i) + \sum_{i=1}^n \lambda_i \left(\log(\lambda_i) - 1\right)}{}{}
\addConstraint{p_j}{\ge w_i v_{i,j}}{\quad \forall i \in [n], \forall j \in [m]}
\addConstraint{p_j}{\ge 0}{\quad \forall j\in [m]}
\addConstraint{w_i}{\ge 0}{\quad \forall i \in [n]},
\end{mini*}
where the variable $p_j$ can be interpreted as the price of item $j$.
% We now list the complementary slackness conditions below, which will be useful for us to propose the extended Eisenberg-Gale convex program.

Let $(x^*, c^*)$ and $(w^*,p^*)$ be optimal solutions to the above two programs.
% 去掉 (5), (6) 保留(7)但是去掉标号
The following complementary slackness condition holds: $x^*_{i,j}(p^*_j -w^*_i v_{i,j}) = 0,\forall i \in [n],j\in[m]$.
Summing over all $m$ items on both sides of the above equation gives: $ \sum_{j=1}^m p_j^*x_{i,j}^* = w_i^* \sum_{j=1}^mv_{i,j}x_{i,j}^*, \forall i \in [n].$
Notice that the summations on two sides of the above equation are the total payment and the utility of buyer $i$, respectively. This immediately implies that the actual RoS of buyer $i$ is $\frac{1}{w^*_i}$. Consequently, to deal with buyer $i$'s RoS constraint, we can add a new constraint $\frac{1}{w_i}\ge \tau_i$, or equivalently $w_i \le \frac{1}{\tau_i}$, to the dual program:
% If we add a new constraint on $w_i$ in the dual program, $w_i \le \frac{1}{\tau_i}$, the solution ensures that buyer $i$'s RoS constraint is satisfied.
% Therefore, the dual program of the extended Eisenberg-Gale convex program is as follows.
 \begin{mini}|l|[2]{p, w}{\sum_{j=1}^m p_j - \sum_{i=1}^n \lambda_i \log(w_i) + \sum_{i=1}^n \lambda_i \left(\log(\lambda_i) - 1\right)}{\tag{C-DUAL}\label{prog: dual}}{}
\addConstraint{p_j}{\ge w_i v_{i,j}}{\quad \forall i \in [n], \forall j \in [m]}
\addConstraint{w_i}{\le \frac{1}{\tau_i}}{\quad \forall i\in [n]}
\addConstraint{p_j}{\ge 0}{\quad \forall j\in [m]}
\addConstraint{w_i}{\ge 0}{\quad \forall i \in [n]},
\end{mini}
% Observe that in the above program, the last term of the objective function is constant, and thus can be removed safely. However, we still keep the term for now, as we will make use of the strong duality condition later.

Taking the dual of the above program again, we should obtain a new primal program, which we call the extended Eisenberg-Gale convex program:
% Introducing new variables $\{d_i\}_{i\in[n]}$, which is the dual variables of the new constraints, the extended Eisenberg-Gale convex program can be written as 
\begin{maxi}|l|[2]{x,d,c}{\sum_{i=1}^n \left[\lambda_i\log(c_i)-\frac{d_i}{\tau_i}\right]}{\tag{C-PRIMAL}\label{prog: primal}}{}
\addConstraint{c_i}{= \sum_{j=1}^m x_{i,j}v_{i,j} + d_i}{\quad\forall i \in [n]}
\addConstraint{\sum_{i=1}^nx_{i,j}}{\le 1}{\quad\forall j \in [m]}
\addConstraint{d_i}{\ge 0}{\quad\forall i \in [n]}
\addConstraint{x_{i,j}}{\ge 0}{\quad\forall i \in [n], \forall j \in [m]}.
\end{maxi}
Since we introduced a new set of constraints to its dual, our extended Eisenberg-Gale convex program contains a new set of variables: $\{d_i\}_{i\in[n]}$.

% Then omitting the constant terms in the dual program, the dual program can be written as
%  \begin{mini}|l|[2]{p, w}{\sum_{j=1}^m p_j - \sum_{i=1}^n \lambda_i \log(w_i)}{\tag{C-DUAL}\label{prog: dual}}{}
% \addConstraint{p_j}{\ge w_i v_{i,j}}{\quad \forall i \in [n], \forall j \in [m]}
% \addConstraint{w_i}{\le \frac{1}{\tau_i}}{\quad \forall i\in [n]}
% \addConstraint{p_j}{\ge 0}{\quad \forall j\in [m]}
% \addConstraint{w_i}{\ge 0}{\quad \forall i \in [n]},
% \end{mini}

% We first focus on the allocation problem associated with the mechanism. Note that our model is similar to the classic Fisher market model. However, the classical model does not address buyers' RoS constraints. 
With the above programs, we compute an allocation and pricing scheme that satisfies both the budget and RoS constraints for each buyer. And since these programs determine the outcome in a centralized way, we also call our extended program the central program.
% where $w_i$ and $p_j$ are the dual variables of the first two constraints in Program \eqref{prog: primal}, respectively.
% constraint $w_i = \sum_{j=1}^m x_i^j v_i^j + t_i$ and $p^j$ is the dual variable of the constraint $\sum_{i=1}^n x_i^j \le 1$.

 %Based on these solutions, 
% Viewing $x^*$ as allocations and $p^{*}$ as the prices of the items, 
% we can define a mechanism $\hat{\mathcal{M}}$ as follows:
% \begin{gather*}
%     \hat{x}_{i,j}=x^*_{i,j},\forall i,j\quad \text{and}\quad \hat{t}_{i}=\sum_{j}p^*_{j}x^*_{i,j},\forall i.
% \end{gather*}
% In other words, mechanism $\hat{\mathcal{M}}$ allocates the items according to $x^*$ and charges each buyer the total price of the allocated items. %It is worth noticing that mechanism $\hat{\mathcal{M}}$ always allocates exactly 1 unit of each item, i.e., $\sum_{i=1}^nx^*_{i,j}=1,\forall j$. 

% Since $x^*$ is the allocation and $p^*$ contains the prices of the items, the payment of buyer $i$ 

In the remainder of this section, we show that if we view the auctions as a market, a set of prices $p$ and an allocation $x$ form a competitive equilibrium if and only if they come from optimal solutions to the above programs.

\subsection{Competitive Equilibrium and Market-Clearing Mechanism}
In this subsection, we establish our first main result (Theorem~\ref{thm:competitive equilibrium characterization}): optimal solutions to Programs~\eqref{prog: primal} and~\eqref{prog: dual} precisely correspond to competitive equilibria. In particular:
\begin{itemize}
    \item Any optimal solution $(p^*, x^*, y^*)$ forms a competitive equilibrium (Lemma~\ref{lemma:competitive equilibrium});
    \item Conversely, any competitive equilibrium corresponds to optimal solutions of the programs (Lemma~\ref{lemma:IOP_SOP});
    \item The equilibrium is unique and Pareto efficient (Lemma~\ref{lemma:unique} and Lemma~\ref{lemma:Pareto_efficiency}).
\end{itemize}

Before starting our analysis, we introduce two technical assumptions to streamline the exposition.

% \begin{assumption}[Non-Atomic Valuations]\label{assmption:no_tie_breaking_rule}
%     For any item $j \in [m]$, the valuation $v_{i,j}$ of each buyer $i$ is drawn independently from a continuous (non-atomic) distribution.
% \end{assumption}

% This assumption is standard in the auction literature to preclude the need for complex tie-breaking rules~\cite{myerson1981optimal,balseiro2021landscape,balseiro2022optimal}. In the context of online advertising, valuations are typically products of cost per action, predicted click-through rates (pCTR), and conversion rates (pCVR). Since these underlying probabilities vary continuously across heterogeneous users and ad creatives, the probability that two distinct buyers have the same valuation for one item is zero.

\begin{assumption}[Competitive Market]\label{assumption:no_buyer_win_all}
    No single buyer $i \in [n]$ can purchase the entire set of items $[m]$ without violating their budget $\lambda_i$ or RoS constraint $\tau_i$.
\end{assumption}

This assumption reflects the competitive nature (or ``thickness") of real-world advertising markets. Given the vast number of auctions $m$ relative to any single advertiser's budget, it is structurally impossible for one buyer to monopolize the supply. \footnote{This ensures that the competitive equilibrium is driven by market-wide competition rather than the constraints of a single dominant agent.}

We now proceed with the formal analysis.
It is easy to verify that both of the above programs ((\ref{prog: primal}) and (\ref{prog: dual})) are convex. Let $(x^*,d^*,c^*)$ and $(w^*, p^*)$ be optimal solutions to Program \eqref{prog: primal} and \eqref{prog: dual}, respectively. We now list the complementary slackness conditions below, which will be useful for later arguments.
% Before starting our analysis, we first discuss some useful properties of the above two programs. 
% Since $(x^*,d^*,c^*)$ and $(w^*, p^*)$ are optimal solutions, they must satisfy the following complementary slackness conditions:
\begin{align}
	&w^*_i \left(c^*_i - \sum_{j=1}^m x^*_{i,j}v_{i,j} - d_i \right) = 0,&\forall i \in [n], \label{eq:sop_cs_beta}\\
	&p^*_j \left( \sum_{i=1}^n x^*_{i,j} - 1\right) = 0,&\forall j \in [m], \label{eq:sop_cs_p}\\
	&x^*_{i,j}(p^*_j -w^*_i v_{i,j}) = 0,&\forall i \in [n],j \in [m], \label{eq:sop_cs_x}\\
	&d^*_i\left(w^*_i - \frac{1}{\tau_i} \right) = 0,&\forall i \in [n]. \label{eq:sop_cs_t}
\end{align}

We first establish key structural properties of the optimal solutions. The following result provides a lower bound for the optimal $w^*_i$.
% Before we dive into our analysis, we first prove some useful properties of the optimal solutions.
\begin{lemma}[Lower Bound on Dual Variables]\label{lemma:beta_lower_bound}
    Define $\underline{w}_i = \min \left\{\frac{\lambda_i}{m \bar{v}}, \frac{1}{\tau_i} \right\}, i \in [n].$
    For any optimal solution $(w^*,p^*)$ to Program \eqref{prog: dual}, $w^*_i$ is lower bounded by $\underline{w}_i$, i.e., $w^*_i \ge \underline{w}_i,\forall i \in [n]$.
    % According to the program (\ref{prog: primal}) and the program (\ref{prog: dual}), we have for each buyer, $i \in [n]$, $w_i \ge \underline{w}_i$
\end{lemma}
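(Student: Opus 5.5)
Suppose for contradiction that $w^*_i < \underline{w}_i$ for some buyer $i$. Since $\underline{w}_i \le \frac{1}{\tau_i}$, we then have $w^*_i < \frac{1}{\tau_i}$, so the upper-bound constraint is slack. The plan is to exploit the first-order (KKT) optimality conditions of Program~\eqref{prog: dual} in the variable $w_i$, and to turn them into a budget-type identity that cannot hold when $w^*_i$ is too small.

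\textbf{Step 1: existence of optimal solutions and positivity of $w^*_i$.} The objective contains $-\lambda_i\log(w_i)$, which tends to $+\infty$ as $w_i\to 0^+$, so any optimal $w^*_i$ is strictly positive (assuming $\lambda_i>0$). Both programs are convex and Slater's condition holds, e.g.\ by taking $x$ small and $d$ positive in Program~\eqref{prog: primal}, or strictly feasible $(p,w)$ in Program~\eqref{prog: dual}. Hence strong duality holds and KKT multipliers exist, and these multipliers can be identified with an optimal primal solution $(x^*,d^*,c^*)$. Lagrangian duality gives the stationarity conditions
\[
\frac{\lambda_i}{w^*_i} = \sum_{j=1}^m x^*_{i,j} v_{i,j} + d^*_i = c^*_i ,
\]
with $d^*_i$ the multiplier of $w_i\le 1/\tau_i$ and $x^*_{i,j}$ the multiplier of $p_j\ge w_i v_{i,j}$.

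\textbf{Step 2: use complementary slackness.} Because $w^*_i<\frac1{\tau_i}$, condition \eqref{eq:sop_cs_t} forces $d^*_i=0$. Stationarity then reduces to $\lambda_i = w^*_i\sum_j x^*_{i,j}v_{i,j}$. Since $x^*_{i,j}\le 1$ (from $\sum_i x^*_{i,j}\le1$) and $v_{i,j}\le\bar v$, this gives
\[
\lambda_i \le w^*_i\, m\bar v < \underline{w}_i\, m\bar v \le \frac{\lambda_i}{m\bar v}\, m\bar v=\lambda_i ,
\]
which is a contradiction. Hence $w^*_i\ge\underline{w}_i$.

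\textbf{Main obstacle.} The delicate point is Step 1: the statement is about \emph{any} optimal $(w^*,p^*)$, so I must justify that the stationarity identity $\lambda_i = w^*_i c^*_i$ holds for every dual optimum. This needs a primal optimum paired with it via KKT. I would derive it from strong duality, which guarantees that every dual optimum pairs with every primal optimum through the complementary slackness conditions \eqref{eq:sop_cs_beta}--\eqref{eq:sop_cs_t}, together with stationarity in $w_i$ of the Lagrangian.

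A direct alternative avoids multipliers altogether. If $w^*_i<\min\{\lambda_i/(m\bar v),1/\tau_i\}$, I would perturb $w_i$ upward by $\epsilon$ and raise each $p_j$ by $\epsilon v_{i,j}$ where necessary to keep $p_j \ge w_i v_{i,j}$. This increases $\sum_j p_j$ by at most $\epsilon m\bar v$, while the term $-\lambda_i\log w_i$ decreases by about $\epsilon\lambda_i/w^*_i > \epsilon m\bar v$. The net objective strictly decreases, contradicting optimality. I would present this perturbation argument as the proof, since it is self-contained.
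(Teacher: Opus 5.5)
Your proposal is correct. Steps 1--2 are essentially the paper's own proof. The paper also splits on whether $w^*_i=1/\tau_i$ and uses \eqref{eq:sop_cs_t} to get $d^*_i=0$ in the other case. It then combines the stationarity identity $w^*_ic^*_i=\lambda_i$ with the summed form of \eqref{eq:sop_cs_x} to obtain $w^*_i=\lambda_i/\sum_j v_{i,j}x^*_{i,j}\ge\lambda_i/(m\bar v)$.

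The perturbation argument you say you would actually present takes a genuinely different route, because it stays entirely inside Program \eqref{prog: dual}. You move to $w^*_i+\epsilon$ and set $p'_j=\max\{p^*_j,(w^*_i+\epsilon)v_{i,j}\}$. This keeps every constraint satisfied, raises $\sum_j p_j$ by at most $\epsilon m\bar v$, and lowers $-\lambda_i\log w_i$ by at least $\epsilon\lambda_i/(w^*_i+\epsilon)$. That decrease exceeds $\epsilon m\bar v$ once $\epsilon$ is small enough that $w^*_i+\epsilon<\min\{\lambda_i/(m\bar v),1/\tau_i\}$. Use this concavity bound rather than ``about $\epsilon\lambda_i/w^*_i$'' to make the strict decrease rigorous.

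What the perturbation buys is exactly what you flagged as the main obstacle. It needs no primal optimum, no constraint qualification, and no argument that every dual optimum is paired with a primal optimum through KKT. The paper takes these points for granted when it cites \eqref{eq:lemma_1} and \eqref{eq:cap_payment} from the proof of Lemma \ref{lemma:feasible}. What the paper's KKT route buys is the exact identity $w^*_i=\lambda_i/\sum_j v_{i,j}x^*_{i,j}$ for buyers with $d^*_i=0$, which it reuses later in the incentive-compatibility argument. Your perturbation gives only the inequality.
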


Any optimal solution must satisfy the feasibility constraint $p^*_j\ge w^*_iv_{i,j},\forall i\in[n],j\in[m]$ in Program \eqref{prog: dual}. In fact, we must have $p^*_j=\max_i \{w^*_iv_{i,j}\}$, as lowering $p^*_j$ clearly improves the objective of the program. And with Lemma \ref{lemma:beta_lower_bound} and the assumption that $\max_i\{v_{i,j}\}>0$, we get $p_j^* > 0,\forall j\in[m]$, i.e., the prices are all positive. Combined with Equation \eqref{eq:sop_cs_p}, we have that $\sum_{i=1}^n x_{i,j}^* = 1$, i.e., the market fully clears.

We then show that the solution to Program \eqref{prog: primal} ensures that each individual's budget and RoS constraints are satisfied.
\begin{lemma}[Feasibility of Constraints]\label{lemma:feasible}
    Let $x^*_{i,\cdot}$ be the allocation of buyer $i$, and $t^*_i=\sum_{j=1}^mp^*_jx^*_{i,j}$ the corresponding payment. Then $x^*_{i,\cdot}$ and $t^*_i$ satisfy each buyer's budget and RoS constraints.
\end{lemma}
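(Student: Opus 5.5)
Summing the complementary slackness condition \eqref{eq:sop_cs_x} over all items $j$ gives the payment identity
\[
t^*_i=\sum_{j=1}^m p^*_j x^*_{i,j} = w^*_i \sum_{j=1}^m v_{i,j}x^*_{i,j}.
\]
The plan is to check each constraint separately using this identity together with the remaining optimality conditions.

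\textbf{RoS constraint.} Dual feasibility in \eqref{prog: dual} gives $w^*_i\le 1/\tau_i$. Plugging this into the identity yields
\[
\tau_i t^*_i = \tau_i w^*_i \sum_j v_{i,j}x^*_{i,j} \le \sum_j v_{i,j}x^*_{i,j},
\]
which is exactly \eqref{eq:ros constraint} evaluated at the allocation $x^*$.

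\textbf{Budget constraint.} First I will derive the stationarity condition of \eqref{prog: primal} with respect to $c_i$. The Lagrangian with multiplier $w_i$ on the defining equality for $c_i$ gives $\lambda_i/c^*_i = w^*_i$. Optimality forces $c^*_i>0$, since the objective tends to $-\infty$ otherwise. Hence $w^*_i c^*_i=\lambda_i$, and this is consistent with Lemma~\ref{lemma:beta_lower_bound}, which gives $w^*_i>0$.

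Next, since $c^*_i=\sum_j x^*_{i,j}v_{i,j}+d^*_i$, we get
\[
t^*_i = w^*_i\bigl(c^*_i-d^*_i\bigr) = \lambda_i - w^*_i d^*_i \le \lambda_i,
\]
because $w^*_i\ge 0$ and $d^*_i\ge 0$. This proves the budget constraint.

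The complementary slackness condition \eqref{eq:sop_cs_t} also gives a useful interpretation. If $d^*_i>0$, then $w^*_i=1/\tau_i$, so the RoS constraint binds and the budget is slack. If $d^*_i=0$, the budget binds exactly.

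\textbf{Main obstacle.} The delicate step is justifying the relation $w^*_i=\lambda_i/c^*_i$ linking the two programs. This needs strong duality (Slater's condition) between \eqref{prog: primal} and \eqref{prog: dual}. The paper constructs \eqref{prog: primal} as the dual of \eqref{prog: dual}, and the objective of \eqref{prog: dual} is the Fenchel conjugate form of $\lambda_i\log$.

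To establish the KKT conditions, I would note that the primal has a strictly feasible point: take $x_{i,j}=1/(2n)$ and $d_i=1$. The primal optimum is also finite, because the term $-d_i/\tau_i$ dominates $\lambda_i\log c_i$ as $d_i\to\infty$. Together these give strong duality and the KKT conditions. Alternatively, I can directly minimize over $w_i$ in the Lagrangian, which gives $\min_w\{w c - \lambda\log w\} = \lambda - \lambda\log(\lambda/c)$, attained at $w=\lambda/c$. This recovers the stated dual objective including its constant term, which confirms the stationarity relation.
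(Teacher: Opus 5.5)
Your proposal is correct and follows the paper's proof essentially step for step: you sum \eqref{eq:sop_cs_x} to get $t^*_i = w^*_i\sum_j v_{i,j}x^*_{i,j}$, use $w^*_i\le 1/\tau_i$ for the RoS constraint, and use the stationarity relation $w^*_i c^*_i=\lambda_i$ for the budget constraint. The differences are minor: you bound $t^*_i=\lambda_i-w^*_i d^*_i\le\lambda_i$ via $w^*_i d^*_i\ge 0$, where the paper uses \eqref{eq:sop_cs_t} to get the exact identity $t^*_i=\lambda_i-d^*_i/\tau_i$ (reused later in Lemmas~\ref{lemma:beta_lower_bound} and~\ref{lemma:SOP_IOP}), and you give the Slater/strong-duality justification for the KKT conditions that the paper takes for granted.
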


The allocation structure is determined by the complementary slackness condition in Equation (\eqref{eq:sop_cs_x}) together with Definition \ref{def:modified_competitive_equilibrium}. According to Equation (\eqref{eq:sop_cs_x}), buyer $i$ can only receive a positive allocation of item $j$ (i.e., $x_{i,j}^* > 0)$ if their weighted valuation equals the market price: $w_i^* v_{i,j} = p_j^*$.

The realized allocation takes one of two forms depending on whether ties occur:
\begin{itemize}
    \item \textbf{Generic Case (Unique Buyer)}: When a unique buyer $i$ strictly maximizes the weighted valuation for item $j$, the allocation is deterministic and binary:
    \begin{align*}
        x_{i,j}^* = \begin{cases}
            1 & \text{if } w_i^*v_{i,j} = p_j^* \text{ and } w_k^*v_{k,j} < p_j^*, \forall k \neq i,\\
            0 & \text{otherwise.}
        \end{cases}
    \end{align*}
    In this case, the seller sets $y_{i,j}^* = 1$ and $y_{k,j}^* = 0$ for all $k \neq i$.
    \item \textbf{Tie Case (Multiple Maximizers)}: When multiple buyers $\mathcal{S} = \{i \in [n]: w_i^*v_{i,j} = p_j^* \}$ have identical maximal weighted valuations for item $j$, the seller acts as a coordinator by strategically choosing supplies $\{y_{i,j}^*\}_{i \in \mathcal{S}}$ satisfying $\sum_{i\in \mathcal{S}} y_{i,j}^* = 1$ to ensure market clearance across all items. By Equation (\eqref{eq:sop_cs_x}), each buyer $i \in \mathcal{S}$ demands exactly the supply allocated to him $x_{i,j}^* = y_{i,j}^*$. This coordination mechanism guarantees that supply equals demand even when RoS constraints induce discontinuous demand functions.
\end{itemize}

The prices $p^*$ derived from the extended Eisenberg-Gale program may not constitute a traditional competitive equilibrium without the seller's coordination role. In classical markets with only budget constraints, if an item's price is too low, excess demand is resolved automatically as the price rises and buyers' demands decrease continuously until supply equals demand. However, with RoS constraints, buyers' demand functions exhibit discontinuities—a buyer's demand can drop abruptly from 1 to 0 when the price crosses their RoS threshold. This creates situations where at certain prices, multiple buyers simultaneously demand the same item, yet no price adjustment can resolve the tie without eliminating all demand.

This is precisely where Definition \ref{def:modified_competitive_equilibrium} becomes essential. By allowing the seller to strategically allocate fractional supplies when ties occur, we ensure that a market-clearing equilibrium always exists. The seller leverages its role as a centralized coordinator to balance supply and demand across the entire market, resolving conflicts that would otherwise make equilibrium non-existent.

% Then we show that the price $p^*$, along with the allocation $x^*$, form a competitive equilibrium.
Given a set of prices $\{p_j\}_{j\in [m]}$, each buyer's buying decision should aim to maximize their utility, subject to both the budget and RoS constraints. Specifically, buyer $i$ solves the individual optimization program (\ref{IOP:primal}).
Given prices $\{p_j\}_{j\in [m]}$ and seller's supplies $\{y_{i,j}\}_{i\in[n], j\in[m]}$, each buyer's buying decision aims to maximize their utility subject to both the budget and RoS constraints. Specifically, buyer $i$ solves the individual optimization program (\ref{IOP:primal}), where the constraint $x_{i,j}\in [0, 1+(y_{i,j}-1)\cdot \mathbb{I}_{0<y_{i,j}<1}]$ reflects the buyer's demand flexibility: if the seller offers a fractional supply ($0<y_{i,j}<1$), the buyer can demand up to that amount; otherwise, the buyer is constrained to $[0,1]$.

We now show that if prices are set to $p=p^*$ and supplies are set to $y_{i,j} = x_{i,j}^*$ for all $i\in[n], j\in[m]$, then $x^*_{i,\cdot}$ is indeed buyer $i$'s optimal demand.

\begin{lemma}\label{lemma:SOP_IOP}
    If the prices are $p=p^*$ and the seller's supply is $y_{i,j} = x_{i,j}^*$ for all $i\in[n], j\in[m]$, then $x^*_{i,\cdot}$ is an optimal solution to Program \eqref{IOP:primal} for buyer $i$.
\end{lemma}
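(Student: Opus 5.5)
The approach is to use weak duality at the level of a single buyer. The dual variable $w^*_i$ will certify that no feasible demand of buyer $i$ beats $x^*_{i,\cdot}$. Feasibility of $x^*_{i,\cdot}$ in Program~\eqref{IOP:primal} is mostly given. Lemma~\ref{lemma:feasible} gives the budget and RoS constraints. For the box constraints, take $y_{i,j}=x^*_{i,j}$: if $0<x^*_{i,j}<1$ then $q_{i,j}=x^*_{i,j}$, and otherwise $q_{i,j}=1\ge x^*_{i,j}$. So $x^*_{i,\cdot}\in[0,q_{i,\cdot}]$.

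For optimality I would use two facts. The first is dual feasibility, $p^*_j\ge w^*_i v_{i,j}$ for every $j$, so every feasible demand $x_{i,\cdot}$ satisfies
\[
\sum_j x_{i,j}v_{i,j}\le \frac{1}{w^*_i}\sum_j p^*_j x_{i,j}.
\]
Here $w^*_i>0$ by Lemma~\ref{lemma:beta_lower_bound}. By \eqref{eq:sop_cs_x}, this holds with equality for $x^*_{i,\cdot}$. The second fact is the stationarity condition of Program~\eqref{prog: primal} in $c_i$, namely $\lambda_i/c^*_i=w^*_i$. Together with $c^*_i=\sum_j x^*_{i,j}v_{i,j}+d^*_i$, it gives
\[
t^*_i=w^*_i\sum_j x^*_{i,j}v_{i,j}=\lambda_i-w^*_i d^*_i .
\]
I then split on whether $d^*_i=0$.

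\emph{Case $d^*_i=0$.} Then $t^*_i=\lambda_i$, so the budget binds. For any feasible $x_{i,\cdot}$, the budget constraint and the inequality above give value at most $\lambda_i/w^*_i$, which is exactly the value of $x^*_{i,\cdot}$. So $x^*_{i,\cdot}$ is optimal.

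\emph{Case $d^*_i>0$.} By \eqref{eq:sop_cs_t}, $w^*_i=1/\tau_i$. For a feasible $x_{i,\cdot}$, the RoS constraint and the inequality above give
\[
\tau_i\sum_j p^*_jx_{i,j}\le\sum_j x_{i,j}v_{i,j}\le\tau_i\sum_j p^*_jx_{i,j},
\]
so both hold with equality. Hence $x_{i,\cdot}$ can only be supported on the tight items $T_i=\{j:\ p^*_j=v_{i,j}/\tau_i\}$. The remaining step is to show $\sum_{j\in T_i}x_{i,j}v_{i,j}\le\sum_{j\in T_i}x^*_{i,j}v_{i,j}$ using the caps $q_{i,j}$. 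This works on items with $x^*_{i,j}\in(0,1]$, since there $x_{i,j}\le q_{i,j}=x^*_{i,j}$.

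This last step is the main obstacle. For a tight item with $x^*_{i,j}=0$, the item went entirely to tied competitors, yet the definition gives $q_{i,j}=1$. In this case the budget is slack, since $t^*_i=\lambda_i-d^*_i/\tau_i<\lambda_i$. So buyer $i$ could apparently add such an item at RoS exactly $\tau_i$ and strictly increase value, which would break the claim.

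My first attempt would be to show that such items cannot exist when $d^*_i>0$. The idea is that reallocating a small mass of a tied item $j$ from its holder $k$ to $i$ changes the primal objective by $\delta\,(\lambda_i v_{i,j}/c^*_i-\lambda_k v_{k,j}/c^*_k)=\delta\,(w^*_iv_{i,j}-w^*_kv_{k,j})=0$. This is only first-order neutral, so it yields no contradiction, and I do not expect the attempt to succeed. The fallback is to read the supply convention so that a tied buyer offered $y_{i,j}=0$ gets cap $0$; equivalently, the seller's coordination role means $i$ is not offered item $j$. With that reading every $j\in T_i$ has $x_{i,j}\le x^*_{i,j}$, and the second case closes. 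I would state this reading explicitly in the proof.
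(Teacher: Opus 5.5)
Your route is the paper's in substance. The paper also certifies optimality through the dual \eqref{IOP:dual} of the buyer's program. For $d^*_i=0$ it takes $r_i=1/w^*_i$, $\alpha_i=0$, $\mu_{i,\cdot}=0$; this is exactly your budget-plus-dual-feasibility bound, and that case of yours is complete. For $d^*_i>0$ it takes $r_i=0$, a large $\alpha_i$, and $\mu_{i,j}=v_{i,j}$ on $\{j:x^*_{i,j}>0\}$, which is the certificate form of your RoS argument.

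The obstacle you isolate for $d^*_i>0$ is real, and the paper's proof does not get past it either. Its $\alpha^*_i=\max_{j:\tau_ip^*_j\neq v_{i,j}}\frac{v_{i,j}}{\tau_ip^*_j-v_{i,j}}$ simply skips tight items. For a tight $j$ with $x^*_{i,j}=0$, the first constraint of \eqref{IOP:dual} then evaluates to $v_{i,j}=\tau_ip^*_j>0$, so the certificate is infeasible exactly where you got stuck.

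In fact the lemma is false with the literal definition of $q_{i,j}$. Take $n=2$, $m=3$, $\tau_1=\tau_2=1$, $\lambda_1=\lambda_2=100$, $v_{1,\cdot}=(1,1,0)$, $v_{2,\cdot}=(1,0,1)$. Then $w^*=(1,1)$ and $p^*=(1,1,1)$. Giving item 1 wholly to buyer 2, and items 2 and 3 to buyers 1 and 2 respectively, is optimal for \eqref{prog: primal} with $d^*_1=99$. This is precisely the kind of ``Solution 1'' that the proof of Lemma~\ref{lemma:unique} declares valid, and Assumption~\ref{assumption:no_buyer_win_all} holds at $p^*$. Yet $q_{1,1}=1$, so buyer 1 can add item 1 at RoS exactly $1$ within budget, raising value from $1$ to $2$.

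Choosing a fractional split rescues that instance but not in general. With $\tau_1=\tau_2=1$, $\lambda_1=100$, $\lambda_2=1+\epsilon/2$, $v_{1,\cdot}=(1,0,1)$, $v_{2,\cdot}=(2,\epsilon,0)$ and small $\epsilon>0$, one gets $w^*=(1,\tfrac12)$ and item 1 is tied. Buyer 2's value is pinned at $\lambda_2/w^*_2=2+\epsilon$, which forces $x^*_{2,1}=1$ in the unique optimum; Assumption~\ref{assumption:no_buyer_win_all} again holds at $p^*$. Buyer 1 is again left with a tight item capped at $1$.

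So your reallocation attempt cannot work. Between two RoS-constrained buyers the swap is exactly neutral, and against a budget-constrained buyer it is blocked by the pinned value. Your fallback is therefore not an optional reading but a necessary correction. Cap $q_{i,j}=y_{i,j}$ for every tied buyer $i$, i.e.\ every $i$ with $w^*_iv_{i,j}=p^*_j$, which is what the paper's prose on the seller's coordination role intends.

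Under that definition your two cases give a complete proof. State it as a proof of the corrected lemma. Also note that Definition~\ref{def:modified_competitive_equilibrium} needs the same change, since Lemma~\ref{lemma:competitive equilibrium} rests on this one.
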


\begin{remark}[Role of Definition \ref{def:modified_competitive_equilibrium}]
Definition \ref{def:modified_competitive_equilibrium} is essential for ensuring the existence of competitive equilibrium in our setting. Unlike markets with only budget constraints—where fractional allocations naturally emerge through price adjustments to balance supply and demand—RoS constraints introduce fundamental discontinuities in buyer demand functions.

The issue arises because a buyer's willingness to purchase depends on the value-to-price ratio meeting their RoS threshold. When the price of an item is exactly at the RoS threshold for multiple buyers (i.e., $\frac{v_{i,j}}{p_j} = \tau_i$ for multiple $i$), each such buyer either demands the full unit or nothing—their demand jumps discontinuously. At this critical price point:
\begin{itemize}
    \item Raising the price even slightly causes all RoS-constrained buyers to drop their demand to zero (violating their RoS constraint);
    \item Lowering the price causes excess demand (more than one unit total);
    \item No price adjustment can resolve the tie through the traditional equilibrium mechanism.
\end{itemize}

Definition \ref{def:modified_competitive_equilibrium} resolves this impasse by empowering the seller to act as a market coordinator. When ties occur, the seller can strategically allocate fractional supplies $\{y_{i,j}\}$ to the tied buyers, ensuring global market clearance. Crucially, at the tie price, buyers are value-maximizers, so accepting the seller's allocated supply is optimal for them. Example \ref{example:tie_breaking_rule} illustrates this mechanism in detail.
\end{remark}

Lemma \ref{lemma:SOP_IOP} shows that if the prices are set according to $p^*$, combining with the seller's coordinated supply $y^*$, each buyer's decision is the same as $x^*_{i,\cdot}$. Now we prove $p^*, y^*$ and $x^*$ together form a competitive equilibrium.
\begin{lemma}
    \label{lemma:competitive equilibrium}
    The pair $(p^*,x^*, y^*)$ forms a competitive equilibrium of the auction market.
\end{lemma}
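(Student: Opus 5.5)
We verify the three conditions of Definition~\ref{def:modified_competitive_equilibrium} one at a time for the triple $(p^*,x^*,y^*)$, where the supply is $y^*_{i,j}=x^*_{i,j}$ for all $i\in[n]$, $j\in[m]$.

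\textbf{Buyer utility maximization.} This follows immediately from Lemma~\ref{lemma:SOP_IOP}. It says that when prices are $p^*$ and supplies are $y=x^*$, the vector $x^*_{i,\cdot}$ solves Program~\eqref{IOP:primal} for every buyer $i$.

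\textbf{Seller revenue maximization.} We use two facts established after Lemma~\ref{lemma:beta_lower_bound}. First, $p^*_j=\max_i\{w^*_iv_{i,j}\}$. Second, $p^*_j>0$, since $w^*_i\ge\underline{w}_i>0$ and $\max_i v_{i,j}>0$. The seller's problem separates across items: for each $j$, maximize $p^*_j\sum_i y_{i,j}$ subject to $\sum_i y_{i,j}\le 1$. Because $p^*_j>0$, any $y_{\cdot,j}$ with $\sum_i y_{i,j}=1$ is optimal, and the optimal value is $\sum_j p^*_j$. Complementary slackness~\eqref{eq:sop_cs_p} with $p^*_j>0$ gives $\sum_i x^*_{i,j}=1$. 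Hence $\sum_i y^*_{i,j}=1$ for each $j$, so $y^*$ attains $\sum_j p^*_j$. It is feasible because $x^*_{i,j}\ge0$ gives $y^*\ge 0$.

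\textbf{Market clearance.} This is immediate, since $\sum_i x^*_{i,j}=\sum_i y^*_{i,j}$ by definition of $y^*$.

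\textbf{Main point of care: the capacity constraint $q_{i,j}$ in Program~\eqref{IOP:primal}.} When $y^*_{i,j}\in\{0,1\}$, the buyer's cap is $q_{i,j}=1$ instead of $y^*_{i,j}$. In particular, if $y^*_{i,j}=0$ the buyer is formally allowed to demand item $j$. Lemma~\ref{lemma:SOP_IOP} already asserts that $x^*_{i,\cdot}$ is optimal under these caps, so buyer optimality needs nothing more. Still, one should check that the demand actually chosen, namely $x^*$, equals $y^*$, so that market clearance holds with the realized demands. This is true by construction. If the intended definition instead requires clearance for \emph{every} optimal demand, we would add an argument based on~\eqref{eq:sop_cs_x}. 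Buyers with $w^*_iv_{i,j}<p^*_j$ obtain a value-to-price ratio below $1/w^*_i$ on item $j$. Buying it would therefore either lower their RoS below what their optimum attains or waste budget, so such buyers do not strictly gain from it. Apart from this point, the lemma follows by assembling Lemma~\ref{lemma:SOP_IOP}, Lemma~\ref{lemma:feasible} and the complementary slackness conditions.
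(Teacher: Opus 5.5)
Your proposal is correct and is essentially the paper's proof. Buyer optimality comes directly from Lemma~\ref{lemma:SOP_IOP}, and $p^*_j>0$ together with \eqref{eq:sop_cs_p} gives $\sum_i x^*_{i,j}=1$, hence clearance; your explicit check of the seller's revenue-maximization condition just spells out a step the paper leaves implicit.

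One caveat on your ``point of care'': the case where the cap $q_{i,j}=1$ at $y^*_{i,j}=0$ actually matters is not $w^*_iv_{i,j}<p^*_j$ but a tie $w^*_iv_{i,j}=p^*_j$ for a buyer with $d^*_i>0$ (so $w^*_i=1/\tau_i$, binding RoS, slack budget). Such a buyer strictly gains by buying a sliver of item $j$ at ratio exactly $\tau_i$. The $\alpha^*_i$ constructed in the proof of Lemma~\ref{lemma:SOP_IOP} skips exactly these items, and the tie example in the proof of Lemma~\ref{lemma:unique} realizes this case, so this edge-case gap is inherited from the cited lemma (as in the paper's proof) rather than introduced by your argument.
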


According to Lemma \ref{lemma:competitive equilibrium}, we can find a competitive equilibrium of the auction market by solving Program \eqref{prog: primal} and \eqref{prog: dual}. The following result shows that any competitive equilibrium also corresponds to an optimal solution to Program \eqref{prog: primal} and \eqref{prog: dual}.

% Then we show that if there exists a price $p$ and each buyer selects his favorite bundle via IOP. If for each item, all buyers' demands cannot exceed the seller's supply, then this price and each buyer's selection is the solution of the CAP.

\begin{lemma}\label{lemma:IOP_SOP}
	% If there exists $(p, x)$ such that each buyer maximizes his own utility then $x$ is also the optimal solution of the program (\ref{prog: primal}), and $p$ is the optimal solution of the program (\ref{prog: dual})
    Let $(p^*, x^*, y^*)$ be any competitive equilibrium of the auction market. Then there exists $\{d^*_i\}_{i\in [n]}$, $\{c^*_i\}_{i\in [n]}$, and $\{w^*_i\}_{i\in [n]}$, such that $(x^*,d^*,c^*)$ and $(p^*,w^*)$ are optimal solutions to Program \eqref{prog: primal} and \eqref{prog: dual}, respectively.
\end{lemma}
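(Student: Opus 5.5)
We will verify the KKT conditions directly. Programs \eqref{prog: primal} and \eqref{prog: dual} are convex, and Slater's condition holds: take $d_i>0$ small and $x$ strictly inside the simplex constraints. Hence it suffices to exhibit primal-feasible $(x^*,d^*,c^*)$ and dual-feasible $(p^*,w^*)$ that satisfy the complementary slackness conditions \eqref{eq:sop_cs_beta}--\eqref{eq:sop_cs_t} together with the stationarity conditions. The stationarity conditions for the primal are:
\begin{itemize}
\item $\lambda_i/c_i = w_i$, from the $c_i$ variable;
\item $w_i \le 1/\tau_i$, with equality when $d_i>0$;
\item $w_i v_{i,j}\le p_j$, with equality when $x_{i,j}>0$.
\end{itemize}
Equivalently, we can show that the primal and dual objective values coincide.

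\textbf{Constructing $w^*$.} Fix the equilibrium $(p^*,x^*,y^*)$. Let $t_i=\sum_j p^*_j x^*_{i,j}$ and $V_i=\sum_j v_{i,j}x^*_{i,j}$. We analyze the individual LP \eqref{IOP:primal}, which has bounds $x_{i,j}\in[0,q_{i,j}]$. The key structural fact comes from LP duality: the optimal $x^*_{i,\cdot}$ is a ``bang-per-buck'' solution. The budget and RoS multipliers combine into a single threshold ratio $r_i$ such that buyer $i$ buys items with $v_{i,j}/p^*_j>r_i$ fully, buys none with smaller ratio, and possibly buys fractionally at equality. We set $w^*_i=\min\{1/r_i,\,1/\tau_i\}$, adjusted so that:
\begin{itemize}
\item if the budget binds ($t_i=\lambda_i$), then $w^*_i=t_i/V_i$;
\item otherwise, the RoS constraint must bind, so $w^*_i=1/\tau_i$. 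If neither constraint bound, buyer $i$ would buy everything available, which is ruled out by Assumption \ref{assumption:no_buyer_win_all} together with seller full supply.
\end{itemize}
We then define $c^*_i=\lambda_i/w^*_i$ and $d^*_i=c^*_i-V_i\ge 0$. In the budget-binding case, $d^*_i=0$. In the RoS-binding case with slack budget, $d^*_i=\lambda_i\tau_i-V_i\ge \tau_i t_i-V_i$, which makes $d^*_i\ge0$ by the budget inequality.

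\textbf{Checking the conditions.} Seller revenue maximization with positive prices gives $\sum_i y^*_{i,j}=1$, and market clearance then gives $\sum_i x^*_{i,j}=1$, so \eqref{eq:sop_cs_p} holds. Primal feasibility is immediate from the construction. Conditions \eqref{eq:sop_cs_beta} and \eqref{eq:sop_cs_t} hold by construction. The remaining conditions are dual feasibility $p^*_j\ge w^*_iv_{i,j}$ for all $i,j$, and \eqref{eq:sop_cs_x}, equality wherever $x^*_{i,j}>0$.

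\textbf{Main obstacle.} The hard part is showing that every item bought has ratio exactly $v_{i,j}/p^*_j = 1/w^*_i$, and that no item has a strictly larger ratio. If buyer $i$ saw an item with $w^*_iv_{i,j}>p^*_j$ that he was not fully buying up to the cap $q_{i,j}$, he could swap spending toward it and raise value while preserving both constraints, contradicting optimality. Such an item could also be one owned wholly by another buyer, where $q_{i,j}=1$. Conversely, if he bought an item with a strictly worse ratio, shifting away from it would either free budget or improve RoS. He could then purchase more of some other item, which requires an item not fully bought, and this is where Assumption \ref{assumption:no_buyer_win_all} enters. Or he could simply drop that item when RoS is the binding issue, contradicting the choice of $w^*_i$.

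The delicate subcase is when both constraints bind and ratios are heterogeneous. There we must argue that the average ratio $V_i/t_i=\tau_i$ forces all purchased ratios equal to $\tau_i$, or else a profitable exchange exists. We expect to handle this by a careful exchange argument on the LP \eqref{IOP:primal}, choosing $w^*_i$ as the LP-dual-derived threshold rather than the ratio $t_i/V_i$.
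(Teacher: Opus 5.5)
There is a genuine gap, at exactly the step you call the main obstacle, and it cannot be filled. The construction you settle on for $(d^*,c^*,w^*)$ coincides with the paper's: $w^*_i=t_i/V_i$ and $d^*_i=0$ when the budget binds, and $w^*_i=1/\tau_i$ and $d^*_i=\lambda_i\tau_i-V_i$ otherwise. With it, the two objective values agree by direct computation. Everything therefore rests on dual feasibility $p^*_j\ge w^*_iv_{i,j}$ together with \eqref{eq:sop_cs_x}. In words, each buyer must purchase only items of a single ratio $v_{i,j}/p^*_j=1/w^*_i$, and that ratio must be maximal over all items.

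Your exchange argument needs room to raise $x_{i,j}$ on the better item, but \eqref{IOP:primal} caps $x_{i,j}\le q_{i,j}\le 1$. A buyer may already hold a full unit (or his fractional ration) of a cheap high-ratio item, and spend the RoS slack it creates on a poor-ratio item. Dropping the poor item only lowers his value, and Assumption~\ref{assumption:no_buyer_win_all} supplies no other item to shift toward. Switching to the threshold from \eqref{IOP:dual} does not help, because $w^*_i$ is not a free choice. Condition \eqref{eq:sop_cs_x} forces $w^*_i=p^*_j/v_{i,j}$ on the whole support of $x^*_{i,\cdot}$, whereas the individual LP's threshold rule allows several distinct ratios on that support.

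Here is a counterexample. Take two items $A,B$.
\begin{itemize}
\item Buyer~1 has $\tau_1=1$, $\lambda_1=10$, $v_{1,A}=3$, $v_{1,B}=1$.
\item Buyer~2 has $\tau_2=1$, $\lambda_2=4/3$, $v_{2,A}=0$, $v_{2,B}=8$.
\item Prices are $p=(1,4)$, with $x_{1,A}=1$, $x_{1,B}=2/3$, $x_{2,A}=0$, $x_{2,B}=1/3$, and $y=x$.
\end{itemize}
This is an equilibrium satisfying the assumptions:
\begin{itemize}
\item Buyer~1's RoS constraint reads $2x_{1,A}\ge 3x_{1,B}$ and his budget is slack, so $(1,2/3)$ is optimal for him.
\item Buyer~2 is budget-bound at $x_{2,B}=1/3$.
\item The rations $q_{1,B}=2/3$ and $q_{2,B}=1/3$ do not bind, so this is even an equilibrium without rationing.
\item Markets clear at positive prices.
\item Assumption~\ref{assumption:no_buyer_win_all} holds: buying both items costs $5$, which exceeds both $\lambda_2$ and buyer~1's value $4$.
\end{itemize}
Yet \eqref{eq:sop_cs_x} would require $w^*_1=p_A/v_{1,A}=1/3$ and $w^*_1=p_B/v_{1,B}=4$ at once, so no admissible $w^*$ exists. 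Indeed, the unique optimum of \eqref{prog: dual} here has $p=(3,4/3)$.

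Your preliminary claim that one of the two constraints must bind fails similarly. Take $v_{1,\cdot}=(10,0)$, $v_{2,\cdot}=(0,10)$, $\tau_i=1$, $\lambda_i=3/2$ and $p=(1,1)$. Each buyer takes his own item with both constraints slack, Assumption~\ref{assumption:no_buyer_win_all} still holds, and your $d^*_i=\lambda_i\tau_i-V_i$ is negative.

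The paper's own proof has the same hole.
\begin{itemize}
\item In the budget-binding case, its displayed chain discards the term $-v_{i,j}\sum_{j'\ne j}p^*_{j'}x^*_{i,j'}/V_i\le 0$ with the inequality pointing the wrong way.
\item In the RoS case, its ``marginally increase $x^*_{i,j}$'' argument ignores that $x^*_{i,j}$ may already equal $1$, exactly as for item $A$ above.
\end{itemize}
Either argument needs an extra hypothesis. One option is to remove the per-item caps from \eqref{IOP:primal}, so that demand concentrates on maximum bang-per-buck items. Another is to restrict attention to equilibria in which every buyer buys only such items.
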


\begin{remark}[Role of Assumption \ref{assumption:no_buyer_win_all}]
    Assumption \ref{assumption:no_buyer_win_all} is important for the proof of Lemma \ref{lemma:IOP_SOP}. Without this assumption, the environment becomes non-competitive, and it is possible for a buyer $i$ to buy all items without violating their budget and RoS constraints when $\tau_i=1$ and $\lambda_i$ is sufficiently large.  %In this case, any price that only this specific buyer can buy but other buyers cannot afford forms a competitive equilibrium, which implies that there are multiple competitive equilibriums. Under Assumption \ref{assumption:no_buyer_win_all}, the competitive equilibrium is unique.
\end{remark}

Having characterized the competitive equilibrium, we now establish its uniqueness and Pareto efficiency of the optimal solution to Program \eqref{prog: primal} and \eqref{prog: dual}. We first show that optimal solution to Program \eqref{prog: primal} is unique. Before that, we provide the following technical lemma, which is also useful in Section \ref{sec:computation}.
% After showing the lower bound of each multiplier, we further show that the objective function of the program (\ref{prog: dual}) is strongly convex.

% 改下叙述
\begin{lemma}[Strong Convexity of the Regularizer]\label{lemma:strong_convex}
Let the regularizer function be defined as:
\begin{align*}
h(w) = - \sum_{i=1}^n \lambda_i \log (w_i),
\end{align*}
over the domain $W = \{w \in \mathbb{R}^n \mid 0 < w_i \le \frac{1}{\tau_i}, \forall i \in [n]\}$.
The function $h(w)$ is $\sigma$-strongly convex with respect to the $\ell_2$-norm. Specifically, for any $u, v \in W$:
\begin{equation}
h(v) \ge h(u) + \langle \nabla h(u), v - u \rangle + \frac{\sigma}{2} \|v - u\|_2^2,
\end{equation}
where $\sigma = \min_{i\in [n]} \lambda_i \tau_i^2$, $\langle \cdot,\cdot\rangle$ denote the inner product of two vectors, and $\|\cdot\|_2$ denote the $\ell_2$-norm.
\end{lemma}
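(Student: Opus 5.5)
The plan is to exploit the fact that $h$ is separable across coordinates. Strong convexity with respect to the $\ell_2$-norm then reduces to a one-dimensional statement about each term $-\lambda_i \log(w_i)$ on the interval $(0, 1/\tau_i]$. Concretely, we will write
\[
h(v) - h(u) - \langle \nabla h(u), v-u\rangle = \sum_{i=1}^n \lambda_i\left[-\log(v_i) + \log(u_i) + \frac{v_i-u_i}{u_i}\right].
\]
It then suffices to show that each bracketed term is at least $\frac{\lambda_i\tau_i^2}{2}(v_i-u_i)^2$, and afterwards lower bound $\lambda_i\tau_i^2$ by $\sigma=\min_i \lambda_i\tau_i^2$.

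For a single coordinate, set $f_i(w) = -\lambda_i\log w$. Its second derivative is $f_i''(w) = \lambda_i / w^2$. On the domain $0<w\le 1/\tau_i$ this gives $f_i''(w)\ge \lambda_i\tau_i^2$. We will apply Taylor's theorem with integral (or Lagrange) remainder along the segment from $u_i$ to $v_i$. Both endpoints lie in $(0,1/\tau_i]$, and the interval is convex, so every intermediate point $\xi$ also satisfies $\xi\le 1/\tau_i$. This gives
\[
f_i(v_i) = f_i(u_i) + f_i'(u_i)(v_i-u_i) + \tfrac12 f_i''(\xi)(v_i-u_i)^2 \ge f_i(u_i) + f_i'(u_i)(v_i-u_i) + \tfrac{\lambda_i\tau_i^2}{2}(v_i-u_i)^2.
\]

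Summing over $i$ and using $\lambda_i\tau_i^2\ge\sigma$ yields the claimed inequality with $\|v-u\|_2^2=\sum_i (v_i-u_i)^2$.

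The main obstacle is only making sure the domain is handled correctly. The bound on $f_i''$ requires an upper bound on $w_i$, which is supplied by the RoS-induced constraint $w_i\le 1/\tau_i$. Without that constraint the curvature of $-\log$ degenerates as $w_i\to\infty$. Positivity of $w_i$ keeps $h$ finite and differentiable, and convexity of $W$ guarantees the intermediate point stays inside it. We will also remark that the strong convexity constant is $\min_i \lambda_i\tau_i^2$ rather than a coordinate-wise weighted version, simply because the $\ell_2$-norm weights all coordinates equally.
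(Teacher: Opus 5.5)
Your proof is correct and follows the paper's argument. The paper notes that $\nabla^2 h(w)=\mathrm{diag}(\lambda_i/w_i^2)\succeq \sigma I$ on $W$ because $w_i\le 1/\tau_i$, and your coordinatewise Taylor expansion with Lagrange remainder spells out the step from a Hessian lower bound to the strong-convexity inequality, which the paper leaves implicit.
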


The uniqueness of the optimal solution to Program \eqref{prog: primal} is almost a direct implication of Lemma \ref{lemma:strong_convex}.
\begin{lemma}[Revenue Uniqueness]\label{lemma:unique}
    While there may exist multiple allocation solutions $(x^*, y^*)$ to the competitive equilibrium (corresponding to different ways the seller can coordinate fractional supplies when ties occur), all such equilibria generate the same revenue. That is, for any two competitive equilibria $(x^*, p^*, y^*)$ and $(\tilde{x}, \tilde{p}, \tilde{y})$, we have $p^* = \tilde{p}$ and $\sum_{j=1}^m p_j^* = \sum_{j=1}^m \tilde{p}_j$.
\end{lemma}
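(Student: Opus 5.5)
By Lemma~\ref{lemma:IOP_SOP}, any competitive equilibrium $(x^*, p^*, y^*)$ lifts to optimal solutions of Programs~\eqref{prog: primal} and~\eqref{prog: dual}. Concretely, it yields some $w^*$ such that $(p^*, w^*)$ is optimal for \eqref{prog: dual}. The same holds for $(\tilde{x}, \tilde{p}, \tilde{y})$ with some $\tilde{w}$. It therefore suffices to show that the optimal solution of \eqref{prog: dual} is unique in the price component $p$. Once $p^* = \tilde{p}$, the equality of $\sum_j p^*_j$ and $\sum_j \tilde{p}_j$ is immediate. Moreover, since the market clears fully (all prices are positive, so $\sum_i x^*_{i,j} = 1$ by \eqref{eq:sop_cs_p}), total revenue $\sum_i\sum_j p^*_j x^*_{i,j} = \sum_j p^*_j$ is the same for both equilibria.

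The plan is to eliminate $p$ and argue via strict convexity in $w$. As observed after Lemma~\ref{lemma:beta_lower_bound}, any optimal solution has $p_j = \max_i \{w_i v_{i,j}\}$. So \eqref{prog: dual} reduces to minimizing
\[
F(w) = \sum_{j=1}^m \max_i \{w_i v_{i,j}\} + h(w) + \text{const}
\]
over $w$ satisfying $\underline{w}_i \le w_i \le 1/\tau_i$. The lower bound comes from Lemma~\ref{lemma:beta_lower_bound}, which keeps us inside the domain $W$ where $h$ is finite. Each term $\max_i\{w_i v_{i,j}\}$ is convex, being a maximum of linear functions. By Lemma~\ref{lemma:strong_convex}, $h$ is $\sigma$-strongly convex on $W$ with $\sigma = \min_i \lambda_i\tau_i^2 > 0$. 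Hence $F$ is strongly convex on the convex feasible set.

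Now suppose $w^*$ and $\tilde{w}$ are both minimizers. Evaluating $F$ at the midpoint $(w^* + \tilde{w})/2$, which is feasible by convexity of the box, gives
\[
F\!\left(\tfrac{w^*+\tilde{w}}{2}\right) \le \tfrac12 F(w^*) + \tfrac12 F(\tilde{w}) - \tfrac{\sigma}{8}\|w^* - \tilde{w}\|_2^2 .
\]
This would contradict optimality unless $w^* = \tilde{w}$. Consequently $p^*_j = \max_i w^*_i v_{i,j} = \max_i \tilde{w}_i v_{i,j} = \tilde{p}_j$ for every $j$.

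The step needing the most care is the reduction itself. We must make sure that every optimal $(p,w)$ pair, in particular those produced by Lemma~\ref{lemma:IOP_SOP}, satisfies $p_j = \max_i w_i v_{i,j}$ exactly, and that $w$ lies in $W$ with $w_i > 0$. The first holds because if $p_j$ exceeded the maximum, decreasing it would strictly lower the objective, since $p_j$ enters linearly with coefficient $1$. The second follows from Lemma~\ref{lemma:beta_lower_bound} together with $\lambda_i > 0$. With both in place, the strong convexity argument applies directly.

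Non-uniqueness of $x^*$ and $y^*$ is consistent with this result. The primal objective is not strictly concave in $x$, and the choice of how to split ties, i.e.\ the choice of $y$, is free. None of this affects $p$.
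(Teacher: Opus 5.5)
\textbf{Where your proof matches the paper.} Your treatment of \eqref{prog: dual} follows the paper's argument: uniqueness of the dual optimum via Lemma~\ref{lemma:strong_convex}, then revenue $=\sum_j p_j$ by full clearing. It is in fact more careful. The paper calls \eqref{prog: dual} a strongly convex program even though its objective is linear in $p$. Your elimination of $p$ through $p_j=\max_i w_i v_{i,j}$ is exactly what makes the strong-convexity step valid. For equilibria that arise from optimal solutions of the two programs, your proof is complete.

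\textbf{The gap.} You inherit it from the paper, whose proof makes the same transfer implicitly. The problem is the first step: Lemma~\ref{lemma:IOP_SOP} does not hold under Definition~\ref{def:modified_competitive_equilibrium}, and the statement as written fails with it. Demand is capped at $q_{i,j}=1$. So a buyer who already holds a whole unit of an underpriced item creates no excess demand, and nothing in the definition forces $p_j\ge w_i v_{i,j}$.

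\textbf{Counterexample.} Take $\lambda_1=\lambda_2=1$, $\tau_1=\tau_2=1$, $v_{1,1}=10$, $v_{1,2}=1$, $v_{2,1}=0.01$, $v_{2,2}=1$. The unique optimum of \eqref{prog: dual} is $w=(0.1,1)$, $p=(1,1)$, with revenue $2$. Now take $p=(1/2,1)$ with $x_{1,1}=1$, $x_{1,2}=x_{2,2}=1/2$, $x_{2,1}=0$ and $y=x$. This is also a modified competitive equilibrium:
\begin{itemize}
\item Buyer~1 spends exactly his budget and is capped on both items ($q_{1,1}=1$, $q_{1,2}=1/2$).
\item Buyer~2's RoS constraint excludes item~1, since $v_{2,1}/p_1=0.02<\tau_2$, and his cap $q_{2,2}=1/2$ binds.
\item Both items clear.
\item Assumption~\ref{assumption:no_buyer_win_all} holds, since buying both items costs $3/2>1$ for either buyer.
\end{itemize}
Its revenue is $3/2$, not $2$.

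\textbf{Where the proof of Lemma~\ref{lemma:IOP_SOP} breaks.} The constructed $w_1=1/10.5$ violates $p_1\ge w_1 v_{1,1}$. The displayed chain there drops a subtracted nonnegative term in the wrong direction. Its ``marginally increase $x_{i,j}$'' steps also ignore the caps $q_{i,j}$ (and zero valuations).

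So what your argument, and the paper's, actually establishes is uniqueness of prices and revenue among equilibria generated by Programs \eqref{prog: primal} and \eqref{prog: dual}. Covering all modified equilibria would require strengthening the equilibrium definition, not a better convexity argument.
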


Similar to the standard Fisher market model, the optimal solution to Program \eqref{prog: primal} also satisfies the Pareto efficiency property.

% After showing the existence, and uniqueness of the auto-bidding equilibrium in first-price auctions, we further show some properties, which the equilibrium satisfies.
\begin{lemma}\label{lemma:Pareto_efficiency}
    Let $(x^*, d^*, c^*)$ be the optimal solution to Program \eqref{prog: primal}. Then $x^*$ is Pareto efficient, i.e., no other feasible allocation can weakly improve all buyers' utilities and strictly increase at least one buyer's utility.
\end{lemma}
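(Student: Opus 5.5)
The plan is to argue by contradiction, using the optimality of $(x^*, d^*, c^*)$ in Program \eqref{prog: primal}. The pattern is the one used for the classical Eisenberg--Gale program, adapted to handle the extra variables $d_i$.

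Suppose there is a feasible allocation $\tilde{x}$, meaning $\tilde{x}_{i,j}\ge 0$ and $\sum_i \tilde{x}_{i,j}\le 1$, such that
\[
\tilde{u}_i := \sum_j \tilde{x}_{i,j} v_{i,j} \;\ge\; u^*_i := \sum_j x^*_{i,j} v_{i,j}
\]
for every $i$, with strict inequality for some $k$. Take $\tilde{d}=d^*$ and $\tilde{c}_i=\tilde{u}_i+d^*_i$. Then $(\tilde{x},\tilde{d},\tilde{c})$ is feasible for Program \eqref{prog: primal}.

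Compare the two objective values. The terms $-d_i/\tau_i$ are identical. Since $\log$ is strictly increasing and $\lambda_i>0$, each summand $\lambda_i\log(\tilde{c}_i)$ is at least $\lambda_i\log(c^*_i)$. For buyer $k$ the inequality is strict, because $\tilde{c}_k>c^*_k$. So the objective strictly increases, contradicting the optimality of $(x^*,d^*,c^*)$.

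Two technical points need checking.

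First, $c^*_i>0$ for every $i$, so that the objective is finite and the $\log$ comparison is valid. At an optimum this holds, since otherwise the objective is $-\infty$. For instance, setting $d_i=\epsilon>0$ for all $i$ and $x=0$ already gives a finite value. Alternatively, complementary slackness \eqref{eq:sop_cs_beta} together with Lemma~\ref{lemma:beta_lower_bound} gives $w^*_i>0$, and standard KKT reasoning for the log term gives $c^*_i=\lambda_i/w^*_i>0$.

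Second, strong duality or Slater's condition is needed only if I invoke these dual facts. The direct contradiction argument does not need them.

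I expect the main obstacle to be interpretive rather than technical. The paper's utility is the value obtained when constraints are satisfied, and one might ask whether Pareto efficiency should be measured with respect to feasible allocations \emph{together with} payments that respect budget and RoS constraints. The statement as written concerns allocations only, and the argument above covers any feasible allocation, hence in particular those that come with constraint-respecting payments.
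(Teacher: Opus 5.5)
Your argument is correct and is essentially the paper's own proof: keep $d^*$ fixed, set $\tilde c_i=\tilde u_i+d^*_i$, note that $(\tilde x,d^*,\tilde c)$ is feasible for Program \eqref{prog: primal}, and observe that the objective strictly increases, contradicting optimality. Your extra check that $c^*_i>0$ (the optimum is finite because $x=0$, $d_i=\epsilon$ is a feasible point with finite value) fills in a small detail the paper leaves implicit.
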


We now present our first main result.
\begin{theorem}[Equilibrium Characterization]
    \label{thm:competitive equilibrium characterization}
    Let $(x,d,c)$ and $(p,w)$ be feasible solutions to Program \eqref{prog: primal} and \eqref{prog: dual}, respectively. They are both optimal solutions if and only if $(x,p,y)$ forms a competitive equilibrium, where $y = x$. Moreover, any competitive equilibrium is Pareto efficient, and all equilibria yield the same unique revenue and prices.
\end{theorem}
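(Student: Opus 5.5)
The theorem assembles the lemmas already stated in this section, so the plan is to split it into its three assertions and invoke the corresponding results, adding only the glue they need.

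\emph{Forward direction.} Suppose $(x,d,c)$ and $(p,w)$ are optimal for Programs \eqref{prog: primal} and \eqref{prog: dual}. Lemma~\ref{lemma:beta_lower_bound} gives $w_i>0$. Optimality of the dual forces $p_j=\max_i w_i v_{i,j}$, and together with $\max_i v_{i,j}>0$ this gives $p_j>0$. Then \eqref{eq:sop_cs_p} gives $\sum_i x_{i,j}=1$. We then apply Lemma~\ref{lemma:SOP_IOP} with supplies $y=x$: each $x_{i,\cdot}$ solves \eqref{IOP:primal}. For seller revenue maximization, $y=x$ supplies the full unit of each item at a positive price, which is optimal for the seller's linear program. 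Market clearance holds trivially since $y=x$. This is exactly Lemma~\ref{lemma:competitive equilibrium}, which we cite directly.

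\emph{Converse direction.} Suppose $(x,p,y)$ with $y=x$ is a competitive equilibrium. Lemma~\ref{lemma:IOP_SOP} produces $d,c,w$ such that $(x,d,c)$ and $(p,w)$ are optimal. One subtlety is that the theorem fixes the feasible $(x,d,c)$ and $(p,w)$ in advance, while the lemma only asserts existence of some completion. To bridge this, we note that $c$ is determined by $x$ and $d$ through the equality constraint. We also argue that any feasible completion paired with the equilibrium $(x,p)$ satisfies the KKT conditions \eqref{eq:sop_cs_beta}--\eqref{eq:sop_cs_t}:
\begin{itemize}
\item $w_i$ must equal the realized inverse RoS $\sum_j p_j x_{i,j}/\sum_j v_{i,j}x_{i,j}$, which is capped at $1/\tau_i$;
\item $d_i$ is positive only when that cap binds.
\end{itemize}
Since KKT conditions are sufficient for optimality of a convex program with zero duality gap, optimality follows. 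This matching of completions is the step I expect to be the main obstacle. If it proves awkward, I would read the statement as ``for the completions given by Lemma~\ref{lemma:IOP_SOP}'' and verify the zero duality gap explicitly. That amounts to computing both objectives and using $\sum_j p_j=\sum_i t_i$ (market clears) and $\lambda_i=t_i$ or $w_i=1/\tau_i$ (budget or RoS binds).

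\emph{Efficiency and uniqueness.} Given the equivalence, every competitive equilibrium yields an optimal primal solution, so Pareto efficiency follows from Lemma~\ref{lemma:Pareto_efficiency}. For uniqueness of prices, Lemma~\ref{lemma:strong_convex} makes the dual objective strictly convex in $w$ once the linear term $\sum_j p_j$ is minimized out, since $p_j=\max_i w_i v_{i,j}$ is convex in $w$. Hence $w^*$ is unique and so is $p^*=\max_i w^*_i v_{i,\cdot}$. Revenue equals $\sum_j p_j^*$ because the market fully clears. This is Lemma~\ref{lemma:unique}, which completes the theorem.
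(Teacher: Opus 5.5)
Your decomposition is the same as the paper's, which proves the theorem simply by combining Lemmas~\ref{lemma:competitive equilibrium}, \ref{lemma:IOP_SOP}, \ref{lemma:unique} and \ref{lemma:Pareto_efficiency}. Your uniqueness argument through the reduced objective $\sum_j\max_i w_iv_{i,j}-\sum_i\lambda_i\log w_i$ is sound, and cleaner than calling Program~\eqref{prog: dual} strongly convex in $(p,w)$.

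\textbf{The converse has a gap, and it is not the one you flagged.} Your bridge is false. With $(x,p)$ fixed, every $0<w_i\le\min\{1/\tau_i,\min_{j:v_{i,j}>0}p_j/v_{i,j}\}$ is dual feasible, and every $d_i\ge 0$ (with $c_i$ adjusted) is primal feasible. So nothing ties $w_i$ to the realized inverse RoS, and the converse can only be read existentially, as in Lemma~\ref{lemma:IOP_SOP}.

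\textbf{The real obstacle is dual feasibility.} Your zero-gap computation certifies optimality only after $p_j\ge w_iv_{i,j}$ is verified for all $i,j$, i.e.\ no item offers buyer $i$ a value-to-price ratio above $1/w_i$. You never check this, and it does not follow from Definition~\ref{def:modified_competitive_equilibrium}. The caps $x_{i,j}\le q_{i,j}\le 1$ let a buyer take a whole unit of a high-ratio item and spend the rest on lower-ratio items without creating excess demand.

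Concretely, let $v_{1,\cdot}=(5,3)$, $v_{2,\cdot}=(0,1)$, $\lambda=(3/2,10)$, $\tau=(1,1)$, $p=(1,1)$, $y_{1,1}=1$, $y_{2,1}=0$, $y_{1,2}=y_{2,2}=1/2$, and $x=y$. Then:
\begin{itemize}
\item buyer~1 fills both caps and exhausts its budget;
\item buyer~2 is capped at $1/2$ and cannot take item~1 under its RoS constraint;
\item the seller supplies fully, both items clear, and Assumption~\ref{assumption:no_buyer_win_all} holds.
\end{itemize}
Yet $w_1=3/13$ gives $w_1v_{1,1}=15/13>p_1$. The unique optimum of Program~\eqref{prog: dual} is $w=(3/10,1)$, $p=(3/2,1)$, with revenue $5/2\neq 2$. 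So neither the converse nor ``all equilibria share prices and revenue'' can be obtained this way.

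Pareto efficiency of all equilibria fails too. Take $v_{1,\cdot}=(10,0,0)$, $v_{2,\cdot}=(0,1,1)$, $\lambda=(2,1/2)$, $\tau=(1,1)$, $p=(1,1,1/2)$, with buyer~1 taking items 1 and 2 and buyer~2 taking item 3. This is an equilibrium, but moving item~2 to buyer~2 helps buyer~2 and costs buyer~1 nothing.

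\textbf{Citing Lemma~\ref{lemma:IOP_SOP} does not close the gap}, because its proof fails at exactly this point. Its chain lower-bounding $p_j-w_iv_{i,j}$ drops the nonpositive term $-v_{i,j}\sum_{j'\neq j}p_{j'}x_{i,j'}/\sum_{j'}v_{i,j'}x_{i,j'}$ in the wrong direction. Its RoS-branch argument (``marginally increase $x_{i,j}$'') breaks when $x_{i,j}$ is already at its cap.

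\textbf{The forward direction, taken from Lemma~\ref{lemma:SOP_IOP}, also has a hole.} Its dual certificate does not cover an item with $\tau_ip_j=v_{i,j}$ and $x_{i,j}=0$ for a buyer with $d_i>0$. There $y_{i,j}=0$ makes $q_{i,j}=1$. That buyer's budget is slack and its ratio on $j$ is exactly $\tau_i$, so it strictly gains by demanding the item. Hence optimal tie-splits such as ``Solution 1'' in the proof of Lemma~\ref{lemma:unique} are not equilibria with $y=x$.

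A correct statement needs either a tighter equilibrium notion (caps that ration only buyers who are indifferent at the margin) or a restriction on which optimal $x$ is used. In either case, the step you actually have to supply is the dual-feasibility check, not the matching of completions.
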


Theorem \ref{thm:competitive equilibrium characterization} can be obtained directly by combining Lemma \ref{lemma:competitive equilibrium}, Lemma \ref{lemma:IOP_SOP}, Lemma \ref{lemma:unique} and Lemma \ref{lemma:Pareto_efficiency}.

Building on the characterization above, we construct a mechanism that implements the competitive equilibrium outcome. The mechanism is formally defined as follows.

\begin{definition}[Market-Clearing Mechanism]\label{def:market_clearing_mechanism}
For buyers reporting their constraints $(\lambda, \tau)$ truthfully and public valuations $v$, let $(x^*, p^*)$ denote one competitive equilibrium computed via the extended Eisenberg-Gale program.
\begin{itemize}
    \item \textbf{Allocation Rule:} The mechanism allocates items according to the equilibrium demand.
    \begin{equation*}
        a_{i,j}(v_{\cdot, j}, \lambda,\tau) = x_{i,j}^*, \quad \forall  i \in [n], j\in[m].
    \end{equation*}
    \item \textbf{Payment Rule}: The mechanism charges the market-clearing price for allocated items:
    \begin{equation*}
        t_{i,j}(v_{\cdot, j}, \lambda, \tau) = p_j^*\cdot x_{i,j}^*, \quad \forall i\in [n], j\in[m].
    \end{equation*}
\end{itemize}
\end{definition}

Under this mechanism, buyers effectively face a set of personalized, market-clearing prices that balance global supply and demand.

Recall that the complementary slackness condition Equation \eqref{eq:sop_cs_x} implies that in the optimal solution, any buyer allocated with item $j$ (i.e., $x^*_{i,j}\ne 0$) must satisfy $w^*_iv_{i,j}=p^*_j$. The first constraint in Program \eqref{prog: dual} requires $p^*_j\ge w^*_iv_{i,j},\forall i\in[n],j\in[m]$. Consequently, we have
% According to the dual program (\ref{prog: dual}), we know that 
\begin{align}\label{eq:price_characterization}
	p^*_j = \max_{i\in[n]} w^*_i v_{i,j}, \forall j \in [m].
\end{align}

The above equation indicates that if buyer $i$ uses $w^*_i$ as their strategy ($\omega_i=w^*_i$ and $b_{i,j}=w^*_iv_{i,j}$), then the price of item $j$ is the highest bid among all buyers. This characterization reveals a direct equivalence to the first-price auctions. If we interpret the optimal dual variable $w_i^*$ as buyer $i$'s multiplicative bidding strategy, where buyer $i$ submits a bid $b_{i,j} = w_i^* v_{i,j}$, then Equation (\ref{eq:price_characterization}) states that the item is sold to the highest bidder at a price equal to their bid. 

Thus, the abstract market-clearing mechanism can be implemented via first price auctions with an uniform bidding strategy, where each buyer's strategy is a multiplier across all auctions. The market-clearing mechanism's outcome is the same as the equilibrium outcome of first price auctions with a uniform bidding strategy. We formalize this implementation as the following game:
\begin{definition}[Auto-Bidding Implementation]
    The auto-bidding game involves $n$ buyers and a central auto-bidding mechanism. The game proceeds as follows:
    \begin{itemize}
    \item Each buyer submits their budget and target RoS $(\lambda_i, \tau_i)$ truthfully to the seller; 
    \item The seller solves Program \eqref{prog: dual} to determine their multipliers $w^* = \left\{w_i^*\right\}_{i\in[n]}$ and use $w_i^* v_{i,j}, \forall i\in[n], j\in[m]$ as bids to participate in the auctions and the item is allocated according to standard first-price auction rules.
\end{itemize}
\end{definition}

We establish that the mechanism is incentive compatible. \citet{alimohammadi2023incentive} proved a related result for first-price auctions with a uniform-bidding strategy. Our Theorem \ref{thm:ic} provides an alternative proof from the market equilibrium perspective.

\begin{theorem}
    \label{thm:ic}
    Let $x^*, p^*$ be any optimal solutions to Program \eqref{prog: primal} and Program \eqref{prog: dual}, respectively. Then each buyer reporting $(\lambda_i, \tau_i)$ truthfully is an optimal strategy when other buyers report their financial constraints $(\lambda_{-i}, \tau_{-i})$ truthfully. In other words, the market-clearing mechanism defined in Definition \ref{def:market_clearing_mechanism} is incentive compatible.
\end{theorem}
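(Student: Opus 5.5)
Fix buyer $i$ with true type $(\lambda_i,\tau_i)$ and others truthful. Let $(x^*,p^*,w^*)$ be the equilibrium under truthful reports and $(\hat x,\hat p,\hat w)$ the one under a deviation $(\hat\lambda_i,\hat\tau_i)$. If the deviation violates a true constraint, the utility is $-\infty$ and there is nothing to prove. So assume $\hat x_{i,\cdot}$ and $\hat t_i=\sum_j \hat p_j\hat x_{i,j}$ satisfy $\hat t_i\le\lambda_i$ and $\sum_j\hat x_{i,j}v_{i,j}\ge\tau_i\hat t_i$. The goal is $\sum_j\hat x_{i,j}v_{i,j}\le\sum_j x^*_{i,j}v_{i,j}$.

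The plan is to build from the deviation outcome a feasible point of Program \eqref{prog: dual} or \eqref{prog: primal} for the \emph{true} profile, and then use optimality and uniqueness (Lemma~\ref{lemma:unique}). The cleanest route is a monotonicity argument on the multipliers. Under the deviation, complementary slackness \eqref{eq:sop_cs_x} gives $\hat t_i=\hat w_i\sum_j\hat x_{i,j}v_{i,j}$. Hence buyer $i$'s realized RoS is $1/\hat w_i$, and the true RoS constraint says exactly $\hat w_i\le 1/\tau_i$. The other buyers' multipliers $\hat w_k$, $k\neq i$, are pinned down by their (truthful) budget/RoS conditions given prices $\hat p_j=\max_k\hat w_kv_{k,j}$. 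Concretely, $\hat w_k$ is the largest value at most $1/\tau_k$ such that $k$'s spend at the first-price outcome does not exceed $\lambda_k$. Equivalently, by Lemma~\ref{lemma:IOP_SOP}, $\hat w_k=\min\{1/\tau_k,\lambda_k/\hat c_k\}$.

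Key step: show that buyer $i$'s value is monotone nondecreasing in his own multiplier $w_i$, with the others best-responding. Let $V_i(\omega)$ be the value buyer $i$ obtains in the first-price uniform-bidding equilibrium where $i$'s multiplier is fixed at $\omega$ and the others solve their part of \eqref{prog: dual}; also let $T_i(\omega)$ be his spend. As $\omega$ rises, $i$'s bids rise, prices weakly rise, and other buyers' multipliers weakly fall (their spend on the same items rises). So $i$ wins weakly more items: $V_i$ and $T_i$ are nondecreasing in $\omega$, with ties handled by the seller's fractional supply. The truthful equilibrium selects $w^*_i$ as the largest $\omega\le1/\tau_i$ with $T_i(\omega)\le\lambda_i$. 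This is the content of \eqref{eq:sop_cs_t} together with the budget complementary slackness, since $d^*_i>0$ forces $w^*_i=1/\tau_i$, and otherwise the budget binds via $c^*_i=\lambda_i/w^*_i$. Any deviation produces some $\hat w_i$ whose outcome is truthfully feasible, i.e. $\hat w_i\le1/\tau_i$ and $T_i(\hat w_i)\le\lambda_i$. So $\hat w_i\le w^*_i$ and $V_i(\hat w_i)\le V_i(w^*_i)$.

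To make the monotonicity rigorous I would use a fixed-$\omega$ restricted version of \eqref{prog: dual}: minimize over $(p,w_{-i})$ with $w_i=\omega$ fixed. I would compare two optimal solutions at $\omega<\omega'$ via a lattice / exchange argument. Let $S=\{k\neq i:w'_k>w_k\}$, and show that buyers in $S$ face weakly higher prices and win weakly fewer items, contradicting their first-order conditions. This comparative-statics step, together with the fractional tie-breaking (value at ties is set-valued, so one must pick the selection consistent with market clearing), is the main obstacle. If it proves messy, the fallback is a direct contradiction. Suppose $\sum_j\hat x_{i,j}v_{i,j}>c^*_i$ with true constraints satisfied. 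Then combine the deviation outcome with the truthful outcome of the others to construct a dual-feasible $(p,w)$ for the true profile with strictly smaller objective than $(p^*,w^*)$, contradicting optimality.
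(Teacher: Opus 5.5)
\paragraph{Gap: the rivals' comparative statics have the wrong sign.} Your reduction is sound and differs from the paper's. Freezing $w_i$ at $\hat w_i$ in \eqref{prog: dual} reproduces the deviation outcome, since the rivals' constraints are truthful. Complementary slackness gives $T_i(\omega)=\omega V_i(\omega)$. Once $V_i$ is nondecreasing, $w_i^*$ is the largest $\omega\le 1/\tau_i$ with $T_i(\omega)\le\lambda_i$. So one argument would cover every misreport, including the mixed ones the paper's under/over-reporting split does not explicitly treat.

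The problem is the step you rely on for that monotonicity. In a first-price auction a rival $k$ pays his own bid on the items he keeps, so raising $\omega$ does not raise his spend on those items. It only takes items from him, which lowers $c_k$, and $w_kc_k=\lambda_k$ then pushes $w_k$ \emph{up}, not down.

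For example, take $v_{i,1}=1$, $v_{i,2}=0$, $v_{k,1}=v_{k,2}=1$, and $\lambda_k=\tau_k=\tau_i=1$. With $w_i=\omega$ frozen, the remaining objective is $\max\{\omega,w_k\}+w_k-\log w_k$. It is minimized at $w_k=\tfrac12$ for $\omega<\tfrac12$ and at $w_k=\omega$ for $\omega\in[\tfrac12,1]$. Budgets $\lambda_i=0.1$ and $\lambda_i=0.6$ give the equilibria $w_i=w_k=0.55$ and $w_i=w_k=0.8$. So $S=\{k\}\neq\emptyset$. Buyer $k$ does face higher prices and win fewer units ($1.25$ instead of about $1.82$), but that is exactly what his first-order condition requires, so your exchange argument yields no contradiction.

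In general the restricted objective $\sum_j\max\{\omega v_{i,j},\max_{k\neq i}w_kv_{k,j}\}-\sum_{k\neq i}\lambda_k\log w_k$ is submodular in $(\omega,w_{-i})$. By Topkis's theorem, $w_{-i}(\omega)$ is therefore nondecreasing: the multipliers are complements. This is the direction the paper's Step 2 uses. From $\hat w_i\le w_i^*$ it infers that no rival's multiplier rises, so prices fall, rivals are weakly better off, and Lemma~\ref{lemma:Pareto_efficiency} caps $i$'s value. That route never needs monotonicity of $i$'s own value.

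\paragraph{How to repair it.} Your conclusion that $V_i$ is nondecreasing is still true, for a different reason: rivals' multipliers rise at most proportionally to $\omega$. The cleanest proof goes through the value function. Let $\Phi(\omega)$ be the optimal value of the restricted program. It is convex in $\omega$, being a partial minimization of a function jointly convex in $(\omega,w_{-i})$. By KKT/Danskin, its subdifferential at $\omega$ is exactly the set of values $\sum_j v_{i,j}x_{i,j}$ over allocations consistent with the rivals' optimality conditions. Monotonicity of a convex function's subdifferential then gives $V_i(\omega)\le V_i(\omega')$ for every selection whenever $\omega<\omega'$, ties included. In the example, $\Phi'(\omega)=2-1/\omega$ on $[\tfrac12,1]$ is exactly $i$'s share of item 1.

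With this replacement your argument is complete except in one case: $\hat w_i=w_i^*=1/\tau_i$ with $i$ having budget slack. There $\hat V_i$ and $V_i^*$ lie in the same subdifferential interval, the true constraints do not order them, and the answer depends on which optimal allocation the mechanism selects. You need a consistent tie-breaking rule or a genericity assumption; the paper's proof leaves this open too.

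\paragraph{Minor points.} KKT gives $\hat w_k=\lambda_k/\hat c_k$ exactly; your $\min$ formula holds with $\sum_jv_{k,j}\hat x_{k,j}$ in place of $\hat c_k$. Your fallback, as stated, is not yet a construction.
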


\subsection{Revenue Guarantee of the Market-Clearing Mechanism}
We now analyze the revenue of the market-clearing mechanism. We begin by defining our benchmark: the first-best revenue.
\begin{definition}[First-Best Revenue]
    Let $(x^{FB}, t^{FB})$ be the optimal solution to the following first-best program: 
    \begin{maxi}|l|[2]{x,t}{\sum_{i=1}^n t_i}{\tag{FB-PRIMAL}\label{prog:fb_primal}}{}
    \addConstraint{t_i}{\le \lambda_i}{\quad\forall i \in [n]}
    \addConstraint{t_i}{\le \frac{1}{\tau_i} \sum_{j=1}^m v_{i,j} x_{i,j}}{\quad\forall j \in [m]}
    \addConstraint{\sum_{i=1}^n x_{i,j}}{\le 1}{\quad \forall j \in [m]}
    \addConstraint{x_{i,j}}{\ge 0}{\quad\forall i \in [n], \forall j \in [m]}.
    \end{maxi}
    The first-best revenue is defined as the optimal objective, i.e., $Rev^{FB}=\sum_{i=1}^n t^{FB}.$
    We also call $x^{FB}$ the first-best allocation.
\end{definition}

Note that the mechanism induced by the first-best program may not be incentive compatible, as no IC constraints are included in the program. Instead, its feasible region is simply all the allocations that respect each buyer's budget and RoS constraints. Therefore, the first-best revenue is the maximum possible revenue achievable by a feasible allocation, and can serve as an upper bound of the revenue of any IC mechanism.

Before presenting our revenue approximation result, we first introduce some notations that will be useful for later arguments. Given any outcome $(x,t)$, define
\begin{gather*}
    \mathcal{B}(x,t)=\left\{ i\,\middle|\, \sum_{j=1}^m t_{i,j}=\lambda_i, \sum_{j=1}^mv_{i,j}x_{i,j}\ge \tau_i \sum_{j=1}^mt_{i,j}x_{i,j} \right\},\\
    \mathcal{R}(x,t)=\left\{ i\,\middle|\, \sum_{j=1}^m t_{i,j}< \lambda_i, \sum_{j=1}^mv_{i,j}x_{i,j}= \tau_i \sum_{j=1}^mt_{i,j}x_{i,j}\right\}
\end{gather*}
to be the set of budget-constrained buyers and the set of RoS-constrained buyers, respectively. For simplicity, we denote by $\mathcal{B}^*,\mathcal{R}^*$ and $\mathcal{B}^{FB},\mathcal{R}^{FB}$ the corresponding sets under the market-clearing mechanism $(x^*,t^*)$ and the first-best outcome $(x^{FB}, t^{FB})$. It is straightforward to check that under both $(x^*,t^*)$ and $(x^{FB}, t^{FB})$, each buyer is either a budget-constrained buyer or an RoS-constrained buyer, since otherwise, there exists a buyer with both non-binding budget constraints and RoS constraints, and we can always charge the buyer more to achieve a larger objective without violating their financial constraints. The revenue guarantee of the market-clearing mechanism relies on a careful analysis of Program \eqref{prog:fb_primal} and it dual program, which can be written as follows:
\begin{mini}|l|[2]{\alpha, \beta, p}{\sum_{i=1}^n \alpha_i \lambda_i + \sum_{j=1}^m p_j}{\tag{FB-DUAL}\label{prog:fb_dual}}{}
\addConstraint{\alpha_i + \tau_i \beta_i}{\ge 1}{\quad \forall i \in [n]}
\addConstraint{p_j}{\ge \beta_i v_{i, j}}{\quad \forall i \in [n], j \in [m]} 
\addConstraint{\alpha_i}{\ge 0}{\quad \forall i \in [n]}
\addConstraint{\beta_i}{\ge 0}{\quad \forall i \in [n]}
\end{mini}

Let $(\alpha^{FB}, \beta^{FB},p^{FB})$ be an optimal solution to the above dual program. The second constraint of Program \eqref{prog:fb_dual} shows that the first-best allocation can also be viewed as the result of a ``first-price auction'' and item $j$ is allocated to the buyer with the highest ``bid'' $\max_i\{\beta^{FB}_iv_{i,j}\}$, where $\beta^{FB}_i$ is a uniform coefficient across all items. However, this highest ``bid'' cannot be regarded as the price of the item, and the payment of a buyer is also not the sum of their bids on winning items. 
% Our revenue approximation result depends on the following technical lemma.
% \begin{lemma}\label{lemma:relationship}
%     For any buyer $i$, there exists an optimal solution $(\alpha^{FB}, \beta^{FB},p^{FB})$ to Program \eqref{prog:fb_dual}, where the bidding coefficient $\beta^{FB}$ is weakly smaller than that in the market clearing mechanism, i.e.,
%     \begin{gather*}
%         \beta_i^{FB} \le w_i^*, \forall i.
%     \end{gather*}
%     Specifically, for bidders in $\mathcal{R}^{FB}$, the equality holds, i.e.,
%     \begin{gather*}
%         \beta_i^{FB} = w_i^* = \frac{1}{\tau_i}, \forall i \in \mathcal{B}^{FB}.
%     \end{gather*}
% \end{lemma}

Now we present our revenue approximation result.
\begin{theorem}[Revenue Guarantee of the Market-Clearing Mechanism]\label{thm:revenue_approximation}
    Let $\text{Rev}^*$ and $\overline{\text{Rev}}$ denote the revenue of the market-clearing mechanism and the revenue-optimal mechanism. The revenue of the market-clearing mechanism is a $\frac{1}{2}$-approximation of that of the first-best outcome, i.e., $\text{Rev}^{*} \ge \frac{1}{2} \text{Rev}^{FB}$.
    An immediate consequence is that the revenue of the market-clearing mechanism is also a $\frac{1}{2}$-approximation of the optimal revenue $\text{Rev}^{*} \ge \frac{1}{2} \overline{\text{Rev}}$.
\end{theorem}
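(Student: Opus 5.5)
The plan is to compare the two revenues buyer by buyer. I will split the buyers by which financial constraint binds at the market-clearing outcome. For each group I bound that group's first-best payments by $\text{Rev}^*$, which gives a factor of $2$ overall.

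\textbf{Step 1 (dichotomy for the equilibrium).} I first show that every buyer $i$ satisfies either $t^*_i=\lambda_i$ or $w^*_i = \frac{1}{\tau_i}$.

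By the computation preceding \eqref{prog: dual}, summing \eqref{eq:sop_cs_x} over $j$ gives $t^*_i = w^*_i \sum_j v_{i,j}x^*_{i,j}$. Stationarity of the Lagrangian of \eqref{prog: primal} in $c_i$ gives $\lambda_i / c^*_i = w^*_i$, where $c^*_i = \sum_j v_{i,j}x^*_{i,j} + d^*_i$.

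If $w^*_i < \frac{1}{\tau_i}$, then \eqref{eq:sop_cs_t} forces $d^*_i=0$. Hence $t^*_i = w^*_i c^*_i = \lambda_i$, so the budget binds. Otherwise $w^*_i=\frac{1}{\tau_i}$, i.e.\ the realized RoS equals $\tau_i$.

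Let $\mathcal{B}^*$ be the buyers of the first kind and $\mathcal{R}^*$ the remaining buyers, all of which have $w^*_i=\frac1{\tau_i}$. These sets are disjoint and cover $[n]$. Recall from the discussion after Lemma~\ref{lemma:beta_lower_bound} that $p^*_j>0$ and $\sum_i x^*_{i,j}=1$. Therefore $\text{Rev}^* = \sum_i t^*_i = \sum_j p^*_j$.

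\textbf{Step 2 (budget-constrained buyers).} For $i\in\mathcal{B}^*$, the first constraint of \eqref{prog:fb_primal} gives $t^{FB}_i \le \lambda_i = t^*_i$. Summing over this group,
\begin{equation*}
\sum_{i\in\mathcal{B}^*} t^{FB}_i \le \sum_{i\in\mathcal{B}^*} t^*_i \le \text{Rev}^*.
\end{equation*}

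\textbf{Step 3 (RoS-constrained buyers; the key step).} For $i\in\mathcal{R}^*$, I use the RoS constraint of \eqref{prog:fb_primal} together with $w^*_i = \frac{1}{\tau_i}$ and the dual feasibility $p^*_j \ge w^*_i v_{i,j}$:
\begin{equation*}
t^{FB}_i \le \frac{1}{\tau_i}\sum_{j} v_{i,j}x^{FB}_{i,j} = \sum_j w^*_i v_{i,j} x^{FB}_{i,j} \le \sum_j p^*_j x^{FB}_{i,j}.
\end{equation*}
Summing over $i\in\mathcal{R}^*$ and using the supply constraint $\sum_i x^{FB}_{i,j}\le 1$ gives
\begin{equation*}
\sum_{i\in\mathcal{R}^*} t^{FB}_i \le \sum_j p^*_j = \text{Rev}^*.
\end{equation*}
The point here is that equilibrium prices are high enough to dominate, item by item, what an RoS-binding buyer could be charged under any allocation. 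This is where the structure $p^*_j=\max_i w^*_iv_{i,j}$ from \eqref{eq:price_characterization} is essential.

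\textbf{Step 4 (conclusion and the consequence).} Adding Steps 2 and 3 gives $\text{Rev}^{FB} \le 2\,\text{Rev}^*$.

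For the second statement, any feasible mechanism, truthful or not, yields under truthful reports an allocation and payments satisfying every buyer's budget and RoS constraints. Such an outcome is feasible for \eqref{prog:fb_primal}, so $\overline{\text{Rev}}\le \text{Rev}^{FB}$, and the second inequality follows.

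For tightness, I would exhibit a two-buyer instance in which the bounds of Steps 2 and 3 are both nearly attained:
\begin{itemize}
\item a budget-constrained buyer with a tiny budget who nonetheless wins expensive items in equilibrium;
\item an RoS buyer with $\tau=1$ whom the first best extracts fully.
\end{itemize}
Checking the exact parameters is the main remaining obstacle. I expect a small $\epsilon$-family to drive the ratio to $\frac12$.
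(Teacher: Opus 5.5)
Your proof is correct, and it is a streamlined version of the paper's argument rather than a different route. Both proofs split $\text{Rev}^{FB}$ according to each buyer's type at the market-clearing outcome. Both bound budget-binding buyers by $\lambda_i$ and dominate RoS-binding buyers item by item via $v_{i,j}/\tau_i=w^*_iv_{i,j}\le p^*_j$. Both effectively reach $\text{Rev}^{FB}\le \text{Rev}^*+\sum_{i\in\mathcal{B}^*}\lambda_i$ and finish with $\sum_{i\in\mathcal{B}^*}\lambda_i\le\text{Rev}^*$.

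The paper uses more machinery to get there:
\begin{itemize}
\item It first analyzes Program \eqref{prog:fb_dual}, showing $0<\beta^{FB}_i\le 1/\tau_i$ and, through a reallocation argument, $\mathcal{B}^*\subseteq\mathcal{B}^{FB}$, so that it can write $t^{FB}_i=\lambda_i$ on $\mathcal{B}^*$.
\item It then partitions items into those won by $\mathcal{R}^*$ in equilibrium, each contributing at most $p^*_j$, and those shifted from $\mathcal{B}^*$ to $\mathcal{R}^*$, bounded in total by $\sum_{i\in\mathcal{B}^*}\lambda_i$.
\end{itemize}
You observe that only the trivial inequality $t^{FB}_i\le\lambda_i$ is needed. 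You also see that summing the domination over all items against the supply constraint already gives $\sum_{i\in\mathcal{R}^*}t^{FB}_i\le\sum_j p^*_j=\text{Rev}^*$. This removes the first-best dual, the set inclusion (the most delicate step in the paper's argument), and the item partition. It also handles fractional or tied allocations without case analysis. Deriving the dichotomy ($t^*_i=\lambda_i$ or $w^*_i=1/\tau_i$) directly from the KKT conditions gives you exactly the two facts the bound uses.

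What the paper's item split buys is mainly interpretive. It isolates the loss as items that budget-constrained buyers win in equilibrium but that RoS buyers would pay for in the first best, which is what Example~\ref{ex:revenue_tight} exploits. Tightness is not part of the theorem, so leaving that sketch open is not a gap. That example (two unit-budget buyers with $\tau=1$ and values $(1,1/\epsilon)$ and $(0,1-\epsilon)$) is the instance you describe.
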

In the special case where all buyers budget-constrained buyers, the mechanism extracts the full budget from every buyer, which clearly maximizes the seller's revenue. Conversely, when all buyers are RoS-constrained buyers, the budget constraints become useless and the mechanism's outcome is the same as the first-best revenue, which is also known to be revenue-optimal when all buyers have only RoS constraints \cite{balseiro2021landscape}. In the general case, our result shows that the seller's revenue is at least $\frac{1}{2}$ of the optimal revenue. Furthermore, we provide a concrete instance (see Example \ref{ex:revenue_tight}) demonstrating that this approximation bound is tight.

% The idea behind the result is to use $\text{Rev}^{FB}$ as an intermediary. Since $\text{Rev}^{FB}$ is an upper bound of any IC mechanism, it suffices to show that $\text{Rev}^{*}\ge \frac{1}{2}\text{Rev}^{FB}$.

The intuition for the approximation ratio stems from comparing the allocation rule implied by the KKT conditions of the extended Eisenberg-Gale program and the first-best revenue program. Both the market-clearing mechanism and the first-best (FB) program allocate items according to a ``weighted valuation'' rule:
\begin{itemize}
    \item The market-clearing mechanism: Allocating item $j$ to the buyer with the maximized $w_i^* v_{i,j}, \forall i\in [n]$, where $w^* = (w_i^*)_{i\in[n]}$ are optimal solutions of Program \eqref{prog: dual}.
    \item The first-best program: Allocating item $j$ to the buyer with maximized $\beta_i^{FB}v_{i,j}, \forall i\in[n]$, where $\beta_i^{FB}$ is the optimal dual variable of the first-best revenue program.
\end{itemize}

Crucially, for RoS-constrained buyers, the weights are identical in both settings $(w_i^* = \beta_i^{FB} = \frac{1}{\tau_i})$. However, for budget-constrained buyers, $w_i^*$ is typically larger than the first-best weights $\beta_i^{FB}$. This high weight allows budget-constrained buyers to obtain more items while not paying more.

The revenue gap arises because the market-clearing mechanism collects only the budget from budget-constrained buyers with more items, while under the first-best revenue, some of their items might be reallocated to RoS-constrained buyers for higher revenue. However, because the budget-constrained buyers won these items in the market-clearing mechanism, their weighted valuations were higher than the weighted valuations of the RoS-constrained buyers. This dominance implies that the additional revenue from RoS-constrained buyers in the first-best revenue is upper-bounded by the weighted value of the budget-constrained buyers under the market-clearing mechanism, which corresponds exactly to the total budgets of those budget-constrained buyers. Thus, the total first-best revenue cannot exceed the sum of the extended EG program's revenue plus the budget-constrained buyers' total budgets, yielding a factor of $\frac{1}{2}$.

We provide an example in the appendix (Example \ref{ex:revenue_tight}) to show that the bound is tight.

\section{Online Implementation of the Market-Clearing Mechanism}\label{sec:computation}

The competitive equilibrium characterized in Section \ref{sec:theory} assumes the seller can solve the global convex Programs \eqref{prog: primal} and \eqref{prog: dual} with complete knowledge of all $m$ auctions. However, real-world advertising platforms face a fundamentally different challenge: auctions arrive sequentially, and future valuations are unknown at decision time. 

To address this challenge, we propose an online learning algorithm that computes the optimal bidding multipliers on the fly. The key insight is that the dual program \eqref{prog: dual} has a separable structure: the objective decomposes into per-auction revenue plus a strongly-convex regularizer (Lemma \ref{lemma:strong_convex}). This structure enables us to apply Regularized Dual Averaging (RDA)~\cite{xiao2010dual}, an online convex optimization technique, to incrementally learn $w^*$ using only historical data. We demonstrate that this approach achieves sub-linear regret relative to the optimal static solution. Furthermore, the algorithm is fully decentralized: each buyer can independently optimize their bidding strategy using only their own auction history, yet the collective system provably converges to the market-clearing equilibrium.

\subsection{Online Problem Setup}

We consider a sequence of $m$ auctions. To facilitate asymptotic analysis, we assume the total budget of each buyer $i$ scales linearly with the auction size $m$. This assumption is commonly used in the online learning literature~\cite{agrawal2014dynamic,li2022online,gao2021online}. Specifically, let $ m\rho_i = \lambda_i$, where $\rho_i \ge \underline{\rho} > 0$ represents the average per-auction budget. The lower bound $\underline{\rho}$ ensures that budgets are not vanishingly small relative to the number of items. Recall that the dual program \eqref{prog: dual} minimizes the aggregate price minus a logarithmic barrier term. In the online setting, we can express the global objective function as minimizing the cumulative revenue plus a regularizer:

\begin{gather}
    \mathcal{L}(\omega) = \sum_{j=1}^m \hat{p}_j(\omega) - \sum_{j=1}^m \sum_{i=1}^n \rho_i \log(\omega_i),
\end{gather}
where $\hat{p}_j(\omega) = \max_{i \in [n]} \{ \omega_i v_{i,j} \}$ represents the seller's revenue in auction $j$ given bidding multipliers $\omega$. The term $-\rho_i \log(\omega_i)$ acts as a strictly convex regularizer that penalizes multipliers approaching zero, which would correspond to an infinite objective value.

\subsection{The Online Algorithm}

The intuition behind our approach is to apply the RDA framework to minimize $\mathcal{L}(\omega)$. In each round $j$, the algorithm maintains a dual average vector $\bar{g}^j$, which aggregates the function $\hat{p}_j(\omega)$ observed so far.

Let $g^j(\omega^j) \in \mathbb{R}^n$ denote the sub-gradient of function $\hat{p}_j(\cdot)$ at auction $j$. For each buyer $i$, this component is simply the value of the winning item (or zero if lost):\begin{gather*}g_{i}^j(\omega^j) = v_{i,j} x_{i, j}(\omega^j),\end{gather*}where $x_{i,j}(\omega^j)$ indicates whether buyer $i$ won item $j$. We define the cumulative dual average $\bar{g}_i^j$ as:

\begin{gather}\label{eq:dual_average}
\bar{g}_i^j = \frac{j-1}{j} \bar{g}_i^{j-1} + \frac{1}{j} g_i^j(\omega^j).
\end{gather}

Economically, $\bar{g}_i^j$ represents buyer $i$'s average realized value per auction up to auction $j$. Based on this history, the algorithm updates the bidding strategy $\omega_i$ to minimize the linearized objective function combined with the regularizer. The specific update rule and the complete procedure are detailed in Algorithm \ref{alg:RDA}. We define the feasible strategy interval for buyer $i$ as $W_i = [\underline{w}_i, 1/\tau_i]$, where $\underline{w}_i = \min \left\{ \frac{\rho_i}{\bar{v}}, \frac{1}{\tau_i}\right\}$ is a lower bound of $w_i^*$, as shown in Lemma \ref{lemma:beta_lower_bound}.

\begin{algorithm}[!ht]
\SetAlgoNoLine\KwIn{Per-auction budget $\rho_i$, target RoS $\tau_i$, and feasible set $W_i$ for each buyer $i \in [n]$.}\KwOut{Allocation $x$ and strategy sequence $\omega$.}\textbf{Initialize:} $\omega_i^1 \gets \frac{1}{\tau_i}, \quad \bar{g}_i^0 \gets 0, \quad \forall i \in [n]$\;
\For{each auction $j = 1, \dots, m$}{
    \textbf{1. Allocation (Auction Execution):}\\
    The platform computes the allocation $x_{\cdot, j}$ based on current bids $b_{i,j} = \omega_i^j v_{i,j}$:
    \begin{algomathdisplay}
        x_{i,j} = \begin{cases}
            1 & \text{if } i \in \argmax_{k \in [n]} \{ \omega_k^j v_{k, j} \} \\
            0 & \text{otherwise}
        \end{cases}
    \end{algomathdisplay}
    
    \textbf{2. Update Dual Average (History Aggregation):}\\
    Each buyer updates their average realized value:
    \begin{algomathdisplay}
        \bar{g}_i^j = \frac{j-1}{j} \bar{g}_i^{j-1} + \frac{1}{j} (v_{i,j} x_{i,j})
    \end{algomathdisplay}
    
    \textbf{3. Strategy Update (Primal-Dual Step):}\\
    Each buyer computes the next multiplier $\omega_i^{j+1}$ by solving:
    \begin{algomathdisplay}
        \omega_i^{j+1} = \argmin_{\omega \in W_i} \left\{ \bar{g}_i^j \cdot \omega - \rho_i \log(\omega) \right\}
    \end{algomathdisplay}
    \text{Closed-form solution (Projected Newton Step):}
    \begin{algomathdisplay}
        \omega_i^{j+1} = \text{Proj}_{\left[\underline{w}_i, \frac{1}{\tau_i}\right]} \left( \frac{\rho_i}{\bar{g}_i^j} \right) = \begin{cases}
            \underline{w}_i & \text{if } \frac{\rho_i}{\bar{g}_i^j} \le \underline{w}_i \\
            \frac{1}{\tau_i} & \text{if } \frac{\rho_i}{\bar{g}_i^j} \ge \frac{1}{\tau_i} \\
            \frac{\rho_i}{\bar{g}_i^j} & \text{otherwise}
        \end{cases}
    \end{algomathdisplay}
}
\caption{Decentralized Online Market-Clearing Algorithm}
\label{alg:RDA}
\end{algorithm}

A key feature of Algorithm \ref{alg:RDA} is its decomposability. The update rule for buyer $i$ (Step 3) depends exclusively on their own parameters $(\rho_i, \tau_i)$ and their own private auction history ($\bar{g}_i^j$). It does not require knowledge of other buyers' budgets, valuations, or current multipliers. This property implies that the market-clearing mechanism can be implemented in a fully decentralized manner: the platform simply runs a standard First-Price Auction, and each buyer (or their auto-bidder) independently runs Algorithm \ref{alg:RDA} to adjust their bids. Thus, Algorithm \ref{alg:RDA} preserves buyers' privacy. Buyers do not need to reveal their budgets, RoS constraints, or bidding strategies to others.

Despite this complete decentralization, Theorem \ref{thm:alg} shows that the collective behavior of all buyers running Algorithm \ref{alg:RDA} independently converges to the centralized competitive equilibrium at sub-linear rate.

% In the following analysis, we first show that this algorithm achieves $O(\log(m))$ regret for the objective function in the program (\ref{prog: dual_online})
\subsection{Convergence Analysis of Online Market-Clearing Mechanism}

To analyze the convergence properties of Algorithm \ref{alg:RDA}, we require an additional assumption about the valuation distribution. 
\begin{assumption}[Non-Atomic Valuations for First-Price Auction Implementation]\label{assumption:continuous_valuation}
    For the first-price auction implementation, we assume that for any item $j \in [m]$, the valuation $v_{i,j}$ of each buyer $i$ is drawn independently from a continuous (non-atomic) distribution.
\end{assumption}

\begin{remark}[Role of Assumption \ref{assumption:continuous_valuation}]
This assumption serves a different purpose than Definition \ref{def:modified_competitive_equilibrium}.
Assumption \ref{assumption:continuous_valuation} addresses ties in the \emph{practical first-price auction implementation}. Under this assumption, for any two distinct buyers $i, k \in [n]$, the probability that their weighted bids are identical (i.e., $w_i^*v_{i,j} = w_k^*v_{k,j}$) is zero. Consequently, ties in first-price auctions occur with probability zero, eliminating the need for explicit tie-breaking rules in the implementation.

Importantly, Assumption \ref{assumption:continuous_valuation} is introduced solely to simplify the practical auction implementation by precluding ties.
\end{remark}

This assumption is standard in the auction literature~\citep{myerson1981optimal,balseiro2021landscape,balseiro2022optimal}. In the context of online advertising, valuations are typically products of cost per action, predicted click-through rates (pCTR), and conversion rates (pCVR). Since these underlying probabilities vary continuously across heterogeneous users and ad creatives, the probability that two distinct buyers have identical valuations for one item is zero.

With this assumption in place, we now proceed to analyze the convergence guarantees of Algorithm \ref{alg:RDA}.
Let $w^* = (w_1^*, \dots, w_n^*)$ denote the optimal strategy profile derived from the offline dual program \eqref{prog: dual}. Since the platform cannot compute $w^*$ in advance, our goal is to design an online algorithm that approaches this benchmark. To quantify the performance of our algorithm, we define the auxiliary regret as the cumulative difference between the objective value achieved by the online strategy sequence and that of the fixed optimal strategy $w^*$. This metric serves as a key intermediate step for bounding both the seller's revenue regret and the buyers' utility regret.

\begin{definition}[Auxiliary Regret]\label{def:auxiliary_regret}
    The auxiliary regret up to the $j$-th auction is defined as follows:
    \begin{align*}
    R^j_{obj} = \sum_{k=1}^j \left[\hat{p}_{k}(\omega^{k}) - \sum_{i=1}^n \rho_i \log(\omega_i^{k}) \right] - \sum_{k=1}^j \left[\hat{p}_{k}(w^*) - \sum_{i=1}^n \rho_i \log(w^*_i) \right],
\end{align*}
where $\omega^k = (\omega_1^k, \cdots, \omega_n^k)$ denotes the strategy vector learned by the algorithm for the $k$-th auction and $\hat{p}_k(\omega) = \max_{i \in[n]} \left\{\omega_i v_{i,k} \right\}$ as the highest bid in auction $k$ when buyers use strategy $\omega$.
\end{definition}
% We also define each buyer's regret as follows.
% \begin{definition}[Buyer Regret]
%     The regret of buyer $i$ up to the $j$-th auction is defined as follows:
%     \begin{align*}
%         R^j_{u} = \left|u_i^j - u_i^* \right|=\left|\sum_{j' = 1}^j v_{i,j'} x_{i, j'} - \sum_{j' = 1}^j v_{i, j'} x_{i, j'}^*\right|,
%     \end{align*}
%     where $u_i^j$ and $u_i^*$ denote buyer $i$'s cumulative utility under allocation $x$ and $x^*$ up to the $j$-th auction.
% \end{definition}

Using the same technical methods in \citet{xiao2010dual}, we first prove the bound of the auxiliary regret.
\begin{theorem}[Auxiliary Regret Bound]\label{thm:auxiliary_regret}
    The auxiliary regret up to the $m$-th auction is bounded by $O\left(\log(m)\right)$, i.e., 
    \begin{equation*}
        R_{obj}^m \le O \left(\log(m) \right).
    \end{equation*}
\end{theorem}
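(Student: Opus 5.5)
We follow the standard RDA analysis of \citet{xiao2010dual} for strongly convex regularizers, specialized to our separable setting. Write $f_k(\omega)=\hat{p}_k(\omega)$, which is convex and piecewise linear in $\omega$ with subgradient $g^k(\omega^k)$, $g^k_i=v_{i,k}x_{i,k}(\omega^k)$. Let $\Psi(\omega)=-\sum_i\rho_i\log(\omega_i)$. For $\omega\in W=\prod_i W_i$, Lemma \ref{lemma:strong_convex} (applied with $\rho_i$ in place of $\lambda_i$) gives that $\Psi$ is $\sigma$-strongly convex on $W$ with $\sigma=\min_i\rho_i\tau_i^2\ge\underline{\rho}\min_i\tau_i^2>0$, and $\sigma$ does not depend on $m$. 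The comparator $w^*$ lies in $W$ by Lemma \ref{lemma:beta_lower_bound}, since $\underline{w}_i=\min\{\lambda_i/(m\bar v),1/\tau_i\}=\min\{\rho_i/\bar v,1/\tau_i\}$. Algorithm \ref{alg:RDA} is exactly the RDA update $\omega^{j+1}=\argmin_{\omega\in W}\{\langle \bar g^j,\omega\rangle+\Psi(\omega)\}$, equivalently $\argmin_{\omega\in W}\{\langle\sum_{k\le j}g^k,\omega\rangle+j\Psi(\omega)\}$. The regularizer is applied with coefficient $j$ because the objective carries $\Psi$ once per round, so no auxiliary proximal term $\beta_j h$ is needed.

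First, by convexity, $f_k(\omega^k)-f_k(w^*)\le\langle g^k,\omega^k-w^*\rangle$. Hence
\[
R^m_{obj}\le \sum_{k=1}^m\bigl[\langle g^k,\omega^k\rangle+\Psi(\omega^k)\bigr]-\min_{\omega\in W}\Bigl\{\sum_{k=1}^m\langle g^k,\omega\rangle+m\Psi(\omega)\Bigr\}.
\]
Second, we bound the right-hand side with a "follow-the-regularized-leader" potential argument. Define $V_j(s)=\max_{\omega\in W}\{\langle -s,\omega\rangle-j\Psi(\omega)\}$. Because $j\Psi$ is $j\sigma$-strongly convex, $V_j$ is differentiable with $\nabla V_j(s)=-\omega$ at the maximizer, and its gradient is $1/(j\sigma)$-Lipschitz. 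The standard telescoping of $V_j(s_j)$ with $s_j=\sum_{k\le j}g^k$, using $V_{j}\le V_{j-1}-\min_W\Psi$ type monotonicity, then yields
\[
R^m_{obj}\le \Psi(\omega^1)-\min_W\Psi+\sum_{j=1}^{m}\frac{\|g^j\|_2^2}{2\sigma (j-1)}.
\]
The first round has to be handled separately, since $\omega^1=1/\tau$ is chosen without data. Its contribution is an $O(1)$ term bounded by $\bar v\max_i 1/\tau_i$ plus the range of $\Psi$ on $W$. That range is $O(1)$ because $\underline{w}_i\ge\min\{\underline\rho/\bar v,1/\tau_i\}$ is bounded away from $0$ independently of $m$.

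Third, for every $j$ we have $\|g^j\|_2^2\le\bar v^2$, because exactly one buyer wins under Assumption \ref{assumption:continuous_valuation}, and even with ties the allocations sum to at most one. Summing then gives $\sum_j \bar v^2/(2\sigma j)=O(\log m)$, which proves the theorem with constant $\bar v^2/(2\sigma)$ plus lower-order $O(1)$ terms.

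The main obstacle will be getting the telescoping step right when the regularizer weight grows with $j$ and there is no separate proximal term. I will follow Xiao's Lemma 10 with $\beta_j=0$ and strong-convexity parameter $j\sigma$. The delicate point is that $V_j$ and $V_{j-1}$ differ by the term $\Psi$, which can be negative. I plan to absorb this by shifting $\Psi$ by the constant $-\min_W\Psi$, which is possible since $W$ is compact and bounded away from $0$. This makes $\Psi\ge0$ and gives $V_j\le V_{j-1}$. The shift changes both sides of $R^m_{obj}$ identically, so the regret is unaffected.
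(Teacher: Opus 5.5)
Your overall route is the paper's. Your first display is Lemma~\ref{lemma:regret_decomp}, and your $V_j(s)$ equals $-V^j(s)-\langle s,\omega^1\rangle$ in the paper's notation. Both arguments then finish with smoothness of this potential, $\|g^j\|_2^2\le\bar v^2$, and a harmonic sum. The gap is in the step you single out as the main obstacle.

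The monotonicity $V_j\le V_{j-1}-\min_W\Psi$ (equivalently $V_j\le V_{j-1}$ after your shift) is too weak to close the telescoping. Used at $s=s_j$ together with smoothness, it only gives
\[
\langle g^j,\omega^j\rangle+\min_W\Psi\le V_{j-1}(s_{j-1})-V_j(s_j)+\frac{\|g^j\|_2^2}{2\sigma(j-1)}.
\]
After summing, your bound on $R^m_{obj}$ therefore still contains the uncontrolled term $\sum_{j=1}^m\bigl(\Psi(\omega^j)-\min_W\Psi\bigr)$. This sum is generally linear in $m$: for a budget-constrained buyer the iterates settle near $w_i^*<1/\tau_i$, so the summands stay bounded away from zero.

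What you need is the sharper inequality obtained by evaluating the objective of $V_{j-1}(s_j)$ at the maximizer $\omega^{j+1}$ of $V_j(s_j)$:
\[
V_j(s_j)+\Psi(\omega^{j+1})\le V_{j-1}(s_j).
\]
This is exactly the first step of Lemma~\ref{lemma:recursive_bound}. Combine it with $V_{j-1}(s_{j-1}+g^j)\le V_{j-1}(s_{j-1})-\langle g^j,\omega^j\rangle+\|g^j\|_2^2/(2\sigma(j-1))$. Then round $j$ pays for $\Psi(\omega^{j+1})$, and after re-indexing $\sum_j\Psi(\omega^j)$ the only leftover is $\Psi(\omega^1)-\Psi(\omega^{m+1})$.

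No sign condition on $\Psi$ enters anywhere. Your shift by $-\min_W\Psi$ is a no-op, because the regret, this boundary term, and the inequality above are all invariant under it. The boundary term is in fact $\le 0$, not just $O(1)$. $\Psi$ is coordinatewise decreasing and $\omega^1=1/\tau$ is the coordinatewise maximum of $W$, so $\omega^1$ minimizes $\Psi$ over $W$. This is how the paper disposes of it.

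With that repair your proof coincides with the paper's, and in two small respects it is more careful. First, your explicit first-round term $\langle g^1,\omega^1-\omega^2\rangle\le\bar v\max_i 1/\tau_i$ is correct. The paper's $j=1$ step instead claims $\langle g^1,\omega^2-\omega^1\rangle\ge 0$, even though $g^1\ge 0$ and $\omega^2\le\omega^1$ make it $\le 0$; this is harmless, since it only costs an additive $O(1)$. Second, your denominator $j-1$, which comes from the smoothness of $V_{j-1}$, is the right constant where the paper writes $j$.
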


After bounding the auxiliary regret, we further bound each buyer's strategy $\omega_i^{m+1}$ after $m$ auctions.

\begin{lemma}\label{lemma:strategy_comparison}
    For each buyer $i$, comparing to the optimal uniform-bidding strategy $\omega_i^*$, the output of Algorithm \ref{alg:RDA} satisfies
    \begin{align*}
        \left|\omega^{m+1}_i - w_i^*\right| \le O\left(\sqrt{\frac{\log(m)}{m}} \right). 
    \end{align*}
\end{lemma}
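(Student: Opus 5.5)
The plan is to combine strong convexity of the regularized objective with the $O(\log m)$ auxiliary regret bound (Theorem~\ref{thm:auxiliary_regret}). Define the per-round averaged objective $F^m(\omega) = \frac{1}{m}\sum_{k=1}^m \hat{p}_k(\omega) - \sum_{i=1}^n \rho_i\log(\omega_i)$. This is exactly $\frac1m$ times the objective of Program~\eqref{prog: dual} with $\lambda_i = m\rho_i$, after dropping constants. Hence $w^*$ minimizes $F^m$ over $W=\prod_i W_i$; Lemma~\ref{lemma:beta_lower_bound} guarantees that $w^*\in W$.

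\textbf{Step 1: strong convexity and the optimality gap.} By Lemma~\ref{lemma:strong_convex} applied with $\lambda_i$ replaced by $\rho_i$, $F^m$ is $\sigma'$-strongly convex with $\sigma'=\min_i\rho_i\tau_i^2\ge \underline{\rho}$, since $\tau_i\ge 1$. The optimality condition at $w^*$ gives, for every $\omega\in W$,
\begin{equation*}
F^m(\omega)-F^m(w^*)\ge \frac{\sigma'}{2}\|\omega-w^*\|_2^2 .
\end{equation*}
It therefore suffices to show $F^m(\omega^{m+1})-F^m(w^*)\le O(\log m/m)$. Then $|\omega^{m+1}_i-w_i^*|\le \|\omega^{m+1}-w^*\|_2\le O(\sqrt{\log m/m})$.

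\textbf{Step 2: relate $\omega^{m+1}$ to the RDA minimizer.} The key observation is that $\omega^{m+1}=\argmin_{\omega\in W}\{\langle \bar g^m,\omega\rangle-\sum_i\rho_i\log\omega_i\}$, where $\bar g^m=\frac1m\sum_k g^k(\omega^k)$. Each $\hat p_k$ is convex, and $g^k(\omega^k)$ is a subgradient at $\omega^k$, so
\begin{equation*}
\hat p_k(\omega)\ge \hat p_k(\omega^k)+\langle g^k,\omega-\omega^k\rangle .
\end{equation*}
Averaging over $k$ yields a linear lower model $\ell^m(\omega)=\frac1m\sum_k[\hat p_k(\omega^k)-\langle g^k,\omega^k\rangle]+\langle\bar g^m,\omega\rangle\le \frac1m\sum_k\hat p_k(\omega)$. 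The function $\ell^m(\omega)-\sum_i\rho_i\log\omega_i$ is $\sigma'$-strongly convex and minimized at $\omega^{m+1}$. Following the standard RDA analysis of \citet{xiao2010dual}, I would bound
\begin{equation*}
F^m(w^*)\ge \ell^m(w^*)-\sum_i\rho_i\log w^*_i\ge \ell^m(\omega^{m+1})-\sum_i\rho_i\log\omega^{m+1}_i+\frac{\sigma'}{2}\|w^*-\omega^{m+1}\|^2 .
\end{equation*}

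\textbf{Step 3: compare $\ell^m$ with the iterates' average loss.} The quantity $\ell^m(\omega^{m+1})-\sum_i\rho_i\log\omega_i^{m+1}$ must be compared with $\frac1m\sum_k[\hat p_k(\omega^k)-\sum_i\rho_i\log\omega_i^k]$. The latter exceeds $F^m(w^*)$ by exactly $R^m_{obj}/m$, which is nonnegative up to $O(\log m/m)$ by Theorem~\ref{thm:auxiliary_regret} and optimality. The RDA proof in \citet{xiao2010dual} shows that the average of the iterates' losses is at most the minimum of the lower model plus $\frac{1}{m}\sum_k \frac{G^2}{2\sigma' k}=O(\log m/m)$. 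This uses bounded subgradients $\|g^k\|\le G=\sqrt n\,\bar v$. Chaining the two displays gives $\frac{\sigma'}{2}\|w^*-\omega^{m+1}\|^2\le O(\log m/m)$, which proves the lemma. The main obstacle is this step. One must reuse, rather than merely cite, the internal inequality of the RDA regret proof: the gap between the strongly convex lower model's minimum at $\omega^{m+1}$ and the realized cumulative loss is $O(\sum_k 1/k)$. If that is awkward, the fallback is to track the potential $\min_\omega\{\sum_{k\le j}\langle g^k,\omega\rangle-j\sum_i\rho_i\log\omega_i\}$ incrementally. Its one-step increments are controlled by $\|g^j\|^2/(2\sigma' j)$ via strong convexity.
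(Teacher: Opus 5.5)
Your route is essentially the paper's: first-order optimality of $\omega^{m+1}$ for the linearized RDA objective plus strong convexity, the subgradient inequality, and the RDA telescoping bound. It breaks at the same place. Chaining your two displays gives
\[
\frac{\sigma'}{2}\|w^*-\omega^{m+1}\|_2^2\le F^m(w^*)-\min_{\omega\in W}\Big\{\ell^m(\omega)-\sum_i\rho_i\log\omega_i\Big\}\le O\Big(\frac{\log m}{m}\Big)-\frac{R^m_{obj}}{m},
\]
so you need the \emph{lower} bound $R^m_{obj}\ge -O(\log m)$. Neither reason you cite gives it. Theorem~\ref{thm:auxiliary_regret} bounds $R^m_{obj}$ only from above. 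Optimality of $w^*$ only compares $\sum_kF_k(w^*)$ with $\sum_kF_k(\omega)$ for a single fixed $\omega$, where $F_k(\omega)=\hat p_k(\omega)-\sum_i\rho_i\log\omega_i$; the online sum instead evaluates $F_k$ at a different point $\omega^k$ in every round. This lower bound is not a side fact but the whole content of the lemma. Since $\hat p_k(\omega^k)=\langle g^k,\omega^k\rangle$ and RDA is follow-the-leader on the linearized losses, $\frac1m\sum_kF_k(\omega^k)$ lies within $O(\log m/m)$ of $\min_W\{\ell^m-\sum_i\rho_i\log\omega_i\}$ from \emph{both} sides. Hence $-R^m_{obj}/m$ equals, up to $O(\log m/m)$, exactly the middle term of the display you are trying to bound. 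The paper's proof has the same hole: it bounds $\Delta-R^m_{obj}$ by $\Delta$, which silently assumes $R^m_{obj}\ge 0$.

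The gap cannot be closed by a deterministic argument, because the statement fails for non-stationary item sequences. Take $n=2$, $\tau_1=\tau_2=1$, $\rho_1=0.3$, $\rho_2=0.1$. Let the first $m/2$ items have values $(1,1)$ and the last $m/2$ have values $(1,0)$; perturb by small continuous noise to meet Assumption~\ref{assumption:continuous_valuation}, which does not require identical distributions across auctions. Offline, $w^*=(0.4,0.4)$ and buyer~2 receives half of the first-type items, about $m/4$ in value. Online, during the first half the two multipliers equalize and buyer~2 wins only a quarter of those items, about $m/8$; in the second half buyer~2 wins nothing. So $\bar g^m_2\approx 1/8$ and $\omega^{m+1}_2\approx 0.8$, giving $|\omega^{m+1}_2-w^*_2|\approx 0.4$, and correspondingly $R^m_{obj}=-\Theta(m)$. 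A correct proof must therefore use some stationarity of the valuations across auctions (e.g.\ i.i.d.\ items), which neither your argument nor the paper's does. Your Steps 1--2 and the appeal to the RDA upper bound are fine as they stand.
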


\begin{lemma}[Each Buyer's Utility Bound]\label{lemma:utility_bound}
    If each buyer's value is drawn from a continuous distribution as we mentioned in assumption \ref{assumption:continuous_valuation}, and each buyer's density distribution is bounded by $\bar{f}$, i.e., $f_i(\cdot) \le \bar{f}, \forall i \in[n]$, then, for any buyer $i \in [n]$, algorithm \ref{alg:RDA} ensures buyer $i$'s utility as
    \begin{align*}
       u_i^* - u_i^m \le O\left(\sqrt{m\log(m)}\right),
    \end{align*}
    where $u_i^m = \sum_{j=1}^m v_{i,j}x_{i,j}$ denotes buyer $i$'s cumulative utility with algorithm \ref{alg:RDA}  and $u_i^* = \sum_{j=1}^m v_{i,j}x_{i,j}^*$ denotes buyer $i$'s optimal utility.
\end{lemma}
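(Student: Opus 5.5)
The plan is to write buyer $i$'s utility shortfall as a sum of per-auction differences and control each term by how far the learned multipliers are from $w^*$. Under Assumption~\ref{assumption:continuous_valuation}, ties occur with probability zero. Hence $x_{i,j}=\mathbb{I}\{\omega_i^j v_{i,j} > \max_{k\neq i}\omega_k^j v_{k,j}\}$ and $x^*_{i,j}=\mathbb{I}\{w_i^* v_{i,j} > \max_{k\neq i}w_k^* v_{k,j}\}$ almost surely. Then
\[
u_i^*-u_i^m=\sum_{j=1}^m v_{i,j}\bigl(x^*_{i,j}-x_{i,j}\bigr)\le \bar v\sum_{j=1}^m \bigl|x^*_{i,j}-x_{i,j}\bigr| ,
\]
and it suffices to bound the expected number of auctions in which the allocation under $\omega^j$ differs from the allocation under $w^*$.

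Step 1 controls the trajectory of the multipliers. Lemma~\ref{lemma:strategy_comparison} is stated only for the final iterate $\omega^{m+1}$. Since the algorithm is anytime (parameter-free, run identically for every horizon), I would apply the same argument, via Theorem~\ref{thm:auxiliary_regret} at horizon $j$ plus the strong convexity of Lemma~\ref{lemma:strong_convex}, to each prefix. This gives $\|\omega^{j}-w^*\|_\infty\le \epsilon_j := C\sqrt{\log(j)/j}$ for every $j$.

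One subtlety is that $w^*$ here is the offline optimum for all $m$ auctions, not for the first $j$. I would handle this by viewing the auctions as i.i.d. draws, so that the prefix optimum and the full optimum are close at the same $\tilde O(1/\sqrt{j})$ rate. The constant $C$ depends on $\sigma=\min_i \lambda_i\tau_i^2/m=\min_i\rho_i\tau_i^2$, which is bounded below, and on the bounded subgradients $|g^j_i|\le\bar v$.

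Step 2 is the anti-concentration bound. The allocation of item $j$ can differ between $\omega^j$ and $w^*$ only if some pair $i,k$ has its bid ordering flipped. With $\|\omega^j-w^*\|_\infty\le\epsilon_j$ and values bounded by $\bar v$, a flip requires $|w_i^*v_{i,j}-w_k^*v_{k,j}|\le 2\bar v\epsilon_j$. Conditioning on $v_{k,j}$, this confines $v_{i,j}$ to an interval of length $O(\epsilon_j/w_i^*)\le O(\epsilon_j/\underline w_i)$. By Lemma~\ref{lemma:beta_lower_bound}, $\underline w_i$ is a constant independent of $m$. The density bound $\bar f$ then makes the probability of that interval at most $O(n\bar f\,\epsilon_j/\underline w_i)$.

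The main obstacle is that $\omega^j$ depends on past valuations, which is exactly why the realized values of auction $j$ are needed to be independent of $\omega^j$. I would use that $\omega^j$ is $\mathcal F_{j-1}$-measurable and $v_{\cdot,j}$ is independent of $\mathcal F_{j-1}$. This makes the conditional probability bound valid, after which I would either take expectations or pass to a high-probability statement with Azuma's inequality on the martingale differences, which adds only $O(\sqrt{m\log m})$.

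Step 3 sums the per-auction bounds:
\[
u_i^*-u_i^m\le \bar v\sum_{j=1}^m O\bigl(\bar f\,\epsilon_j\bigr)+O(\sqrt{m\log m})=O\Bigl(\sum_{j=1}^m\sqrt{\log j/j}\Bigr)=O\bigl(\sqrt{m\log m}\bigr),
\]
which is the claimed bound.
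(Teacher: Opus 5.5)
Your skeleton is the same as the paper's. You take per-round closeness of $\omega^j$ to $w^*$ from Lemma~\ref{lemma:strategy_comparison}, use a bounded-density and union bound to show the winner of item $j$ changes with probability $O(n\bar f\epsilon_j/\underline w)$, and then sum $\sum_{j\le m}\sqrt{\log j/j}\le 2\sqrt{m\log m}$. The paper simply invokes Lemma~\ref{lemma:strategy_comparison} at every $j$ and integrates the density of $v_{i,j}$ as though $w^*$ were a fixed vector. You rightly sense that both moves need justification, but your patches do not close the gaps.

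\textbf{Step 2.} The obstacle is not $\omega^j$ but $w^*$. The event $|w_i^*v_{i,j}-w_k^*v_{k,j}|\le 2\bar v\epsilon_j$ is defined through the optimum of Program~\eqref{prog: dual} over all $m$ items, which is a function of $v_{\cdot,j}$ itself and of later items. So the $\mathcal F_{j-1}$-measurability of $\omega^j$ is beside the point, and conditionally on $w^*$ the variable $v_{i,j}$ need not even have a density, let alone one bounded by $\bar f$. For the same reason, your opening claim that $x^*_{i,j}$ is almost surely a strict-argmax indicator is false. With positive probability $w^*$ sits exactly at a tie, because a budget-constrained buyer may need a fractional unit to spend exactly $\lambda_i$. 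This is harmless: almost surely each pair of buyers ties on at most one item, costing $O(n^2\bar v)$. A standard repair is a leave-one-out multiplier $w^{*,(-j)}$, the minimizer over $W$ of the dual objective with the single term $\hat p_j$ deleted.
\begin{itemize}
\item The pair $(\omega^j,w^{*,(-j)})$ is independent of $v_{\cdot,j}$.
\item Deleting one convex term with subgradients of norm at most $\bar v$ from an $m\min_i\rho_i\tau_i^2$-strongly convex objective moves the minimizer by at most $\bar v/(m\min_i\rho_i\tau_i^2)$.
\item So your density argument goes through with $w^{*,(-j)}$ in place of $w^*$, at the cost of an extra $O(1/m)$ in the margin.
\end{itemize}

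\textbf{Step 1.} The assertion that the prefix optimum is within $\tilde O(1/\sqrt j)$ of $w^*$ is where the real difficulty lies, and you give no argument for it. The i.i.d.\ reading is indeed necessary: with independent but non-identically distributed items, early iterates can stay a constant distance from $w^*$. However, the natural argument (uniform convergence of the averaged objective plus strong convexity) only yields $\tilde O(j^{-1/4})$, which sums to $\tilde O(m^{3/4})$. Moreover, the proof of Lemma~\ref{lemma:strategy_comparison} discards $-R^m_{obj}$, which is nonpositive only in expectation. So no deterministic bound for every $j$ is available.

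A route that avoids prefix optima altogether is as follows. Write $\sigma'=\min_i\rho_i\tau_i^2$, $L_j(w)=\sum_{k\le j}\hat p_k(w)-j\sum_i\rho_i\log w_i$, and $\mathrm{Online}_j=\sum_{k\le j}\bigl[\hat p_k(\omega^k)-\sum_i\rho_i\log\omega_i^k\bigr]$. Combine:
\begin{itemize}
\item the optimality of $\omega^{j+1}$;
\item the relations $\langle g^k,\omega^k\rangle=\hat p_k(\omega^k)$ and $\langle g^k,w\rangle\le\hat p_k(w)$;
\item the $O(\log j)$ bound on the right-hand side of Lemma~\ref{lemma:regret_decomp} established in the proof of Theorem~\ref{thm:auxiliary_regret}.
\end{itemize}
These give, pathwise and for every $w\in W$, $\frac{j\sigma'}{2}\|w-\omega^{j+1}\|_2^2\le L_j(w)-\mathrm{Online}_j+O(\log j)$.

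Under i.i.d.\ items, $\mathbb E[\mathrm{Online}_j]\ge j\Phi(\bar w)\ge\mathbb E[\min_W L_j]$, where $\bar w$ minimizes $\Phi(w)=\mathbb E[\hat p(w)]-\sum_i\rho_i\log w_i$. This is where your measurability observation is genuinely used: $\omega^k$ is $\mathcal F_{k-1}$-measurable, so $\mathbb E[\hat p_k(\omega^k)\mid\mathcal F_{k-1}]=\mathbb E[\hat p](\omega^k)$. Taking $w=\bar w$ for each $j$, and $w=w^*$ at $j=m$, gives $\mathbb E\|\omega^{j+1}-\bar w\|_2^2=O(\log j/j)$ and $\mathbb E\|\omega^{m+1}-w^*\|_2^2=O(\log m/m)$. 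By the triangle inequality, $\mathbb E\|\omega^j-w^*\|_2=O(\sqrt{\log j/j})$ for all $j\le m$. Together with the leave-one-out step, this is exactly what your Step~3 needs.
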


Further, we bound the seller's revenue.
\begin{lemma}[Seller's Revenue Bound]\label{lemma:seller_revenue}
    Algorithm \ref{alg:RDA} ensures the seller's revenue as 
    \begin{align*}
        \sum_{j=1}^m \hat{p}_j(\omega^j) \ge \sum_{j=1}^m \hat{p}_j(w^*) - O\left(\sqrt{m \log(m)} \right).
    \end{align*}
\end{lemma}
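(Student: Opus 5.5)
We derive Lemma~\ref{lemma:seller_revenue} from the auxiliary regret bound (Theorem~\ref{thm:auxiliary_regret}) and the strategy convergence bound (Lemma~\ref{lemma:strategy_comparison}). Rearranging Definition~\ref{def:auxiliary_regret} with $j=m$ gives
\begin{align*}
\sum_{j=1}^m \hat{p}_j(\omega^j) - \sum_{j=1}^m \hat{p}_j(w^*) = R^m_{obj} + \sum_{j=1}^m \sum_{i=1}^n \rho_i\left[\log(\omega_i^j) - \log(w_i^*)\right].
\end{align*}
So it suffices to lower bound the right-hand side by $-O(\sqrt{m\log m})$. Theorem~\ref{thm:auxiliary_regret} gives $|R^m_{obj}| \le O(\log m)$. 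The upper bound is the theorem itself; a lower bound $R^m_{obj}\ge -O(\log m)$ should follow from the same RDA argument, or we can avoid needing it as explained below. The main work is therefore the regularizer difference term.

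Every iterate satisfies $\omega_i^j\in W_i=[\underline{w}_i,1/\tau_i]$, and by Lemma~\ref{lemma:beta_lower_bound} so does $w_i^*$. On this interval $\log$ is Lipschitz with constant $1/\underline{w}_i\le \max\{\bar v/\underline\rho,\tau_i\}$, which is a constant independent of $m$. Hence
\[
\rho_i\left|\log\omega_i^j-\log w_i^*\right|\le C\,|\omega_i^j-w_i^*|.
\]
Lemma~\ref{lemma:strategy_comparison} is stated only for the final iterate $\omega^{m+1}$. Its proof, via strong convexity (Lemma~\ref{lemma:strong_convex}) of the RDA objective at round $j$ together with the $O(\log j)$ regret up to round $j$, applies verbatim at every intermediate round. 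This gives $|\omega_i^{j+1}-w_i^*|\le O(\sqrt{\log j/j})$, where the comparison is to the round-$j$ minimizer. Summing over $j\le m$ yields
\[
\sum_j O\bigl(\sqrt{\log j/j}\bigr)=O\bigl(\sqrt{m\log m}\bigr),
\]
and multiplying by $n$ and $C$ keeps the order. Combining the two pieces gives the claimed bound.

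The main obstacle is that the intermediate-round statement compares $\omega^{j+1}$ with the minimizer of the first-$j$-round objective, not with the global $w^*$. My first attempt is to handle this through Lemma~\ref{lemma:strategy_comparison}'s own mechanism. Using strong convexity of $h$ with modulus $\sigma=\min_i\rho_i\tau_i^2\cdot j$ on the cumulative objective, the gap between the partial-sum objective at $w^*$ and at its minimizer is controlled by the same $O(\log j)$ regret plus deviation terms. If that route is messy, the fallback is to avoid per-round distances altogether and use concavity of $\log$. Writing $\log\omega_i^j-\log w_i^*\ge (\omega_i^j-w_i^*)/\omega_i^j$ and bounding $|\omega_i^j-w_i^*|$ via the per-round version of Lemma~\ref{lemma:strategy_comparison} gives the same $O(\sqrt{m\log m})$ total. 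For the sign of $R^m_{obj}$, this route only needs its upper bound, provided we instead bound the regularizer sum directly in terms of distances, which is two-sided.
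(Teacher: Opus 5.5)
\textbf{Gap 1: the auxiliary regret is bounded on the wrong side.} Your identity
$\sum_j\hat{p}_j(\omega^j)-\sum_j\hat{p}_j(w^*)=R^m_{obj}+\sum_j\sum_i\rho_i[\log\omega^j_i-\log w^*_i]$
is correct, but it points the wrong way. To \emph{lower}-bound the revenue difference you need a \emph{lower} bound on $R^m_{obj}$. Theorem~\ref{thm:auxiliary_regret} only gives $R^m_{obj}\le O(\log m)$, so your $|R^m_{obj}|\le O(\log m)$ misreads it. The RDA telescoping argument is intrinsically one-sided: it never shows that the adaptive sequence $\omega^j$ cannot beat the fixed comparator $w^*$. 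So ``$R^m_{obj}\ge -O(\log m)$ should follow from the same argument'' is unsupported.

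Your fallback does not remove the problem. Bounding the regularizer term two-sidedly still leaves $R^m_{obj}$ on the right-hand side, where only a lower bound would help. The evident way to lower-bound it is convexity at $w^*$, which gives $R^m_{obj}\ge -C\sum_j\|\omega^j-w^*\|$. That again rests on per-round distances, so the detour through $R^m_{obj}$ and the regularizer buys nothing.

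The paper skips the detour. Let $k_j$ be the winner of auction $j$ under $w^*$. Then $\hat{p}_j(\omega^j)\ge\omega^j_{k_j}v_{k_j,j}$, hence $\hat{p}_j(\omega^j)-\hat{p}_j(w^*)\ge-\bar{v}\,|\omega^j_{k_j}-w^*_{k_j}|$. Summing per-round deviation bounds gives the claim directly.

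\textbf{Gap 2: per-round convergence to the global $w^*$.} You correctly flag that a per-round version of Lemma~\ref{lemma:strategy_comparison} compares $\omega^{j+1}$ with the minimizer of the first-$j$-round problem, not with the global $w^*$. This obstacle is genuine, and the paper's proof does not resolve it either. It simply posits, as a ``worst case,'' that $w^*_{k_j}-\bar{v}\sqrt{\underline{\rho}\log(j)/j}\le\omega^j_{k_j}\le w^*_{k_j}$ for every $j$. Your suggested repairs do not establish it. Strong convexity of the round-$j$ RDA objective only pins $\omega^{j+1}$ near the first-$j$-round minimizer, and the regret against $w^*$ over the first $j$ rounds has no useful sign.

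Without stationarity across auctions the step is in fact false. Take two buyers with $\bar{v}=v$:
buyer~1 has $\tau_1=1$, $\rho_1=v/4$, value $v$ on items $1,\dots,m/2$ and $0$ afterwards;
buyer~2 has $\rho_2=v/4$, $\tau_2=T\ge 4$, values $0$ then $v$.
A slight perturbation makes the values continuous without changing anything. Then $w^*=(1/2,1/T)$. However, Algorithm~\ref{alg:RDA} keeps $\omega^j_1=\rho_1/\bar{g}^{j-1}_1=1/4$ for $2\le j\le m/2$, and $W_2=\{1/T\}$. The revenue shortfall is therefore $mv/8-3v/4$, which is linear in $m$. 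A correct argument must exploit i.i.d.\ valuations across auctions, for example through concentration of $\bar{g}^j_i$ around its mean. Neither your proposal nor the paper supplies this.
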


Now we state our main result in this section.
\begin{theorem}\label{thm:alg}
    Algorithm \ref{alg:RDA} ensures that the seller's revenue is at least as the optimal revenue minus a sub-linear loss, i.e.,
    \begin{align*}
        \sum_{j=1}^m \hat{p}_j(\omega^j) \ge \sum_{j=1}^m \hat{p}_j(w^*) - O\left(\sqrt{m \log(m)} \right).
    \end{align*}
    Furthermore, for each buyer $i$, the difference between the optimal strategy $w^*_i$ and the one $\omega^{m+1}_i$ computed by Algorithm \ref{alg:RDA} satisfies:
    \begin{gather*}
        \left|w^*_i-\omega^{m+1}_i\right|\le O\left(\sqrt{\frac{\log m}{m}}\right).
    \end{gather*}

    Each buyer suffers a regret:
    \begin{gather*}
        \left|u_i^m - u_i^*\right| \le O\left(\sqrt{m\log(m)}\right).
    \end{gather*}
\end{theorem}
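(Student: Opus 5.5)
Theorem \ref{thm:alg} is essentially a packaging of the three preceding results, so the plan is to assemble them and keep track of constants. The revenue statement is exactly Lemma \ref{lemma:seller_revenue}, and the strategy-convergence statement is exactly Lemma \ref{lemma:strategy_comparison}. The utility statement needs a little more work, because Lemma \ref{lemma:utility_bound} only gives the one-sided bound $u_i^*-u_i^m\le O(\sqrt{m\log m})$, whereas the theorem asserts the two-sided bound $|u_i^m-u_i^*|\le O(\sqrt{m\log m})$.

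For completeness, I would recall how each lemma follows from Theorem \ref{thm:auxiliary_regret}.

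\textbf{Strategy convergence.} Write $F(\omega)=\sum_{k\le m}\hat p_k(\omega)-m\sum_i\rho_i\log\omega_i$. By Lemma \ref{lemma:strong_convex}, applied with $\lambda_i=m\rho_i$, $F$ is $m\underline{\rho}\min_i\tau_i^2$-strongly convex on $W$. Its minimizer is $w^*$, which lies in $W$ by Lemma \ref{lemma:beta_lower_bound}. The RDA iterate $\omega^{m+1}$ minimizes the linearization $\langle m\bar g^m,\omega\rangle-m\sum_i\rho_i\log\omega_i$. Xiao's standard argument then combines the $O(\log m)$ regret with strong convexity to give $\tfrac{m\sigma'}{2}\|\omega^{m+1}-w^*\|_2^2\le O(\log m)$. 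This yields $|\omega_i^{m+1}-w_i^*|\le O(\sqrt{\log m/m})$.

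\textbf{Seller revenue.} Rearrange the definition of $R^m_{obj}$:
\[
\sum_j\hat p_j(\omega^j)-\sum_j\hat p_j(w^*)=R^m_{obj}+\sum_{k}\sum_i\rho_i\bigl(\log\omega_i^k-\log w_i^*\bigr).
\]
Since $\log$ is Lipschitz on $W_i$ (because $\underline w_i\ge\underline\rho/\bar v$), the last term is at least $-C\sum_k\|\omega^k-w^*\|$. The per-round distance bound $O(\sqrt{\log k/k})$, applied to every iterate, gives $\sum_k\sqrt{\log k/k}=O(\sqrt{m\log m})$.

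\textbf{Two-sided utility bound.} This is the remaining step. I would use the fact that the total welfare allocated is fixed: every item is sold in both the online run and the equilibrium, so mass cannot be gained by all buyers simultaneously. The cleaner route, however, is to repeat the argument of Lemma \ref{lemma:utility_bound} symmetrically. By Assumption \ref{assumption:continuous_valuation} and the density bound $\bar f$, the probability that the winner of item $j$ under $\omega^j$ differs from the winner under $w^*$ is at most $O(\bar f\,\bar v\,\|\omega^j-w^*\|)$. This holds because a different winner requires two weighted bids to lie within $O(\|\omega^j-w^*\|\bar v)$ of each other. Hence
\[
\mathbb E\sum_j v_{i,j}|x_{i,j}-x^*_{i,j}|\le \bar v\sum_j O(\|\omega^j-w^*\|)=O(\sqrt{m\log m}),
\]
and this controls $|u_i^m-u_i^*|$ in both directions at once. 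Passing from expectation to a high-probability statement adds an Azuma term of order $O(\sqrt{m\log m})$.

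The main obstacle I expect is the per-iterate distance bound $\|\omega^k-w^*\|$. The natural strong-convexity argument compares $\omega^{k+1}$ with the minimizer of the prefix objective over the first $k$ auctions, not with the global $w^*$. Bridging the two requires either treating the $k$-prefix as a stochastic sample of the full market, which is justified by i.i.d. valuations, or applying Lemma \ref{lemma:strategy_comparison} uniformly in $k$. I would take the first option, bounding the deviation of the prefix minimizer from $w^*$ by a uniform concentration argument on the average of $\hat p_j$, which is Lipschitz in $\omega$.
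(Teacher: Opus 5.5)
You are right about two things that the paper's one-line proof (``follows directly from Lemmas~\ref{lemma:strategy_comparison}, \ref{lemma:utility_bound} and~\ref{lemma:seller_revenue}'') passes over. First, Lemma~\ref{lemma:utility_bound} only controls $u_i^*-u_i^m$, so the two-sided claim needs your symmetric winner-change count. Second, the proofs of Lemmas~\ref{lemma:utility_bound} and~\ref{lemma:seller_revenue} use $|\omega^j_i-w^*_i|\le O(\sqrt{\log j/j})$ at every $j$, whereas Lemma~\ref{lemma:strategy_comparison} concerns only $\omega^{m+1}$.

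Your revenue step, however, does not go through as written. After the (correct) identity you bound only the logarithmic sum. You also need a \emph{lower} bound on $R^m_{obj}$, and Theorem~\ref{thm:auxiliary_regret} supplies only the upper bound $R^m_{obj}\le O(\log m)$; nothing in the paper bounds $R^m_{obj}$ from below. The direct route avoids this entirely, and it is what the paper does. If $k_j$ is the winner under $w^*$, then $\hat p_j(\omega^j)\ge\omega^j_{k_j}v_{k_j,j}\ge\hat p_j(w^*)-\bar v\,|\omega^j_{k_j}-w^*_{k_j}|$, and summing the per-iterate bound gives the claim. (In the i.i.d.\ model, the observation below also yields $\mathbb{E}[R^m_{obj}]\ge 0$, which would rescue your identity in expectation.)

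The per-iterate bound is where the real work lies, and your plan for it does not yet deliver the rate. Run at horizon $k$ with an arbitrary comparator $w\in W$, the strong-convexity argument behind Lemma~\ref{lemma:strategy_comparison} gives $\frac{\sigma_k}{2}\|\omega^{k+1}-w\|_2^2\le\Delta_k-R^k(w)$. Here $\sigma_k=k\min_i\rho_i\tau_i^2$, $\Delta_k=O(\log k)$, and $R^k(w)$ is the regret against $w$. So even closeness of $\omega^{k+1}$ to the prefix minimizer $w^{*(k)}$ needs a lower bound on $R^k(w^{*(k)})$, which no deterministic argument in the paper provides. Your strategy-convergence paragraph, like the paper's proof, silently drops $-R^m_{obj}$. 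In addition, a uniform deviation bound on the \emph{values} of $\frac1k\sum_{j\le k}\hat p_j$, passed through strong convexity, only gives $\|w^{*(k)}-w^*\|=O((\log k/k)^{1/4})$. That sums to $O(m^{3/4})$, not $O(\sqrt{m\log m})$.

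What works under i.i.d.\ items is to take as comparator the population minimizer $\bar w$ of $\mathbb{E}[\hat p_1(\omega)]-\sum_i\rho_i\log\omega_i$ over $W$. Since $\omega^j$ depends only on items before $j$, $\mathbb{E}[R^k(\bar w)]\ge 0$. Hence $\mathbb{E}\|\omega^{k+1}-\bar w\|_2^2=O(\log k/k)$ for every $k$, with no uniform concentration at all. Because $w^*$ minimizes the realized objective, $R^m(w^*)\ge R^m(\bar w)$. This gives the same bound at $k=m$ with comparator $w^*$, and therefore $\mathbb{E}\|w^*-\bar w\|=O(\sqrt{\log m/m})$.

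Routing the winner-change count through $\bar w$ also repairs your density step. As written, it bounds the probability of a winner change at item $j$ as if $w^*$ were independent of $v_{\cdot,j}$, which it is not. Comparing $\omega^j$ with $\bar w$ is legitimate conditionally on the past. For $\bar w$ versus $w^*$ you need a high-probability bound on $\|w^*-\bar w\|$. That follows from concentration of $\sum_j g^j(\bar w)$ at the fixed point $\bar w$, together with strong convexity.
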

Theorem \ref{thm:alg} follows directly from  Lemma \ref{lemma:strategy_comparison}, Lemma \ref{lemma:utility_bound}, and Lemma \ref{lemma:seller_revenue}, whose proofs are deferred to the Appendix.

\begin{remark}[Comparison with Online Learning Algorithms in First-Price Auctions]
\citet{aggarwal2025no} and \citet{deng2025no} also investigate online learning in first-price auctions, but under a fundamentally different formulation. They focus on a single-buyer stochastic setting, where both the buyer's value and the highest competing bid are drawn from joint, identical, and independent distributions (i.i.d.) for each round. In contrast, we consider a multiple-buyer setting where all buyers learn their strategies simultaneously. 
While prior works formulate the single-buyer problem as a standard optimization program solved via a primal-dual framework, we formulate a global program capturing the joint strategies of all buyers, showing that its solution corresponds to an equilibrium of first-price auctions. Leveraging this global equilibrium structure allows our algorithm to achieve a regret of $\tilde{O}(\sqrt{m})$ over $m$ auctions under bandit feedback, improving upon the $O\left(m^{\frac{3}{4}}\right)$ bound established in those works.
\end{remark}

\bibliographystyle{ecta}
\bibliography{sample-bibliography.bib}

\section{Appendix}
\subsection{Example 1}
\begin{example}\label{example:tie_breaking_rule}
Consider that there are only two buyers with only budget constraints. The buyers both have a budget of 1. Suppose there is an item with a price of 1, and the buyers both have positive valuations for the item. Then buyer 1 and buyer 2 both want to buy this item. The seller can increase the price of the item. As the price increases, the demand of each buyer decreases continuously. The total demand equals the supply when the price is 2, as each buyer can only afford to buy 0.5 units. This results in fractional allocation $x_1 = x_2 = 0.5$. Under this allocation, each buyer maximizes their utility.

Now consider the case with RoS constraints. Suppose there are also two buyers, both with sufficiently large budgets. Buyer 1 has a RoS constraint of 1.2, and Buyer 2 has an RoS constraint of 1.5. Consider an item priced at 3, where buyer 1 has a valuation of 2.5, and buyer 2 has a valuation of 2 for this item. At this price, both buyers are willing to purchase the item. However, in this case, the seller cannot increase the price further, as it would exceed the buyer's allowable spending under their RoS constraints, resulting in no buyers willing to purchase. This means that the demand for a RoS binding buyer would suddenly drop from 1 to 0 as the price increases. Moreover, each buyer cannot maximize their utility with any fractional allocation. In fact, in the example above, fractional allocations make no sense, as each buyer would simply buy 1 unit of the item if its value-to-price ratio exceeds the RoS of the buyer, and buy nothing otherwise.
\end{example}
\subsection{Example 2}
\begin{example}[Tightness of the Bound]\label{ex:revenue_tight}
    Consider a setting with two buyers and two items. Both buyers have a target RoS of $\tau_1 = \tau_2 = 1$. Buyer 1 and buyer 2 both possess a budget of $\lambda_1 = \lambda_2 = 1$. The value profiles are constructed as follows for a small parameter $\epsilon > 0$, shown in Table \ref{table:tight_bound}.

\begin{table}[h]
\centering % 使表格居中
\caption{Buyers' value profile} % 描述性标题
\label{tab:corrected_data}
\begin{tabular}{l c c c} % 'l' 表示买家左对齐，'c' 表示数据居中
\toprule % 顶部粗线
 & \textbf{Item 1} & \quad & \textbf{Item 2} \\ 
\midrule % 中间线，在表头下方
Buyer 1 & 1  & \quad & $\frac{1}{\epsilon}$ \\ 
Buyer 2 & 0 & \quad & $1-\epsilon$ \\ % 修正：将原Buyer 1更改为Buyer 2
\bottomrule % 底部粗线
\end{tabular}
\label{table:tight_bound}
\end{table}

\begin{itemize}
    \item \textbf{The Mechanism Solution (Program (\ref{prog: dual}))} For any $\epsilon > 0$, the dual variables corresponding to the budget and RoS constraints are
    \begin{gather*}
        w_1^*  = \frac{\epsilon}{1 + \epsilon}, w_2^* = 1.
    \end{gather*}
    Consequently, the mechanism allocates both items to buyer 1:
    \begin{gather*}
        x_{11}^* = 1, x_{12}^*=1, x_{21}^* = 0, x_{22}^* = 0.
    \end{gather*}
    Since buyer 1 wins both items, the platform's total revenue is:
    \begin{gather*}
        \text{Rev}^* =  1.
    \end{gather*}

    \item \textbf{The First-best Solution:} In the first-best (FB) scenario, the optimal parameters are:
    \begin{gather*}
        \alpha_1^{FB} = 1, \alpha_2^{FB} = 0, \beta_1^{FB} = \frac{\epsilon}{2}, \beta_2^{FB} = 1.
    \end{gather*}
    Thus, the first best solution allocates item 1 to buyer 1 and item 2 to buyer 2,
    \begin{gather*}
        x_{11}^{FB} = 1, x_{12}^{FB} = 0, x_{21}^{FB} = 0, x_{22}^{FB} = 1.
    \end{gather*}
    The total revenue achievable in the first best case is 
    \begin{gather*}
        \text{Rev}^{FB} = 2 - \epsilon.
    \end{gather*}

    \item Comparing the two outcomes, we have:
    \begin{gather*}
        \text{Rev}^{FB} = (2 - \epsilon)\text{Rev}^*.
    \end{gather*}
    Since for any $\epsilon > 0$, the bound holds, demonstrating that the theoretical bound derived previously is indeed tight.
\end{itemize}

\end{example}

\subsection{Omitted Proofs of Section \ref{sec:theory}}

\subsubsection{Proof of Lemma \ref{lemma:feasible}}
\begin{proof}
	We first show the RoS constraint of each buyer is satisfied. 
    % It is easy to check that both Program \eqref{prog: primal} and \eqref{prog: dual} are convex. Thus the complementary slackness conditions hold, and we have:
	% \begin{align*}
	% 	x^*_{i,j}(w^*_{i}v_{i,j}-p^*_j) = 0, \forall i \in [n], \forall j \in [m].
	% \end{align*}
    Consider the complementary slackness condition in Equation \eqref{eq:sop_cs_x}.
	Summing over all $m$ items yields:
	\begin{align}\label{eq:lemma_1} 
		w^*_i\sum_{j=1}^m  v_{i,j}x^*_{i,j} = \sum_{j=1}^mp^*_jx^*_{i,j}, \forall i \in [n].
	\end{align}
	The above equation establishes a direct relationship between the total value of allocated items and the payment of each buyer.
	The second constraint in Program \eqref{prog: dual} implies:
	% From the second constraint in the dual program (\ref{prog: dual}), we have 
	% \begin{align*}
	% 	\beta_i \le \frac{1}{\tau_i}, \forall i \in [n].
	% \end{align*}
	% Substituting this inequality into the above summation:
	\begin{align*}
        \sum_{j=1}^m p^*_j x^*_{i,j} = w^*_i\sum_{j=1}^m  v_{i,j}x^*_{i,j} \le \frac{1}{\tau_i} \sum_{j=1}^m v_{i,j}x^*_{i,j},
	\end{align*}
	% Rearranging terms, we obtain
    which is equivalent to $\sum_{j=1}^m v_{i,j}x^*_{i,j}\ge \tau_i \sum_{j=1}^m p^*_j x^*_{i,j}$.
	% \begin{align*}
	% 	\frac{\sum_{j=1}^m v_{i,j}x_{i,j}}{\sum_{j=1}^m p_j x_{i,j}} \ge \tau_i.
	% \end{align*}
	This indicates that buyer $i$'s RoS constraint is satisfied.
	
	Now we show the budget constraint is satisfied. 
    Combining Equation \eqref{eq:sop_cs_beta} and \eqref{eq:lemma_1}, we get:
	\begin{align}\label{eq:lemma1_2}
		\sum_{j=1}^m p^*_j x^*_{i,j} = w^*_i (c^*_i - d^*_i).
	\end{align}
 %    Still using the complementary slackness condition, we have
	% \begin{gather} 
	% 	d^*_i\left(w^*_i - \frac{1}{\tau_i}\right) = 0, \forall i \in [n].\label{eq:cs_t}\\
 %        w^*_i\left(c_i - \sum_{j=1}^m x^*_{i,j}v_{i,j} - d^*_i\right) = 0, \forall i \in [n].\label{eq:cs_u}
	% \end{gather}
	The KKT stationarity condition gives:
	\begin{align}\label{eq:kkt_u}
		w^*_i c^*_i = \lambda_i,
	\end{align}
%	\begin{align}\label{eq:kkt_x}
%		\beta_i v_{ij} = p_j.	
%	\end{align}
%	Multiplying $x_{ij}$ for equation (\ref{eq:kkt_x}), we have that
%	\begin{align*}
%		\beta_i x_{ij} v_{ij} = p_j x_{ij}.
%	\end{align*}
%	Summation over all $m$ items, we have
%	\begin{align} \label{eq:lemma1_1}
%		\beta_i \sum_{j=1}^m x_{ij}v_{ij} = \sum_{j=1}^m p_j x_{ij}
%	\end{align}
	Plugging in Equation \eqref{eq:kkt_u} and \eqref{eq:sop_cs_t}, we finally have 
	\begin{align}\label{eq:cap_payment}
		\sum_{j=1}^m p^*_j x^*_{i,j} = \lambda_i - \frac{d^*_i}{\tau_i}\le \lambda_i,
	\end{align}
	where the last inequality follows from the fact that $d^*_i \ge 0$.
	This confirms that buyer $i$'s budget constraint is satisfied.
\end{proof}

\subsubsection{Proof of Lemma \ref{lemma:beta_lower_bound}}
\begin{proof}
    For any buyer $i$, $w_i^*$ must satisfy the feasibility condition $w_i^* \le \frac{1}{\tau_i}$. We consider two cases:
    \begin{itemize}
        \item $w_i^* = \frac{1}{\tau_i}$. In this case, $w_i^* \ge \frac{1}{\tau_i}$ holds;
        \item $w_i^* \neq \frac{1}{\tau_i}$. In this case, according to the complementary slackness condition (\eqref{eq:sop_cs_t}), we must have $d_i^* = 0$. 
        Combining Equation (\eqref{eq:lemma_1}) and (\eqref{eq:cap_payment}) gives
        \begin{gather*}
            w_i^* = \frac{\lambda_i - \frac{d_i^*}{\tau_i}}{\sum_{j=1}^m v_{i,j} x_{i,j}^*} = \frac{\lambda_i}{\sum_{j=1}^m v_{i,j}x_{i,j}^*} \ge \frac{\lambda_i}{m \bar{v}}
        \end{gather*}
    \end{itemize}
    One of the above two cases must hold, and thus we have:
    \begin{gather*}
        w_i^* \ge \min \left\{\frac{\lambda_i}{m \bar{v}}, \frac{1}{\tau_i} \right\} = \underline{w}_i, \forall i\in [n].
    \end{gather*}
\end{proof}

\subsubsection{Proof of Lemma \ref{lemma:SOP_IOP}}
\begin{proof}
    The dual of Program \eqref{IOP:primal} is:
	\begin{mini}|l|[2]{r_i,\alpha_i,\mu_{i, \cdot}}{\sum_{j=1}^m q_{i,j}\mu_{i,j} + \lambda_i r_i}{\tag{I-DUAL}\label{IOP:dual}}{}
	\addConstraint{(1+\alpha_i)v_{i,j} - (r_i + \alpha_i \tau_i)p^*_j - \mu_{i,j}}{\le 0}{\quad \forall j \in [m]}
	\addConstraint{\mu_{i,j}}{\ge 0 }{\quad \forall j\in [m]}
	\addConstraint{\alpha_i}{\ge 0}{}
	\addConstraint{r_i}{\ge 0.}{}
	\end{mini}
    
	We prove Lemma \ref{lemma:SOP_IOP} by constructing solutions to Program \eqref{IOP:primal} and \eqref{IOP:dual}, and showing that these solutions lead to the same objective of the two programs. 
    Then they are both optimal as they satisfy the strong duality condition. We first set $x_{i,j}=x^*_{i,j},\forall j \in [m]$. As for other variables, we consider the following two cases:
    % Recall that the solution $x_{i,j}^* = \{0, 1\}, \forall i\in [n], \forall j \in [m]$.
\begin{enumerate}
	\item If $d^*_i > 0$, we have $w^*_i = \frac{1}{\tau_i}$ according to Equation \eqref{eq:sop_cs_t}. %Based on Equation \eqref{eq:cap_payment}, we know that buyer $i$'s payment is less than his budget.
    % In this case, we must have $x_{i,j}^* \in\{0,1\},\forall j$, since otherwise, the buyer $i$'s payment is less than his budget. Then if $x_{i,j}^* < 1$ and $x_{i,j}^* > 0$, then improving $x_{i,j}$ results in a higher utility for buyer $i$.
    In this case, we set:
	% \begin{gather*}
 %        y_i = 0,\\
	% 	\alpha_i = \frac{\sum_{j=1}^m v_{i,j}x^*_{i,j} \left[ \sum_{j=1}^m x^*_{i,j}v_{i,j} + d^*_i\right]}{\lambda_i \tau_i + (\tau_i - 1) \left\{\sum_{j=1}^m x^*_{i,j} v_{i,j}\left[\sum_{j=1}^m x^*_{i,j}v_{i,j} + t^*_i \right] \right\}},\\
	% 	\mu_{i,j} = \max \left\{ \max_{j \in [m]}\left\{(1 + \alpha_i)v_{i,j} - (\alpha_i \tau_i)p^*_j\right\},0\right\},\forall j.
	% \end{gather*}
        \begin{gather*}
            r^*_i = 0, \\
            \alpha^*_i = \max_{j: \tau_i p_j^* \neq v_{i, j}} \frac{v_{i, j}}{\tau_i p_j^* - v_{i, j}},\\
            \mu^*_{i, j} = \begin{cases}
                v_{i, j} &\text{if } x_{i, j}^* > 0\\
                0 &\text{otherwise}
            \end{cases}.
        \end{gather*}
        % \begin{align*}
        %     \mu_{i, j} = \begin{cases}
        %         v_{i, j} &\text{if } x_{i, j}^* = 1\\
        %         0 &\text{otherwise}.
        %     \end{cases}
        % \end{align*}
        We first show that the above constructed solution $(r^*_i,\alpha^*_i,\mu^*_{i,\cdot})$ is a feasible solution to the dual program.
        It is obvious that $\mu^*_{i, j} \ge 0, \forall j \in [m]$ and $y^*_i \ge 0$.
        Since $p_j^* = \max_{i} \{w_i^* v_{i, j}\}$ and $w_i^* \le \frac{1}{\tau_i}$, we have $v_{i, j} \le \tau_i p_j^*, \forall j \in [m]$.
        % \begin{gather*}
        %     v_{i, j} \le \tau_i p_j^*, \forall j \in [m].
        % \end{gather*}
        It follows that $\alpha^*_i \ge 0$.
        % \begin{gather*}
        %     \alpha_i \ge 0
        % \end{gather*}

        If $x_{i, j}^* > 0$, we have $\tau_i p_j^* = v_{i, j}$ according to Equation \eqref{eq:sop_cs_x}. Therefore,
        \begin{align*}
            (1+\alpha^*_i)v_{i,j} - (r_i^* + \alpha^*_i \tau_i)p^*_j - \mu^*_{i,j}
            =v_{i,j} - \mu^*_{i,j} + \alpha^*_i (v_{i,j} - \tau_i p_j^*) - r^*_i p_j^*
            = 0.
        \end{align*}
        If $x_{i,j}^* = 0$, since $\alpha^*_i\ge \frac{v_{i,j}}{\tau_i p_j^* - v_{i,j}}$ by definition, we have
        \begin{align*}
            (1 + \alpha^*_i) v_{i,j} - (r^*_i + \alpha^*_i \tau_i)p_j^* - \mu^*_{i,j} = v_{i,j} + \alpha^*_i(v_{i,j} - \tau_i p_j^*)\le 0.
        \end{align*}
        In both cases, the first constraint of Program \eqref{IOP:dual} is satisfied.
        % Since 
        % \begin{align*}
        %     \frac{v_{i,j}}{\tau_i p_j^* - v_{i,j}} - \alpha_i \le 0
        % \end{align*}
        % Thus we have $v_{i,j} + \alpha_i(v_{i,j} - \tau_i p_j^*) \le 0$

        It is easy to check that under the above construction, the primal objective and the dual objective are the same, which implies that both programs are optimal according to the strong duality condition.
	% It is easy to check that $(x_i, y_i, \alpha_i, \mu_{i,j})$ satisfies the complementary slackness conditions (\ref{eq:iop_cs_y}), (\ref{eq:iop_cs_alpha}), (\ref{eq:iop_cs_mu}), (\ref{eq:iop_cs_x}).
	\item If $d^*_i = 0$, we set $\alpha^*_i = 0$, $r^*_i=\frac{1}{w^*_i}$, and $\mu^*_{i,j} = 0,\forall j$.
        % Following the same induction, we first prove the dual program is feasible.
        It is clear that $\alpha_i^* \ge 0, r_i^* \ge 0, \mu_{i,j}^* \ge 0, \forall j$. We also have
        \begin{gather*}
            (1+\alpha^*_i)v_{i,j} - (r^*_i + \alpha^*_i \tau_i)p^*_j - \mu^*_{i,j} = v_{i,j} - \frac{1}{w^*_i} p^*_j \le 0,
        \end{gather*}
        where the last inequality is due to $p_j^* = \max_{i} \{w_i^* v_{i, j}\}$. Combining Equation \eqref{eq:lemma_1} and \eqref{eq:cap_payment}, we get
        \begin{align*}
            w^*_i \sum_{j=1}^m v_{i,j} x_{i, j}^* = \lambda_i-\frac{d^*_i}{\tau_i}=\lambda_i.
        \end{align*}
        It follows that
        \begin{align*}
            \sum_{j=1}^m v_{i,j} x_{i,j}^* = \sum_{j=1}^m x_{i,j}^*\mu^*_{i,j} + r^*_i \lambda_i.
        \end{align*}
        This means that the primal objective equals the dual objective, hence both are optimal.
\end{enumerate}

\end{proof}

\subsubsection{Proof of Lemma \ref{lemma:competitive equilibrium}}
\begin{proof}
    Lemma \ref{lemma:SOP_IOP} already shows that with prices $p^*$, each buyer's best choice, under both their budget and RoS constraints, is $x^*_{i,\cdot}$. Since $p^*_j>0,\forall j$, we also have $\sum_{i=1}^n x^*_{i,j}=1, \forall j$ according to Equation \eqref{eq:sop_cs_p}, which implies market clearance.
    % Therefore, to prove Lemma \ref{lemma:competitive equilibrium}, it suffices to show that the prices $p^*$ clears the market, i.e., $\sum_{i=1}^nx_{i,j}=1,\forall j$.
    
\end{proof}

\subsubsection{Proof of Lemma \ref{lemma:IOP_SOP}}
\begin{proof}
    
    With Assumption \ref{assumption:no_buyer_win_all}, we claim that under competitive equilibrium $(x^*, p^*)$, each buyer has either a binding budget constraint or a binding RoS constraint, i.e., at least one of the following conditions must hold,
    \begin{gather*}
        \sum_{j=1}^m p_j^* x_{i,j}^* = \lambda_i, \\
        \sum_{j=1}^m v_{i,j} x_{i,j}^* =  \tau_i \sum_{j=1}^m p_j^* x_{i,j}^*.
    \end{gather*}
    We prove this by contradiction. Suppose there exists a buyer $i$, such that neither of the two constraints is binding:
    \begin{gather*}
        \sum_{j=1}^m p_j^* x_{i,j}^* < \lambda_i, \\
        \sum_{j=1}^m v_{i,j} x_{i,j}^* >  \tau_i \sum_{j=1}^m p_j^* x_{i,j}^*.
    \end{gather*}
    According to Assumption \ref{assumption:no_buyer_win_all}, buyer $i$ cannot buy all the items. So there exists item $j$ with $x^*_{i, j} < 1$. In this case, buyer $i$ can slightly increase $x^*_{i,j}$ to improve their utility while at the same time satisfying both the above two constraints. This clearly contradicts the fact that  $(x^*, p^*)$ is a competitive equilibrium.

    Then, similar to Lemma \ref{lemma:SOP_IOP}, we also prove the Lemma \ref{lemma:IOP_SOP} by explicitly constructing solutions to Program \eqref{prog: primal} and \eqref{prog: dual} and applying strong duality to show that they are both optimal.
    
    % Since $(x^*,p^*)$ is a competitive equilibrium, we have that $x^*_{i,\cdot}$ is an optimal solution to Program \eqref{IOP:primal} if $p=p^*$. It also implies market clearance, i.e., $\sum_{i=1}^nx_{i,j}^* = 1$ for any item $j$, satisfying the feasibility constraint of Program \eqref{prog: dual}.
    % For each buyer $i$, we know that $\sum_{i=1}^nx_{i,j}^* = 1, \forall j\in[m]$, meaning that $x^*$ satisfies the feasible constraint of CAP.
    For each buyer $i$, we construct the solution as follows:
    \begin{gather*}
        d^*_i = \begin{cases}
            0 & \text{if } \sum_{j=1}^m p_j^*x_{i,j}^* = \lambda_i \\
            \lambda_i \tau_i - \sum_{j=1}^m v_{i,j}x_{i,j}^* & \text{otherwise}
        \end{cases},\\
        c^*_i=\sum_{j=1}^mx^*_{i,j}v_{i,j}+d^*_i,\\
        w^*_i = \begin{cases}
            \frac{\sum_{j=1}^m p_j^*x_{i,j}^*}{\sum_{j=1}^m v_{i,j}x_{i,j}^*} & \text{if }\sum_{j=1}^m p_j^*x_{i,j}^* = \lambda_i\\
            \frac{1}{\tau_i} & \text{otherwise}
        \end{cases}.
    \end{gather*}
    Note that in the above construction, when $\sum_{j=1}^m p_j^*x_{i,j}^* = \lambda_i$, we must have $\sum_{j=1}^m v_{i,j}x_{i,j}^*>0$. Otherwise, if $\sum_{j=1}^m v_{i,j}x_{i,j}^*=0$, then $\sum_{j=1}^m v_{i,j} x_{i,j}^* < \tau_i \sum_{j=1}^m p_j^* x_{i,j}^*$, violating the RoS constraint.

    We first show that $(x^*,d^*,c^*)$ is a feasible solution to Program \eqref{prog: primal}. It suffices to check the third constraint as other ones are straightforward. Consider two cases.
    If $\sum_{j=1}^m p_j^*x_{i,j}^* = \lambda_i$, i.e., the budget constraint is binding, we have $d^*_i = 0 \ge 0$.
    And if $\sum_{j=1}^m p_j^*x_{i,j}^* < \lambda_i$, we know that the RoS constraint must be binding according to the analysis above, i.e., $\tau_i\sum_{j=1}^m p_j^*x_{i,j}^* = \sum_{j=1}^m v_{i,j}x_{i,j}^*$. Therefore,
    % Since $(x^*, p^*)$ is a competitive equilibrium of the auction market. It follows that given $p^*$, $x^*$ is the optimal solution of each buyer's program \eqref{IOP:primal}. Then let $\mu^*, y^*, \alpha^*$ be the optimal solution of each buyer's dual program \eqref{IOP:dual}. Based on the strong duality, we have for each buyer, $i \in [n]$, 
    % \begin{gather*}
    %     \sum_{j=1}^m v_{i, j}x_{i,j}^* = \sum_{j=1}^m \mu_{i,j}^* + y_i^* \lambda_i.
    % \end{gather*}
    % According to the complementary slackness condition, we also have
    % \begin{gather}
    %     y_i\left(\sum_{j=1}^m p_j^* x_{i,j}^* - \lambda_i \right) = 0 \label{IOP:cs_y} \\
    %     \alpha_i \left( sum_{}\right)
    % \end{gather}
    % \begin{gather} \label{IOP}
    % \end{gather}
    \begin{align*}
        d^*_i = \lambda_i \tau_i - \sum_{j=1}^m v_{i,j} x_{i,j}^* 
        = \tau_i \left[\lambda_i - \frac{1}{\tau_i} \sum_{j=1}^m v_{i,j} x_{i,j}^*\right] 
        = \tau_i \left[\lambda_i - \sum_{j=1}^m p_j^* x_{i,j}^* \right] > 0.
    \end{align*}
    % Thus, the primal solutions of $d_i$ are feasible.

    Then, we show that $(p^*,w^*)$ is a feasible solution to Program \eqref{prog: dual}.
    For any buyer $i$, if $\sum_{j=1}^m p^*_{j}x_{i,j}^* = \lambda_i$, then for any item $j$, we have
    \begin{align*}
        p_{j}^*-w^*_i v_{i,j} &= p_{j}^* - v_{i, j} \frac{\sum_{j'=1}^m p_{j'}^*x_{i,j'}^*}{\sum_{j=1}^m v_{i,j'}x_{i,j'}^*} \\
        &= p_{j}^* - v_{i,j}\frac{p_{j}^* x_{i,j}^*}{\sum_{j'=1}^m v_{i,j'}x_{i,j'}^*} - v_{i,j}\frac{\sum_{j' \neq j} p_{j'}^*x_{i,j'}^*}{\sum_{j'=1}^m v_{i,j'}x_{i,j'}^*} \\
        &\ge p_{j}^*\left(1 - \frac{v_{i,j} x_{i,j}^*}{\sum_{j'=1}^m v_{i,j'}x_{i,j'}^*} \right) \\
        &\ge 0.
    \end{align*}
    Otherwise, we have $w^*_i=\frac{1}{\tau_i}$ and $p^*_j-w^*_iv_{i,j}=p^*_j-\frac{v_{i,j}}{\tau_i}$. In this case, we can only have $p^*_j\ge\frac{v_{i,j}}{\tau_i}$, since otherwise, buyer $i$ can get a strictly higher utility by marginally increase $x^*_{i,j}$ without violating the budget and RoS constraints.
    % If $\sum_{j=1}^m v_{i,j}x_{i,j}^* = 0$, then for each item $j$, either $v_{i,j}=0$ or $x^*_{i,j}=0$. If $v_{i,j}=0$, we clearly have $p_j^*-w^*_i v_{i,j}=p^*_j> 0$. And if $v_{i,j}>0$, we must have $x^*_{i,j}=0$, thus $p^*_j-w^*_iv_{i,j}=p^*_j-\frac{v_{i,j}}{\tau_i}$ according to the construction of $w^*_i$. In this case, we can only have $p^*_j\ge\frac{v_{i,j}}{\tau_i}$, since otherwise, buyer $i$ can get a strictly higher utility by marginally increase $x^*_{i,j}$ without violating the budget and RoS constraints.
    
    % For any item $k \in [m]$ any buyer $i \in [n]$, if $\sum_{j=1}^m v_{i,j}x_{i,j}^* = 0$, meaning that $x_{i,k}^* = 0$, then we have
    % \begin{align*}
    %     p_k^*-w_i v_{i,k} &= p_k^* - \frac{v_{i,k}}{\tau_i} > 0.
    % \end{align*}
    % Otherwise, buyer $i$ buys item $k$ resulting in a higher utility, while not violating his RoS constraint.

    Finally, we show that the primal objective equals the dual objective, and based on the strong duality, we conclude that $x_{i,j}^*$ and $p_j^*$ are optimal solutions to Program \eqref{prog: primal} and \eqref{prog: dual}, respectively.
    Note that for each buyer $i$, we consider the following two cases:
    \begin{itemize}
        \item If $\sum_{j=1}^m p_j^* x_{i,j}^* = \lambda_i$, we have
        \begin{align*}
            \sum_{j=1}^m v_{i,j}x_{i, j}^* + d^*_i = \sum_{j=1}^m v_{i,j}x_{i, j}^* = \frac{\sum_{j=1}^m p_j^* x_{i,j}^*}{w^*_i} = \frac{\lambda_i}{w^*_i},
        \end{align*}
        and
        \begin{align*}
            \frac{d^*_i}{\tau_i} = 0 = \lambda_i - \sum_{j=1}^m p_j^* x_{i,j}^*.
        \end{align*}
        \item If $\sum_{j=1}^m p_j^* x_{i, j}^* < \lambda_i$,  we have
        \begin{align*}
            \sum_{j=1}^m v_{i,j}x_{i, j}^* + d^*_i = \lambda_i \tau_i = \frac{\lambda_i}{w^*_i},
        \end{align*}
        and
        \begin{align*}
            \frac{d^*_i}{\tau_i} = \frac{\lambda_i \tau_i - \sum_{j=1}^m v_{i,j}x_{i,j}^*}{\tau_i} = \lambda_i - \frac{\sum_{j=1}^m v_{i,j}x_{i,j}^*}{\tau_i}.
        \end{align*}
        In this case, the budget constraint is not binding. According to the above analysis, the RoS constraint must be binding, i.e., $\tau_i\sum_{j=1}^m p_j^*x_{i,j}^* = \sum_{j=1}^m v_{i,j}x_{i,j}^*$. It follows that 
        \begin{gather*}
            \frac{d^*_i}{\tau_i}=\lambda_i - \sum_{j=1}^m p_j^*x_{i,j}^*.
        \end{gather*}
    \end{itemize}
    In either case, we have
    \begin{align*}
        \sum_{i=1}^n \left[\lambda_i \log\left(\sum_{j=1}^m v_{i,j}x_{i,j}^* + d^*_i\right) - \frac{d^*_i}{\tau_i}\right] &= \sum_{i=1}^n \left[\lambda_i \log\left(\frac{\lambda_i}{w^*_i} \right) -  \lambda_i + \sum_{j=1}^m p_j^* x_{i,j}^*\right] \\
        &=\sum_{i=1}^n \left[\lambda_i \log(\lambda_i) - \lambda_i -\lambda_i \log(w^*_i) \right] + \sum_{j=1}^m p_j^* \left( \sum_{i=1}^n x_{i,j}^*\right) \\
        &=\sum_{i=1}^n \lambda_i\left[ \log(\lambda_i) - 1\right] -\sum_{i=1}^n\lambda_i \log(w^*_i)  + \sum_{j=1}^m p_j^*.
    \end{align*}
    This indicates that under our construction, the primal objective is the same as the dual objective. Consequently, both solutions are optimal.
\end{proof}

\subsubsection{Proof of Lemma \ref{lemma:strong_convex}}
\begin{proof}
    Observe that
    \begin{align*}
        \nabla^2 f(w) = \left[\begin{matrix}
            \frac{\lambda_1}{w_1^2} & 0 & \cdots & 0 \\
            0 & \frac{\lambda_2}{w_2^2} & \cdots & 0 \\
            \vdots & \vdots & \ddots & \vdots \\
            0 & 0 & \dots & \frac{\lambda_n}{w_n^2}
        \end{matrix} \right] \succeq \sigma \left[\begin{matrix}
            1 & 0 & \cdots & 0 \\
            0 & 1 & \cdots & 0 \\
            \vdots & \vdots & \ddots & \vdots \\
            0 & 0 & \dots & 1
        \end{matrix} \right].
    \end{align*}
    The statement in Lemma \ref{lemma:strong_convex} immediately follows.
\end{proof}

\subsubsection{Proof of Lemma \ref{lemma:unique}}
\begin{proof}

We prove this lemma in two parts: first establishing price uniqueness, then showing that despite potential non-unique allocations, all equilibria generate the same revenue.

Based on Lemma \ref{lemma:strong_convex}, we know that the objective function of Program \eqref{prog: dual} is strongly convex since the addition term $\sum_{j=1}^mp_j$ is linear, which makes Program \eqref{prog: dual} a strongly convex program. This immediately implies that the optimal solution $(p^*, w^*)$ is unique. 

While the prices are unique, the primal allocations may not be unique due to ties in buyers' weighted valuations. 
We illustrate this with an example and then prove revenue uniqueness.

Consider a situation where there exists a solution $(\hat{x}, \hat{d},\hat{c})$ with two RoS constrained buyers $i, k$, i.e., $\hat{d}_i > 0$ and $\hat{d}_k > 0$. This means

    \begin{align*}
        w_i^* = \frac{1}{\tau_i}, w_k^* = \frac{1}{\tau_k}.
    \end{align*}

Suppose for some item $j$, the weighted valuations for buyer $i$ and $k$ tie at the optimal price:
\begin{align*}
         p_j^* = \frac{v_{i,j}}{\tau_i} = \frac{v_{k,j}}{\tau_k} > w_{i'}^*v_{i',j}, \forall i' \neq i, i' \neq k.
 \end{align*}

In this case, the complementary slackness condition (Equation \eqref{eq:sop_cs_x}) implies $\hat{x}_{i',j} = 0$ for all $i' \neq i,k$. However, both buyer $i$ and buyer $k$ have the same weighted valuation equal to $p_j^*$, so the item can be allocated to either buyer (or split fractionally between them).

Specifically, consider two different solutions:

\begin{itemize}
        \item \textbf{Solution 1:} Allocate item $j$ to buyer $k$:
        \begin{align*}
            \hat{x}_{i,j} = 0, \quad \hat{x}_{k,j} = 1, \quad \hat{x}_{i',j} = 0 \text{ for } i' \neq i,k.
        \end{align*}
        Set $\hat{d}_i = \lambda_i \tau_i - \sum_{j'=1}^m v_{i,j'}\hat{x}_{i,j'}$ and $\hat{d}_k = \lambda_k \tau_k - \sum_{j'=1}^m v_{k,j'}\hat{x}_{k,j'}$.

        \item \textbf{Solution 2:} Allocate item $j$ to buyer $i$:
        \begin{align*}
            \tilde{x}_{i,j} = 1, \quad \tilde{x}_{k,j} = 0, \quad \tilde{x}_{i',j} = 0 \text{ for } i' \neq i,k.
        \end{align*}
        Set $\tilde{d}_i = \lambda_i \tau_i - \sum_{j'=1}^m v_{i,j'}\tilde{x}_{i,j'}$ and $\tilde{d}_k = \lambda_k \tau_k - \sum_{j'=1}^m v_{k,j'}\tilde{x}_{k,j'}$.
    \end{itemize}

    Both solutions satisfy the complementary slackness conditions and are therefore optimal. However, we now show that despite different allocations, the total revenue remains the same.

    The total revenue in any competitive equilibrium is given by:
    \begin{align*}
        \text{Revenue} = \sum_{j=1}^m p_j^* \left(\sum_{i=1}^n x_{i,j}\right) = \sum_{j=1}^m p_j^*,
    \end{align*}
    where the second equality follows from the market clearing condition $\sum_{i=1}^n x_{i,j} = 1$ for all items $j$ with $p_j^* > 0$ (by the complementary slackness condition in Equation \eqref{eq:sop_cs_p}).

    Since the prices $p^*$ are uniquely determined, and the revenue depends only on the sum $\sum_{j=1}^m p_j^*$, all competitive equilibria generate the same revenue, regardless of how items are allocated when ties occur. This completes the proof.

    % Based on Lemma \ref{lemma:strong_convex}, we know that the objective function of Program \eqref{prog: dual} is strongly convex since the addition term $\sum_{j=1}^mp_j$ is linear, which makes Program \eqref{prog: dual} a strongly convex program. This immediately implies that the optimal solution $(p^*, w^*)$ is unique. 
    % However, the primal allocations are not unique. The reason is as follows.
    % Suppose there are one solutions $(\hat{x}, \hat{d})$ and two buyers with $\hat{d}_k \neq 0, \hat{d}_i \neq 0$, meaning
    % \begin{align*}
    %     \beta_i^* = \frac{1}{\tau_i}, \beta_k^* &= \frac{1}{\tau_k}.
    % \end{align*}

    % For item $j$, if 
    % \begin{align*}
    %     p_j^* = \frac{v_{i,j}}{\tau_i} = \frac{v_{k,j}}{\tau_k} > \beta_{i'}^*v_{i',j}, \forall i' \neq i, i' \neq k.
    % \end{align*}
    % Thus, for $i' \neq i,k$, $\hat{x}_{i',j} = 0$.
    % Now consider 
    % $\hat{x}_{i,j} = 0, \hat{d}_i = \lambda_i \tau_i - \sum_{j=1}^m v_{i,j}\hat{x}_{i,j}$
    % $\hat{x}_{k,j} = 1, \hat{d}_k = \lambda_k \tau_k - \sum_{j=1}^m v_{k,j}\hat{x}_{k,j}$
    % Both satisfy the complementary slackness condition.
    % Meaning this is an optimal solution.

    % Now consider another solution $\tilde{x}$, with $\tilde{x}_{i,j} = 1, \tilde{d}_i = \lambda_i \tau_i - \sum_{j=1}^m v_{i,j} \tilde{x}_{i,j}$, and 
    % $\tilde{x}_{k,j}=0, \tilde{d}_k = \lambda_k \tau_k - \sum_{j=1}^m v_{k,j}\tilde{x}_{k,j}$. Both satisfy the complementary slackness condition, meaning this is an optimal solution. 
    % However, under different solutions, the revenue is the same.
\end{proof}

\subsubsection{Proof of Lemma \ref{lemma:Pareto_efficiency}}
\begin{proof}
    % Assume the optimal allocation computed by the program (\ref{prog: primal}) is $x$ and each buyer, $i\in [n]$, has utility $u_i = \sum_{j=1}^m v_i^j x_i^j$. Also, we assume $t$ is the optimal solution.
    Assume, on the contrary, that there exists another feasible allocation $\hat{x}$, such that each buyer $i$ has utility $\hat{u}_i \ge u_i$ and there exists buyer $i$ with $\hat{u}_i > u_i$. Since $\hat x$ is feasible, we must have $\hat x_{i,j}\ge 0$ and $\sum_{i=1}^n\hat x_{i,j}\le 1$. Define $\hat c_i=\sum_{j=1}^m\hat x_{i,j}v_{i,j}+d^*_i,\forall i$. Then one can easily verify that $(\hat x, d^*, \hat c)$ is also a feasible solution to Program \eqref{prog: primal}.
    However, we have
    \begin{align*}
        \sum_{i=1}^n \lambda_i \log(\hat c_i) - \frac{d^*_i}{\tau_i} =\sum_{i=1}^n \lambda_i \log(\hat{u}_i + d^*_i) - \frac{d^*_i}{\tau_i} > \sum_{i=1}^n\lambda_i \log(u_i + d^*_i) - \frac{d^*_i}{\tau_i}=\sum_{i=1}^n\lambda_i \log(c^*_i) - \frac{d^*_i}{\tau_i},
    \end{align*}
    which indicates that the solution $(\hat x, d^*, \hat c)$ achieves a strictly higher objective, contradicting the optimality of the solution $(x^*, d^*, c^*)$.
\end{proof}

\subsubsection{Proof of Theorem \ref{thm:ic}}
\begin{proof}
    The market-clearing mechanism contains two steps: each buyer submits their budget and RoS constraints to the seller, and the seller computes the corresponding allocation $x$ and prices $p$ according to Program \eqref{prog: primal} and Program \eqref{prog: dual}. Therefore, to prove Theorem \ref{thm:ic}, it suffices to show that fixing other buyers reporting constraints $(\lambda_{-i}, \tau_{-i})$ truthfully, buyer $i$ maximizes their utility by reporting the financial constraints $(\lambda_i, \tau_i)$ truthfully.

    Let $(x^*, p^*)$ and dual variables $w^*$ denote the optimal solutions when all buyers report truthfully. Let $(\hat{x}, \hat{p})$ and dual variable $\hat{w}$ denote the solution when buyer $i$ reports $(\hat{\lambda}, \hat{\tau})$ while others reporting financial constraints truthfully.

    \textbf{Part 1: Under-reporting Constraints, i.e., $\hat{\lambda}_i \le \lambda_i, \hat{\tau}_i \ge \tau_i$.}

    \begin{itemize}
        \item \textbf{Step 1: Monotonicity of the Dual Variable $\hat{w}_i \le w_i^*$}.
        
        We compare the dual variable of buyer $i$ derived from the dual program (\ref{prog: dual}).
        
        \textit{Case 1: $w_i^* \le \frac{1}{\hat{\tau}_i} \le \frac{1}{\tau_i}.$} Since $(p^*, w^*)$ and $(\hat{p}, \hat{w})$ are optimal solutions to their respective dual programs, we can utilize the optimality condition. Specifically, comparing the dual objectives:
        \begin{align}\label{eq:ic_dual_1}
            \sum_{j=1}^m p_j^* - \sum_{i=1}^n \lambda_i \log(w_i^*) \le \sum_{j=1}^m \hat {p}_j - \sum_{i=1}^n \lambda_i \log (\hat{w}_i)
        \end{align}
        \begin{align}\label{eq:ic_dual_2}
            \sum_{j=1}^m \hat{p}_j - \sum_{k\neq i, k = 1}^n \lambda_k \log(\hat{w}_k) - \hat{\lambda}_i\log(\hat{w}_i) \le \sum_{j=1}^m p_j^* - \sum_{k\neq i, k = 1}^n \lambda_k \log(w_k^*) - \hat{\lambda}_i\log(w_i^*) 
        \end{align}
        Combining (\ref{eq:ic_dual_1}) and (\ref{eq:ic_dual_2}), it follows that
        \begin{align}\label{eq:ic_dual_comb}
            (\hat{\lambda}_i - \lambda_i) \log(w_i^*)\le (\hat{\lambda}_i - \lambda_i) \log(\hat{w}_i)
        \end{align}
        Since buyer $i$ under-reports the budget $\hat{\lambda}_i - \lambda_i \le 0$ and the logarithmic function is strictly increasing, inequality \eqref{eq:ic_dual_comb} implies that $\hat{w}_i \le w_i^*$.
        
        \textit{Case 2: $w_i^* > \frac{1}{\hat{\tau}_i}$}.
        By duality feasibility in the misreported instance, we immediately have $\hat{w}_i \le w_i^*$.

        In all cases, the misreporting buyer's dual multiplier decreases: $\hat{w}_i \le w_i^*$.

        \item \textbf{Step 2: Monotonicity of Prices $\hat{p}_j \le p_j^*, \forall j \in [m]$.} 
        
        Recall that the market clearing price is $\hat{p}_j  = \max_{k} \left\{\hat{w}_k v_{k,j} \right\}$ and $p_j^* = \max_{k} \left\{w_k^* v_{k,j} \right\}$. In step 1, we have established that for buyer $i$, we have $\hat{w}_i \le w_i^*$. We now show that no other buyer $k\neq i$ increases their dual multiplier, i.e., $\hat{w}_k \le w_k^*, \forall k \neq i$.

        \textit{RoS Constrained Buyers:} For any buyer $k \neq i$, where the RoS constraint is binding when all buyers reporting truthfully, $w_k^* = \frac{1}{\tau_k}$. Since other buyers' reporting constraints are not changed, thus $\hat{w}_k \le w_k^*$.

        \textit{Budget  Constrained  Buyers:} For any buyer $k \neq i$ whose budget is binding, the stationarity condition implies $w_k^* = \frac{\lambda_k}{\sum_{j=1}^m v_{k,j}x_{k,j}^*}$. Suppose, for the sake of contradiction, that buyer $k$ increases their multiplier: $\hat{w}_k > w_k^*$. Given that $\hat{w}_k = \frac{\hat{\lambda}_k}{\sum_{j=1}^m v_{k,j} \hat{x}_{k, j}}$, an increase in $\hat{w}_k$ implies a strict decrease in total utility: $\sum_{j=1}^m v_{k,j}\hat{x}_{k,j} < \sum_{j=1}^m v_{k,j}x_{k,j}^*$.

        However, consider the market dynamics: buyer $i$ has reduced their competitive pressure ($\hat{w}_i \le w_i^*$), and all other buyers $k$ are assumed to satisfy the budget constraint. In a Fisher market with gross substitute valuations, if one agent decreases their competitiveness (lowers $w_i$), the remaining agents face a strictly more favorable (or neutral) price environment. Therefore, their resulting utility cannot decrease (i.e., we must have $\sum_{j=1}^m v_{k,j}\hat{x}_{k,j} \ge \sum_{j=1}^m v_{k,j}x_{k,j}^*$).
        This contradicts the implication that $\sum_{j=1}^m v_{k,j}\hat{x}_{k,j} < \sum_{j=1}^m v_{k,j}x_{k,j}^*$. Thus, the assumption is false, and we must have $\hat{w}_k \le w_k^*$.

        Consequently, $\hat{w}_k \le w_k^*$ for all $k\in [n]$. And the market clearing price must be non-increasing: $\hat{p}_j \le p_j^*, \forall j \in [m]$.

        \item \textbf{Step 3: Changes of Utility.} 
        
        From previous steps, we have shown that prices decrease $\hat{p}_j \le p_j^*, \forall j\in [m]$, and $\hat{p}, \hat{x}$ forms a competitive equilibrium according to Lemma \ref{lemma:SOP_IOP}. Combining the fact that the budget $\lambda_{-i}$ and target RoS $\tau_{-i}$ of other buyers expect for buyer $i$ are fixed. Thus $x_{k, \cdot}^*$ are also feasible solutions of Program (\ref{IOP:primal}) when prices are $\hat{p}_j$. Thus, the utility of buyer $k \neq i$ is non-decreasing, i.e., 
        \begin{align*}
            \hat{u}_k = \sum_{j=1}^m v_{k,j}\hat{x}_{k,j} \ge \sum_{j=1}^m v_{k,j}x_{k,j}^* \ge u_k^*, \forall k \neq i, k\in [n].
        \end{align*}
        Combining with Lemma \ref{lemma:Pareto_efficiency}, we have
        \begin{align*}
            \hat{u}_i = \sum_{j=1}^m v_{i,j}\hat{x}_{i,j} \le \sum_{j=1}^m v_{i,j}x_{i,j}^* = u_i^*.
        \end{align*}
        It implies buyer $i$ cannot get higher utility via under-reporting the financial constraints.
   \end{itemize}

   \textbf{Part 2: Over-reporting Constraints, i.e., $\hat{\lambda}_i > \lambda_i, \hat{\tau}_i < \tau_i$.}
   
    According to Lemma \ref{lemma:SOP_IOP}, we know that one of the following conditions holds.
    \begin{itemize}
        \item \textit{Budget constraint is binding.}
        \begin{align*}
            \sum_{j=1}^m \hat{p}_j \hat{x}_{i,j} = \hat{\lambda}_i > \lambda_i.
        \end{align*}
        \item \textit{RoS constraint is binding.}
        \begin{align*}
            \sum_{j=1}^m v_{i,j}\hat{x}_{i,j} = \hat{\tau}_i \sum_{j=1}^m \hat{p}_j \hat{x}_{i,j} < \tau_i \sum_{j=1}^m \hat{p}_j \hat{x}_{i,j}.
        \end{align*}
    \end{itemize}
    In either cases, if the misreporting alters the allocation, it results in an outcome that violates buyer $i$'s true constraints. By definition, the utility of an infeasible outcome is $-\infty$. Thus, buyer $i$ cannot gain by over-reporting.

    \textbf{Conclusion:} Since neither under-reporting nor over-reporting increases utility, truthfully reporting financial constraints is a dominant strategy.

    The proof is completed.
\end{proof}

\subsubsection{Proof of Theorem \ref{thm:revenue_approximation}}
\begin{proof}
We use $\text{Rev}^{FB}$ as an intermediary and show:
\begin{gather*}
    \frac{\text{Rev}^*}{\text{Rev}^{FB}}\ge \frac{1}{2}.
\end{gather*}
And the stated result immediately follows as $\text{Rev}^{FB}$ is an upper bound of $\overline{\text{Rev}}$.
% We calculate the First-Best revenue using the following primal program, where $t_i$ represents the total payment from buyer $i$.
% \begin{maxi}|l|[2]{x,t}{\sum_{i=1}^n t_i}{\label{prog:fb_primal}}{}
% \addConstraint{t_i}{\le \lambda_i}{\quad\forall i \in [n]}
% \addConstraint{t_i}{\le \frac{1}{\tau_i} \sum_{j=1}^m v_{i,j} x_{i,j}}{\quad\forall i \in [n]}
% \addConstraint{\sum_{i=1}^n x_{i,j}}{\le 1}{\quad \forall j \in [m]}
% \addConstraint{x_{i,j}}{\ge 0}{\quad\forall i \in [n], \forall j \in [m]}.
% \end{maxi}

% The corresponding dual program is:

% \begin{mini}|l|[2]{\alpha, \beta, p}{\sum_{i=1}^n \alpha_i \lambda_i + \sum_{j=1}^m p_j}{\label{prog:fb_dual}}{}
% \addConstraint{\alpha_i + \tau_i \beta_i}{\ge 1}{\quad \forall i \in [n]}
% \addConstraint{p_j}{\ge \beta_i v_{i, j}}{\quad \forall i \in [n], j \in [m]} 
% \addConstraint{\alpha_i}{\ge 0}{\quad \forall i \in [n]}
% \addConstraint{\beta_i}{\ge 0}{\quad \forall i \in [n]}
% \end{mini}

We first show that in any optimal solution $(\beta,\alpha,p)$ to Program \eqref{prog:fb_dual}, it must be that $0<\beta_i\le \frac{1}{\tau_i},\forall i$. Otherwise, if $\beta_i=0$ for some buyer $i$, then this buyer cannot win any item. By increasing $\beta_i$ to a sufficiently small positive number, we can lower $\alpha_i$ without affecting the allocation and $p_j,\forall j$, thus decreasing the objective. And if there exists $i$ with $\beta_i>\frac{1}{\tau_i}$, we must have $\alpha_i=0,\forall i$ and $p_j=\max\{\beta_jv_{i,j}\},\forall j$ in order to minimize the objective. In this case, we can further decrease the objective by changing $\beta_i$ to $\frac{1}{\tau_i}$ to decrease $p_j$ without affecting $\alpha_i$. It follows that $\beta^{FB}_i>0,\forall i$ since $\beta^{FB}$ is optimal.

We claim that $\mathcal{B}^*\subseteq \mathcal{B}^{FB}$. To prove the claim, we categorize the buyers into two groups: 
    \begin{gather*}
        G_1=\left\{i\,\middle|\, \beta_i^{FB}\le w^*_i\right\}\quad \text{and} \quad G_2=\left\{i\,\middle|\, \beta_i^{FB}> w^*_i\right\}.
    \end{gather*}
According to the definition of $\mathcal{R}^*$, any buyer $i$ in $\mathcal{R}^*$ must have $w^*_i=\frac{1}{\tau_i}$. It follows that any buyer in $G_2$ cannot be in $\mathcal{R}^*$ as $\beta^{FB}_i$ must satisfy $\beta^{FB}_i\le \frac{1}{\tau_i}$. Therefore, buyers in $G_2$ must be all budget-constrained buyers under the market clearing mechanism, and all RoS-constrained buyers are in $G_1$. This indicates that buyers in $G_2$ should obtain more items under $x^{FB}$ than under $x^*$ since both these allocations follow first-price auction rules, i.e.,
    \begin{gather}
        \sum_{i\in G_2}x^{FB}_{i,j}\ge \sum_{i\in G_2}x^{*}_{i,j}, \forall j \label{eq:additional item} .
    \end{gather}
If there exists buyer $i\in \mathcal{B}^*\cap \mathcal{R}^{FB}$, then by definition, we have $t^{FB}_i<\lambda_i$. Since buyers in $G_2$ win more items under $x^{FB}$ than under $x^*$, we can modify $x^{FB}$ by using the same allocation as in $x^*$ for buyers in $G_2$ and giving the extra items to anyone. In this case, we can charge all buyers in $\mathcal{B}^*$ their full budgets without violating their budget and RoS constraints. This clearly leads to a higher revenue there is already a buyer $i$ with $t^{FB}_i<\lambda_i$, contradicting to the optimality of $x^{FB}$ and $t^{FB}$. Therefore, we must have $\mathcal{B}^*\cap \mathcal{R}^{FB}=\emptyset$, i.e., $\mathcal{B}^*\subseteq \mathcal{B}^{FB}$.

% The complementary slackness conditions for the optimal solutions $(x^{FB}, t^{FB})$ and $(\alpha^{FB}, \beta^{FB}, p^{FB})$ are:
% \begin{gather}
%     x_{i, j}^{FB}\left(p_j^{FB} - \beta_i^{FB} v_{i, j} \right) = 0 \label{eq: kkt_x} \\
%     \alpha_i^{FB} \left(t_i^{FB} - \lambda_i \right) = 0 \label{eq:kkt_alpha} \\
%     \beta_i^{FB} \left(t_i^{FB} - \frac{1}{\tau_i}\sum_{j=1}^m v_{i, j}x_{i,j}^{FB} \right) = 0 \label{eq:kkt_beta} \\
%     p_j^{FB} \left(1 - \sum_{i=1}^n x_{i,j}^{FB} \right) =  0 \label{eq:kkt_p}
% \end{gather}

    For convenience, we define
    \begin{gather*}
        \mathcal{J}_{\mathcal{B}^*}^{FB} = \left\{j \big| j\in [m], x_{i,j}^{FB} > 0, \forall i \in \mathcal{B}^* \right\},\\
        \mathcal{J}_{\mathcal{R}^*}^{FB} = \left\{j \big| j\in [m], x_{i,j}^{FB} > 0, \forall i \in \mathcal{R}^* \right\}
    \end{gather*}
    to be the set of items allocated to buyers in $\mathcal{B}^*$ under the First Best solution and the set of items allocated to buyers in $\mathcal{R}^*$ under the First Best solution, respectively.
    
    Since $\mathcal{B}^* \cup \mathcal{R}^* = [n]$, meaning $\mathcal{J}_{\mathcal{B}^*}^{FB} \cup \mathcal{J}_{\mathcal{R}^*}^{FB} = [m]$.

    Similarly, define
    \begin{gather*}
        \mathcal{J}_{\mathcal{B}^*} = \left\{j \big| j\in [m], x_{i,j}^{*} > 0, \forall i \in \mathcal{B}^* \right\},\\
        \mathcal{J}_{\mathcal{R}^*} = \left\{j \big| j\in [m], x_{i,j}^{*} > 0, \forall i \in \mathcal{R}^* \right\}
    \end{gather*}
    to be the set of items allocated to buyers in $\mathcal{B}$ under the market-clearing mechanism and the set of items allocated to buyers in $\mathcal{R}$ under the market-clearing mechanism, respectively.

    We decompose the first-best revenue:
    \begin{align} 
        \begin{aligned}
        \text{Rev}^{FB} &= \sum_{i \in \mathcal{B}^*} t_i^{FB} + \sum_{i \in \mathcal{R}^*} t_i^{FB} \\
        &=\sum_{i\in\mathcal{B}^*} \lambda_i + \sum_{i\in \mathcal{R}^*} t_i^{FB}.
        \end{aligned}
    \end{align}
    The second term (revenue from $\mathcal{R}$) can be expanded by splitting items based on whether they belong to $\mathcal{J}_{\mathcal{R}}$ or not:
    \begin{align*}
        \sum_{i \in \mathcal{R}^*} t_i^{FB} = \sum_{j \in \mathcal{J}_{\mathcal{R}^*}} \sum_{i\in \mathcal{R}^*} \frac{v_{i,j}}{\tau_i} x_{i,j}^{FB} + \sum_{j \in \mathcal{J}_{\mathcal{R}^*}^{FB} \backslash \mathcal{J}_{\mathcal{R}^*}} \sum_{i\in \mathcal{R}^*} \frac{v_{i,j}}{\tau_i} x_{i,j}^{FB}.
    \end{align*}

    Under the market-clear mechanism, for any item $j \in \mathcal{J}_{\mathcal{R}^*}$, if it is allocated to buyer $i \in \mathcal{R}^*$, it must have
    \begin{gather*}
        \frac{v_{i,j}}{\tau_i} = p_j^*.
    \end{gather*}
    Thus, the first term can be bounded by
    \begin{gather*}
        \sum_{j \in \mathcal{J}_{\mathcal{R}^*}} \sum_{i\in \mathcal{R}^*} \frac{v_{i,j}}{\tau_i} x_{i,j}^{FB} \le \sum_{j \in \mathcal{J}_{\mathcal{R}^*}} p_j^*.
    \end{gather*}
    The second term means that under the market-clearing mechanism, the item is allocated to buyers in $\mathcal{B}^*$, while under the First-Best solution, the item is allocated to buyers in $\mathcal{R}^*$. It implies that for item $j \in \mathcal{J}_{\mathcal{R}^*}^{FB} \backslash \mathcal{J}_{\mathcal{R}^*}$
    \begin{align*}
        p_j^* > \frac{v_{i,j}}{\tau_i}, \forall i \in \mathcal{R}^*,
    \end{align*}
    Otherwise, this item must be allocated to buyers in $\mathcal{R}^*$ under the market-clearing mechanism. It follows that
    \begin{align*}
        \sum_{j \in \mathcal{J}_{\mathcal{R}^*}^{FB} \backslash \mathcal{J}_{\mathcal{R}^*}} \sum_{i\in \mathcal{R}^*} \frac{v_{i,j}}{\tau_i} x_{i,j}^{FB} \le \sum_{j \in \mathcal{J}_{\mathcal{R}^*}^{FB} \backslash \mathcal{J}_{\mathcal{R}^*}} p_j^*.
    \end{align*}
    Note that for item $j \in \mathcal{J}_{\mathcal{R}^*}^{FB} \backslash \mathcal{J}_{\mathcal{R}^*}$, it is allocated to buyers in $\mathcal{B}^*$, meaning that
    \begin{align*}
        \sum_{j \in \mathcal{J}_{\mathcal{R}^*}^{FB} \backslash \mathcal{J}_{\mathcal{R}^*}} p_j^* \le \sum_{i\in \mathcal{B}^*} \lambda_i \le \text{Rev}^*.
    \end{align*}
    Under the market-clearing mechanism, we have
    \begin{align*}
        \text{Rev}^* &= \sum_{j\in \mathcal{J}_{\mathcal{B}^*}}p_j^* + \sum_{j\in \mathcal{J}_{\mathcal{R}^*}} p_j^* \\
        &= \sum_{i \in \mathcal{B}^*} \lambda_i + \sum_{j \in \mathcal{J}_{\mathcal{R}^*}} p_j^*.
    \end{align*}
    Combining these inequalities above, we immediately have
    \begin{align*}
        \text{Rev}^{FB} \le \sum_{i\in\mathcal{B}^*} \lambda_i + \sum_{j \in \mathcal{J}_{\mathcal{R}^*}} p_j^* + \text{Rev}^* = 2 \text{Rev}^*.
    \end{align*}
    
\end{proof}

\subsection{Omitted Proofs in Section \ref{sec:computation}}

\subsubsection{Proof of Theorem \ref{thm:auxiliary_regret}}

Before we analyze the bounds for the above regret, we further define the following auxiliary functions, which are used in our analysis:
\begin{definition}[Auxiliary Functions]\label{def:auxiliary_function}
    For any auction $j$, define $\mathcal{Z}^j: \mathbb{R}_{+}^n \times W \mapsto \mathbb{R}$ as 
    \begin{equation*}
        \mathcal{Z}^j(s, \omega) = \left \langle s, \omega - \omega^1\right\rangle  - \sum_{i=1}^n j \rho_i \log(\omega_i),
    \end{equation*}
    where $s\in \mathbb{R}^n$ is an $n$-dimensional vector, and $\langle \cdot,\cdot\rangle$ denotes the inner product.
    
    Then define the auxiliary value function $V^j(s)$ and the minimizer $\Omega^j(s)$ as 
    \begin{gather*}
    V^j(s) = \inf_{\omega \in W} \left\{\mathcal{Z}^j(s,\omega) \right \},\quad
    \Omega^j(s) = \arginf_{\omega \in W}\left\{\mathcal{Z}^j(s,\omega) \right \}.
	\end{gather*}
\end{definition}

The auxiliary value function $V^j(s)$ is concave in $s$ and its gradient is
\begin{gather}\label{eq:v i}
\nabla V^j(s) = \Omega^j(s)-\omega^1.
\end{gather}
It is Lipschitz continuous due to the strong convexity of the regularizer.
It is easy to see that the dimensions of $\mathcal{Z}^j(s,\omega)$ are independent. Thus $\mathcal{Z}^j$ decomposes across buyers:
\begin{gather}
\mathcal{Z}^j(s,\omega)=\sum_{i=1}^n\mathcal{Z}^j_i(s_i,\omega_i)=\sum_{i=1}^n\left[s_i(\omega_i,\omega^1_i)-j \rho_i \log(\omega_i)\right],\nonumber
\end{gather}

The minimizer $\Omega_i^j(s_i)$ has closed form:
\begin{align}
	\Omega^j_i(s_i)=\arginf_{\omega_i \in W_i}\left\{\mathcal{Z}^j_i(s_i, \omega_i)\right\}=
    \begin{cases}
        \underline{w}_i& \text{if } s_i\ge \frac{j\rho_i}{\underline{w}_i}\\
        \frac{j\rho_i}{s_i}&\text{if } j\rho_i\tau_i<s_i\le \frac{j\rho_i}{\underline{w}_i}\\
        \frac{1}{\tau_i}&\text{if } s_i< j\rho_i\tau_i
    \end{cases}.\label{eq:omega i}
\end{align}

The following Lemma \ref{lemma:temp} provides some properties of the above functions.
\begin{lemma}\label{lemma:temp}
% Define 
%     \begin{align*}
%         V_j(s) = \min_{\omega} \left\{\left \langle s, \omega - \omega^0\right\rangle \} - \sum_{i=1}^n j \frac{\lambda_i}{m} \log(\omega_i) \right \},
%     \end{align*}
% and 
%     \begin{align*}
%         \pi_j(s) = \argmin_{\omega}\left\{\left \langle s, \omega - \omega^0\right\rangle \} - \sum_{i=1}^n j \frac{\lambda_i}{m} \log(\omega_i) \right \}.
%     \end{align*}
    The function $V^j(s)$ is concave and has the following gradient:
    \begin{align*}
        \nabla V^j(s) = \Omega^j(s) - \omega^1.
    \end{align*}
    Moreover, the gradient is Lipschitz continuous with constant $\frac{m}{j \sigma }=\frac{1}{j \min_{i} \rho_i \tau_i^2}$, i.e., 
    \begin{align*}
        \left\|\nabla V^j(s)  - \nabla V^j(s') \right\|_2 \le \frac{m}{j \sigma } \| s - s' \|_2, \forall s, s'\in \mathbb{R}^n,
    \end{align*}
    where $\|\cdot\|_2$ is the Euclidean norm.
\end{lemma}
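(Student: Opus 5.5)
The argument is the standard Fenchel-conjugate one used in the RDA analysis of \citet{xiao2010dual}. We view $V^j$ as (minus) the convex conjugate of the strongly convex function $\omega\mapsto -\sum_i j\rho_i\log(\omega_i) + \iota_W(\omega)$, evaluated at $-s$, up to the affine shift by $\omega^1$.

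\textbf{Concavity.} For each fixed $\omega\in W$, the map $s\mapsto \mathcal{Z}^j(s,\omega)$ is affine in $s$. Hence $V^j(s)=\inf_{\omega\in W}\mathcal{Z}^j(s,\omega)$ is a pointwise infimum of affine functions and is therefore concave. It is also finite everywhere, since $W$ is compact and bounded away from zero: $\underline{w}_i>0$.

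\textbf{Gradient.} The function $\omega\mapsto\mathcal{Z}^j(s,\omega)$ is strictly (indeed strongly) convex on the compact convex set $W$. So the minimizer $\Omega^j(s)$ is unique, and by \eqref{eq:omega i} it is continuous in $s$. Danskin's theorem (the envelope theorem for a unique minimizer over a compact set) then gives that $V^j$ is differentiable with $\nabla V^j(s)=\nabla_s\mathcal{Z}^j(s,\omega)\big|_{\omega=\Omega^j(s)}=\Omega^j(s)-\omega^1$. Alternatively, this can be checked coordinatewise: $V^j$ decomposes as $\sum_i V^j_i(s_i)$, and differentiating each of the three branches of \eqref{eq:omega i} gives $\Omega_i^j(s_i)-\omega_i^1$ on each branch, matching at the breakpoints.

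\textbf{Lipschitz gradient.} This is the only step needing care. Take $s,s'$ with minimizers $\omega=\Omega^j(s)$ and $\omega'=\Omega^j(s')$. By Lemma \ref{lemma:strong_convex}, applied with $\lambda_i$ replaced by $j\rho_i$, the regularizer $-\sum_i j\rho_i\log\omega_i$ is $\sigma_j$-strongly convex on $W$ with $\sigma_j=j\min_i\rho_i\tau_i^2 = j\sigma/m$, using $\lambda_i=m\rho_i$. So $\mathcal{Z}^j(s,\cdot)$ is $\sigma_j$-strongly convex. The first-order optimality conditions over $W$ give
$\langle s+\nabla h_j(\omega),\,\omega'-\omega\rangle\ge0$ and $\langle s'+\nabla h_j(\omega'),\,\omega-\omega'\rangle\ge0$.
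Adding them yields $\langle s-s',\,\omega'-\omega\rangle\ge\langle\nabla h_j(\omega')-\nabla h_j(\omega),\,\omega'-\omega\rangle\ge\sigma_j\|\omega'-\omega\|_2^2$. Cauchy--Schwarz then gives $\|\omega-\omega'\|_2\le\|s-s'\|_2/\sigma_j=\frac{m}{j\sigma}\|s-s'\|_2$, and this is exactly the Lipschitz bound on $\nabla V^j$.

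The bound can be sanity-checked directly from \eqref{eq:omega i}. On the middle branch, $|d\Omega_i^j/ds_i| = j\rho_i/s_i^2\le j\rho_i/(j\rho_i\tau_i)^2 = 1/(j\rho_i\tau_i^2)$, and the slope is zero on the other branches.

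The main obstacle is making sure strong convexity is applied with the scaled coefficients $j\rho_i$ rather than $\lambda_i$. That scaling is what produces the $1/j$ factor; the remaining steps are routine.
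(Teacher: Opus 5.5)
Your proof is correct. The concavity step (an infimum of functions affine in $s$) is the same idea as the paper's min-of-sums inequality. Your gradient step via Danskin also matches the paper, which differentiates the three branches of \eqref{eq:omega i} coordinatewise and cites the envelope theorem in a footnote.

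The Lipschitz step is where you genuinely differ. The paper works only with the separable closed form: it bounds the slope of each $\Omega^j_i$ by $1/(j\rho_i\tau_i^2)$ on the middle branch, which is exactly your sanity check, and then aggregates over coordinates. Your main argument instead adds the two first-order variational inequalities and uses $\sigma_j$-strong monotonicity of $\nabla h_j$.

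The paper's route is more elementary, but it depends on $W$ being a box and on having an explicit minimizer. Yours is the general argument of \citet{xiao2010dual}: it needs no closed form and delivers the constant $1/\sigma_j=m/(j\sigma)$ in one line.

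Your route also avoids two slips in the paper's write-up. First, the paper calls the strong-convexity parameter of the scaled regularizer $\frac{m}{\sigma j}$, which is the reciprocal; your $\sigma_j=j\sigma/m$ is the right value. Second, the paper's final chain bounds $\sum_i[\Omega^j_i(s_i)-\Omega^j_i(s_i')]^2$ by $\sum_i \frac{1}{j\rho_i\tau_i^2}\|s-s'\|_2^2$. As written, that uses the wrong power of the per-coordinate constant and picks up a factor $n$, so it does not literally yield the stated constant. The correct aggregation is $\sum_i L_i^2(s_i-s_i')^2\le(\max_i L_i)^2\|s-s'\|_2^2$ with $L_i=1/(j\rho_i\tau_i^2)$, and your argument gives this bound directly.
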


To prove the auxiliary regret, we first split the auxiliary regret in Definition \ref{def:auxiliary_regret} into two terms, as shown in Lemma \ref{lemma:regret_decomp}

\begin{lemma}[Regret Decomposition]\label{lemma:regret_decomp}
The objective regret $R^m_{obj}$ is bounded by the following two term:
    \begin{equation}
    R^m_{obj} \le \sum_{j=1}^m \left(\left \langle g^j(\omega^j),\omega^j - \omega^1\right \rangle -\sum_{i=1}^n \rho_i\log(\omega^j_i) \right) - V^{m}(m\bar{g}^m).
    \end{equation}
\end{lemma}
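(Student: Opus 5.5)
The bound follows the standard RDA argument of \citet{xiao2010dual}, adapted to the auxiliary functions of Definition~\ref{def:auxiliary_function}. The only thing to prove is that the comparator term
\[
\sum_{j=1}^m \Bigl[\hat{p}_j(w^*) - \sum_{i=1}^n \rho_i \log(w^*_i)\Bigr]
\]
is bounded below by $V^m(m\bar{g}^m)$ plus a residual that can be dropped.

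\textbf{Step 1: linearize the price term.} Each $\hat{p}_j(\omega)=\max_i \omega_i v_{i,j}$ is convex and positively homogeneous of degree one in $\omega$. Its subgradient at $\omega^j$ is $g^j(\omega^j)$, with components $v_{i,j}x_{i,j}(\omega^j)$. Two facts follow:
\begin{itemize}
\item Euler's identity gives $\hat{p}_j(\omega^j)=\langle g^j(\omega^j),\omega^j\rangle$.
\item The subgradient inequality gives $\hat{p}_j(w^*)\ge \langle g^j(\omega^j), w^*\rangle$.
\end{itemize}
Subtracting the common quantity $\langle g^j(\omega^j),\omega^1\rangle$ from both the online and comparator sums, the online sum becomes exactly
\[
\sum_{j=1}^m\Bigl(\langle g^j(\omega^j),\omega^j-\omega^1\rangle-\sum_{i=1}^n\rho_i\log\omega^j_i\Bigr).
\]
The comparator sum is bounded below by
\[
\langle m\bar{g}^m, w^*-\omega^1\rangle - m\sum_{i=1}^n\rho_i\log w^*_i = \mathcal{Z}^m(m\bar{g}^m,w^*),
\]
using $\sum_{j=1}^m g^j(\omega^j)=m\bar{g}^m$, which follows from the recursion \eqref{eq:dual_average}.

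\textbf{Step 2: pass to the value function.} I need $w^*\in W$, where $W=\prod_i W_i$ and $W_i=[\underline{w}_i,1/\tau_i]$. This holds by the feasibility constraint $w_i^*\le 1/\tau_i$ together with Lemma~\ref{lemma:beta_lower_bound}. Note that with $\lambda_i=m\rho_i$ the lower bound there, $\lambda_i/(m\bar v)$, equals $\rho_i/\bar v$, which matches the online $\underline{w}_i$. Since $V^m(s)=\inf_{\omega\in W}\mathcal{Z}^m(s,\omega)$, we get $\mathcal{Z}^m(m\bar{g}^m,w^*)\ge V^m(m\bar{g}^m)$. Combining this with Step 1 yields the claimed inequality.

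\textbf{Main obstacle.} The only delicate point is the subgradient identity when ties occur in some $\operatorname{argmax}_k\{\omega^j_k v_{k,j}\}$. In that case one must use the subgradient actually produced by the allocation rule: a convex combination of the tied $e_iv_{i,j}$ still satisfies both Euler's identity and the subgradient inequality. Under Assumption~\ref{assumption:continuous_valuation}, ties occur with probability zero anyway. Everything else is bookkeeping of the $\omega^1$ shift, which cancels on both sides.
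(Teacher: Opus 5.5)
Your proposal is correct and follows essentially the same route as the paper. Your Euler identity combined with the bound $\hat{p}_j(w^*)\ge\langle g^j(\omega^j),w^*\rangle$ is just the paper's single subgradient inequality $\hat{p}_j(\omega^j)-\hat{p}_j(w^*)\le\langle g^j(\omega^j),\omega^j-w^*\rangle$ split into two pieces; from there both arguments shift by $\omega^1$, use $\sum_{j} g^j(\omega^j)=m\bar{g}^m$, and bound the $w^*$-term below by the infimum defining $V^m(m\bar{g}^m)$, and your explicit check that $w^*\in W$ via Lemma~\ref{lemma:beta_lower_bound} (with $\lambda_i=m\rho_i$) fills in a step the paper uses without comment.
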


\begin{lemma}\label{lemma:recursive_bound}
For any $j \ge 2$, the single-step progress satisfies:
\begin{equation}\label{eq:regret_bound_1}
\langle g^j(\omega^j), \omega^j - \omega^1 \rangle - \sum_{i=1}^n \rho_i \log(\omega_i^{j+1}) \le V^j(j \bar{g}^j) - V^{j-1}((j-1)\bar{g}^{j-1}) + \frac{m}{2\sigma j} \left \|g^j(\omega^j)\right\|_2^2.
\end{equation}
\end{lemma}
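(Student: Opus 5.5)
We follow the standard RDA analysis of \citet{xiao2010dual}, specialized to the scaling $\mathcal{Z}^j$ of Definition \ref{def:auxiliary_function}. The key identity is that the algorithm's iterate is exactly the minimizer of the auxiliary function evaluated at the cumulative subgradient. Since $\omega_i^{j+1}=\text{Proj}_{W_i}(\rho_i/\bar g_i^j)$ and the closed form \eqref{eq:omega i} gives $\Omega_i^j(j\bar g_i^j)=\text{Proj}_{W_i}(j\rho_i/(j\bar g_i^j))$, we have $\omega^{j+1}=\Omega^j(j\bar g^j)$. With $s^j:=j\bar g^j=s^{j-1}+g^j(\omega^j)$, the plan is to bound $V^j(s^j)$ from above by a second-order expansion around $s^{j-1}$, and $V^{j-1}(s^{j-1})$ from below by its definition.

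\emph{Upper bound via smoothness.} By Lemma \ref{lemma:temp}, $V^j$ is concave with $\nabla V^j=\Omega^j-\omega^1$, and this gradient is $\frac{m}{j\sigma}$-Lipschitz. The descent-type inequality for concave smooth functions therefore gives
\begin{align*}
V^j(s^j)\ge V^j(s^{j-1})+\langle \Omega^j(s^{j-1})-\omega^1, g^j(\omega^j)\rangle-\frac{m}{2j\sigma}\|g^j(\omega^j)\|_2^2 .
\end{align*}
This inequality only serves the final bound if the inner product involves $\omega^j$ rather than $\Omega^j(s^{j-1})$. We can avoid comparing the two by instead using the definition of $V^j(s^j)$ evaluated at its own minimizer $\omega^{j+1}$. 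This yields the exact identity
\begin{align*}
V^j(s^j)=\langle s^{j-1},\omega^{j+1}-\omega^1\rangle-j\textstyle\sum_i\rho_i\log\omega_i^{j+1}+\langle g^j,\omega^{j+1}-\omega^1\rangle .
\end{align*}
The first two terms equal $\mathcal{Z}^{j-1}(s^{j-1},\omega^{j+1})-\sum_i\rho_i\log\omega_i^{j+1}$, and $\mathcal{Z}^{j-1}(s^{j-1},\omega^{j+1})\ge V^{j-1}(s^{j-1})$. This step is what generates the $-\sum_i\rho_i\log\omega_i^{j+1}$ term in \eqref{eq:regret_bound_1}. Combining, we get
\begin{align*}
V^j(s^j)-V^{j-1}(s^{j-1})\ge \langle g^j,\omega^{j+1}-\omega^1\rangle-\textstyle\sum_i\rho_i\log\omega_i^{j+1}.
\end{align*}

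\emph{Replacing $\omega^{j+1}$ by $\omega^j$ in the linear term.} It remains to show $\langle g^j,\omega^j-\omega^{j+1}\rangle\le\frac{m}{2\sigma j}\|g^j\|_2^2$, which is the main obstacle. Since $\omega^j=\Omega^{j-1}(s^{j-1})$ and $\omega^{j+1}=\Omega^j(s^j)$, the regularizer weights differ ($j-1$ versus $j$) as well as the arguments. The first plan is to exploit the separable closed form per coordinate: $\omega_i^j=\text{Proj}(\rho_i(j-1)/s_i^{j-1})$ and $\omega_i^{j+1}=\text{Proj}(\rho_i j/(s_i^{j-1}+g_i^j))$. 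Because $g_i^j\ge0$ and appears with coefficient $g_i^j$ in the linear term, only the case $\omega_i^{j}>\omega_i^{j+1}$ matters. That decrease is controlled by strong convexity of $\omega\mapsto j\rho_i(-\log\omega)$ on $W_i$, with modulus $j\rho_i\tau_i^2\ge j\sigma/m$. The standard first-order optimality argument for the $j$-th problem compared at $\omega^j$ then gives $|\omega_i^{j+1}-\omega_i^j|\lesssim\frac{m}{j\sigma}g_i^j$ plus a term from the change of weight $j-1\to j$. That extra term has the favorable sign, because increasing the log weight pushes $\omega$ upward.

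If this coordinatewise argument is too messy, the fallback is to write $\omega^j$ as $\Omega^j$ evaluated at the rescaled point $\frac{j}{j-1}s^{j-1}$ and use the Lipschitz bound of Lemma \ref{lemma:temp}. Finally, Young's inequality $g\,\delta\le\frac{\sigma j}{2m}\delta^2+\frac{m}{2\sigma j}g^2$, combined with the strong-convexity quadratic gain, absorbs the cross term and gives the constant $\frac{m}{2\sigma j}$.
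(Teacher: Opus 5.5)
The first part of your argument is correct: the identity for $V^j(s^j)$ and the bound $V^j(s^j)-V^{j-1}(s^{j-1})\ge\langle g^j,\omega^{j+1}-\omega^1\rangle-\sum_i\rho_i\log\omega_i^{j+1}$ both hold. The gap is the step you then reduce to, $\langle g^j,\omega^j-\omega^{j+1}\rangle\le\frac{m}{2\sigma j}\|g^j\|_2^2$, which is false. Take $n=1$, $\rho_1=\tau_1=1$ (so $\sigma=m$), $\bar v\ge 3$, $j=11$, $s^{10}=10$ and $g^{11}=3$. Then $\omega^{11}=1$ and $\omega^{12}=11/13$, so $\langle g^{11},\omega^{11}-\omega^{12}\rangle=6/13$, which exceeds $9/22=\frac{m}{2\sigma j}\|g^{11}\|_2^2$.

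The lemma itself survives in this instance. The term you dropped, $\mathcal{Z}^{j-1}(s^{j-1},\omega^{j+1})-V^{j-1}(s^{j-1})\approx 0.13$, covers the deficit of about $0.05$. So the inequality $\mathcal{Z}^{j-1}(s^{j-1},\omega^{j+1})\ge V^{j-1}(s^{j-1})$ throws away exactly what the argument needs. Consistent with this, your coordinatewise and Lipschitz sketches can give at best $\frac{m}{j\sigma}\|g^j\|_2^2$, twice the target. Your closing Young step implicitly reinstates the dropped term through strong convexity. However, the modulus available there is that of $\mathcal{Z}^{j-1}(s^{j-1},\cdot)$, namely $(j-1)\sigma/m$, not $j\sigma/m$. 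Done correctly, the repaired argument proves the inequality only with $\frac{m}{2\sigma(j-1)}$ in place of $\frac{m}{2\sigma j}$.

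The paper never compares $\omega^j$ with $\omega^{j+1}$. It regroups the same identity as $V^j(s^j)+\sum_i\rho_i\log\omega_i^{j+1}=\mathcal{Z}^{j-1}(s^j,\omega^{j+1})\ge V^{j-1}(s^j)$. That is, it evaluates the $(j-1)$-weighted problem at the new point $s^j$ rather than at $s^{j-1}$. It then applies concavity and gradient-Lipschitzness (Lemma~\ref{lemma:temp}) to $V^{j-1}$ at $s^{j-1}$, where $\nabla V^{j-1}(s^{j-1})=\omega^j-\omega^1$ exactly. This is your own ``smoothness'' idea applied to $V^{j-1}$ instead of $V^j$, and it removes the $\Omega^j(s^{j-1})$ versus $\omega^j$ mismatch you ran into.

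Be aware that $\nabla V^{j-1}$ is $\frac{m}{(j-1)\sigma}$-Lipschitz, so this route also yields only $\frac{m}{2\sigma(j-1)}$ rigorously; the paper's $\frac{m}{2\sigma j}$ borrows the constant of $V^j$. Since $\frac{1}{j-1}\le\frac{2}{j}$ for $j\ge 2$, either version suffices for Theorem~\ref{thm:auxiliary_regret} up to a factor of $2$. Obtaining exactly $\frac{m}{2\sigma j}$ would require exploiting the slack that both arguments discard.
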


\begin{proof}
Combining Lemma \ref{lemma:regret_decomp} and Lemma \ref{lemma:recursive_bound}, we bound the regret by telescoping the sum.

First, apply Lemma \ref{lemma:regret_decomp}:
\begin{align*}
    R^m_{obj} \le \sum_{j=1}^m \left(\langle g^j, \omega^j - \omega^1 \rangle - \sum_{i=1}^n \rho_i \log(\omega_i^j) \right) - V^m(m \bar{g}^m) .
\end{align*}
We shift the index of the log term in the summation to align with Lemma \ref{lemma:recursive_bound}. By adding and subtracting $\sum_{i=1}^n \rho_i \log(\omega_i^{m+1})$ and noting $\sum_{i=1}^n \rho_i \log(\omega_i^1) = \sum_{i=1}^n \rho_i \log(1/\tau_i)$ is constant:
\begin{align*}
    R^m_{obj} \le \sum_{j=1}^m \left( \langle g^j, \omega^j - \omega^1 \rangle - \sum \rho_i \log(\omega_i^{j+1}) \right) + \sum \rho_i \log(\omega_i^{m+1}) - \sum \rho_i \log(\omega_i^1) - V^m(m\bar{g}^m).
\end{align*}

For $j \ge 2$, Lemma \ref{lemma:recursive_bound} gives
\begin{align}\label{eq:auxiliary_bound_1}
    \langle g^j(\omega^j), \omega^j - \omega^1 \rangle - \sum_{i=1}^n \rho_i \log(\omega_i^{j+1}) \le V^j(j \bar{g}^j) - V^{j-1}((j-1)\bar{g}^{j-1}) + \frac{m}{2\sigma j} \left \|g^j(\omega^j)\right\|_2^2.
\end{align}

For $j=1$, we have:
\begin{align*}
    V^1(\bar{g}^1) &= \min_{\omega\in W} \left\{\langle g^1(\omega^1), \omega-\omega^1)\rangle - \sum_{i=1}^n \rho_i \log(\omega_i) \right\} \\
    &= \langle g^1(\omega^1), \omega^2-\omega^1)\rangle - \sum_{i=1}^n \rho_i \log(\omega_i^2)\\
    &\ge - \sum_{i=1}^n \rho_i \log(\omega_i^2)
\end{align*}
The last inequality follows from the fact that $\omega_i^1 = \frac{1}{\tau_i}, i \in [n]$, which is an upper bound for $\omega_i$. 

For $j=1$, we have a similar bound
\begin{align}\label{eq:auxiliary_bound_2}
        \langle g^1(\omega^1), \omega^1-\omega^1\rangle - \sum_{i=1}^n \rho_i \log(\omega_i^2)\le V^1(\bar{g}^1) 
\end{align}
Combining with inequalities (\ref{eq:auxiliary_bound_1}) and inequality (\ref{eq:auxiliary_bound_2}), and summing these creates a telescoping series:
\begin{align*}
    \sum_{j=1}^m \left( \langle g^j, \omega^j - \omega^1 \rangle - \sum \rho_i \log(\omega_i^{j+1}) \right)\le \left( V^m(m \bar{g}^m) - 0 \right) + \sum_{j=2}^m \frac{m}{2\sigma j} \|g^j\|_2^2.
\end{align*}
Substituting this back into the regret bound, the $V^m(m\bar{g}^m)$ term cancels out:
\begin{align*}
    R^m_{obj} &\le \sum_{j=2}^m \frac{m}{2j \sigma} \|g^j\|_2^2 + \left( \sum_{i=1}^n \rho_i \log(\omega_i^{m+1}) - \sum_{i=1}^n \rho_i \log(\omega_i^1) \right).
\end{align*}
Since $\omega_i^{m+1} \le \frac{1}{\tau_i} = \omega_i^1$, the log difference is negative (or zero). Using the bound $\|g^j\|_2^2 \le \bar{v}^2$ and the harmonic sum $\sum_{j=2}^m \frac{1}{j} \le \log (m)$:
\begin{align*}
    R^m_{obj} \le \frac{m \bar{v}^2}{2\sigma} (\log m).
\end{align*}
With $\sigma = \min_{i\in [n]} \left\{\lambda_i \tau_i^2\right\} = \min_{i\in [n]} \left\{m \rho_i \tau_i^2\right\}$ defined in Lemma \ref{lemma:strong_convex}, we conclude $R^m_{obj} = O(\log m)$.
\end{proof}

\subsubsection{Proof of Lemma \ref{lemma:temp}}
\begin{proof}
    We first prove that $V^j(s)$ is concave. For any $\alpha \in [0, 1]$ and any two vectors $s_1$ and $s_2$, we have
    \begin{align*}
        &V^j(\alpha s_1 + (1 - \alpha) s_2) \\
        =& \min_{\omega \in W} \left\{\left\langle \alpha s_1 + (1 - \alpha) s_2, \omega - \omega^1\right \rangle - \sum_{i=1}^n j \rho_i \log(\omega_i) \right\} \\
        \ge &\min_{\omega \in W} \left\{\left\langle \alpha s_1 , \omega - \omega^1\right \rangle - \alpha \sum_{i=1}^n j \rho_i\log(\omega_i) \right\} + \min_{\omega \in W} \left\{\left\langle (1- \alpha) s_2, \omega - \omega^1\right \rangle - (1 - \alpha) \sum_{i=1}^n j \rho_i\log(\omega_i) \right\} \\
        = & \alpha V^j(s_1) + (1- \alpha)V^j(s_2).
    \end{align*}
    This indicates that function $V^j(s)$ is concave. Following the proof of Lemma \ref{lemma:strong_convex}, we know that the function $-\sum_{i=1}^n j \frac{\lambda_i}{m} \log(\omega_i)$
    is strongly convex with convexity parameter $\frac{m}{\sigma j}$. This implies that $V^j(s,\omega)$ is also strongly convex and $\Omega^j(s)$ is unique. 
    % Following the convex analysis, we have
    % \begin{align*}
    %     \Omega^j_i(s) = \begin{cases}
    %         \underline{w}_i & \text{if } \frac{j \lambda_i}{m s_i} \le \underline{w}_i \\
    %         \frac{1}{\tau_i} &\text{if } \frac{j \lambda_i}{m s_i} \ge \frac{1}{\tau_i} \\
    %         \frac{j \lambda_i}{m s_i} &\text{otherwise}.
    %     \end{cases}
    % \end{align*}
    Combining Equation \eqref{eq:v i} and \eqref{eq:omega i}, we have
    \begin{align}
    \label{eq:envelope}
        \frac{\partial V^j(s)}{\partial s_i} =\frac{\mathrm{d} V^j_i(s_i)}{\mathrm{d} s_i}=\frac{\mathrm{d} V^j_i(s_i,\Omega^j_i(s_i))}{\mathrm{d} s_i}= \Omega^j_i(s_i)-w^1_i.
        % \begin{cases}
        %     \underline{w}_i - \omega_i^1 & \text{if } \omega_i = \underline{w}_i\\
        %     \frac{j \lambda_i}{m s_i} - \omega_i^1 & \text{if } \omega_i = \frac{j \lambda_i}{m s_i}\\
        %     \frac{1}{\tau_i} - \omega_i^1 & \text{otherwise}.
        % \end{cases}
    \end{align}
    It follows that\footnote{This result can also be obtained using the well-known envelope theorem.}
    \begin{align*}
        \nabla V^j(s) = \Omega^j(s) - \omega^1.
    \end{align*}
    % According to the envelope theorem, we have that the gradient of $V^j(s)$ is:
    % \begin{gather*}
    %     \nabla V^j(s)=\frac{\partial V^j(s,\omega)}{\partial s} + \frac{\partial V^j(s, \omega)}{\partial \omega}\nabla_s\Omega^j(s) \ge \Omega^j(s)-\omega^1.
    % \end{gather*}

    According to Equation \eqref{eq:envelope}, we know that the $i$-th element of $\nabla V^j(s)-\nabla V^j(s')$ is simply $\Omega^j_i(s_i)-\Omega^j_i(s'_i)$. Using Equation \eqref{eq:omega i}, it is easy to check that $\Omega^j_i(s_i)$ is $\frac{1}{j\rho_i \tau_i^2}$-Lipschitz. This immediately implies that $\nabla V^j(s)$ is $\frac{m}{\sigma j}$-Lipschitz, since
    \begin{gather*}
        \|\nabla V^j(s)-\nabla V^j(s')\|_2^2=\sum_{i=1}^n\left[\Omega^j_i(s_i)-\Omega^j_i(s'_i)\right]^2\le \sum_{i=1}^n\frac{1}{j\rho_i\tau_i^2}\|s-s'\|_2^2\le \sum_{i=1}^n\frac{m}{\sigma j}\|s-s'\|_2^2.
    \end{gather*}
\end{proof}

\subsubsection{Proof of Lemma \ref{lemma:regret_decomp}}
\begin{proof}
By the definition of sub-gradient convexity,
\begin{align*}
    \hat{p}_j(\omega^j) - \hat{p}_j(w^*) \le \langle g^j(\omega^j), \omega^j - w^* \rangle
\end{align*}
We substitute this into the regret definition:
\begin{align*}
R^m_{obj} &= \sum_{j=1}^m \left[\hat{p}_j(\omega^j) - \hat{p}_j(w^*) - \sum_{i=1}^n \rho_i \log(\omega_i^j) + \sum_{i=1}^n \rho_i \log(w_i^*) \right] \\
&\le \sum_{j=1}^m \left( \langle g^j(\omega^j), \omega^j - w^* \rangle - \sum_{i=1}^n \rho_i \log(\omega_i^j) \right) + \sum_{i=1}^n m\rho_i \log(w_i^*) \\
&= \sum_{j=1}^m \left( \langle g^j(\omega^j), \omega^j - \omega^1 + \omega^1 - w^*\rangle - \sum_{i=1}^n \rho_i \log(\omega_i^j) \right)  + \sum_{i=1}^n m \rho_i \log(w_i^*) \\
&= \sum_{j=1}^m \left( \langle g^j(\omega^j), \omega^j - \omega^1 \rangle - \sum_{i=1}^n \rho_i \log(\omega_i^j) \right) - \langle m \bar{g}^m, w^* - \omega^1 \rangle + \sum_{i=1}^n m \rho_i \log(w_i^*).
\end{align*}

The last equality above follows from the definition of function (\ref{eq:dual_average}).
\begin{align*}
    \sum_{j=1}^m \langle g^j(\omega^j), w^*-\omega^1\rangle &= \left \langle \sum_{j=1}^m g^j(\omega^j), w^*-\omega^1\right \rangle = \langle m\bar{g}^j, w^*-\omega^1\rangle
\end{align*}

The term involving $w^*$ is bounded by the minimum over $W$:
\begin{align*}
-\langle m \bar{g}^m, w^* - \omega^1 \rangle + \sum_{i=1}^n m \rho_i \log(w_i^*) &= - \left ( \langle m \bar{g}^m, w^* - \omega^1 \rangle - \sum_{i=1}^n m \rho_i \log(w_i^*) \right) \\
&\le - \min_{w \in W} \left\{ \langle m \bar{g}^m, w - \omega^1 \rangle - \sum_{i=1}^n m \rho_i \log(w_i) \right\} \\
\end{align*}
It follows that
\begin{align}\label{eq:delta}
    R^m_{obj} \le \sum_{j=1}^m \left(\left \langle g^j(\omega^j),\omega^j - \omega^1\right \rangle -\sum_{i=1}^n \rho_i\log(\omega^j_i) \right) - \min_{w \in W} \left\{ \langle m \bar{g}^m, w - \omega^1 \rangle - \sum_{i=1}^n m \rho_i \log(w_i) \right\}
\end{align}
Combining with the definition that
\begin{align*}
    V^{m}(m\bar{g}^m) = \min_{w \in W} \left\{ \langle m \bar{g}^m, w - \omega^1 \rangle - \sum_{i=1}^n m \rho_i \log(w_i) \right\},
\end{align*}
we complete the proof.
\end{proof}

\subsubsection{Proof of Lemma \ref{lemma:recursive_bound}}
\begin{proof}
From the definition of $V^j$ (see Definition \ref{def:auxiliary_function}), we rewrite the term $V^j(j \bar{g}^j)$ by separating the regularizer:
\begin{align*}
&V^j(j\bar{g}^j) + \sum_{i=1}^n \rho_i \log(\omega_i^{j+1})\\
=& \min_{\omega \in W} \left\{ \langle j\bar{g}^j, \omega - \omega^1 \rangle - j\sum_{i=1}^n \rho_i \log(\omega_i) \right\} + \sum_{i=1}^n \rho_i \log(\omega_i^{j+1}) \\
=&\langle j\bar{g}^j, \omega^{j+1} - \omega^1 \rangle - j\sum_{j=1}^m \rho_i \log(\omega_i^{j+1}) + \sum_{i=1}^n \rho_i \log(\omega_i^{j+1})\\
=&\langle j\bar{g}^j, \omega^{j+1} - \omega^1 \rangle - (j-1)\sum_{j=1}^m \rho_i \log(\omega_i^{j+1}) \\
\ge& \min_{\omega \in W} \left\{ \langle j\bar{g}^j, \omega - \omega^1 \rangle - (j-1)\sum \rho_i \log(\omega_i) \right\}.
\end{align*}
Substituting $j\bar{g}^j = (j-1)\bar{g}^{j-1} + g^j(\omega^j)$, it follows that
\begin{align*}
V^j(j\bar{g}^j) + \sum_{i=1}^n \rho_i \log(\omega_i^{j+1})
\ge & \min_{\omega \in W} \left\{ \langle (j-1)\bar{g}^{j-1} + g^j(\omega^j), \omega - \omega^1 \rangle - (j-1)\sum_{i=1}^n \rho_i \log(\omega_i) \right\} \\
=& V^{j-1}\left( (j-1)\bar{g}^{j-1} + g^j(\omega^j) \right).
\end{align*}

From the proof of Lemma \ref{lemma:temp}, we can conclude that 
\begin{align}\label{ineq:V_strong_concave}
    V^j(s_1 + s_2) \ge V^j(s_1) + \left \langle s_2, \nabla V^j(s_1) \right \rangle - \frac{m}{2 j \sigma } \|s_2\|_2^2,\forall s_1,s_2.
\end{align}
Using the inequality \eqref{ineq:V_strong_concave}) and noting that $\nabla V^{j-1}((j-1)\bar{g}^{j-1}) = \omega^j - \omega^1$ (by the Envelope Theorem):
\begin{align*}
V^{j-1}\left( (j-1)\bar{g}^{j-1} + g^j(\omega^j) \right) &\ge V^{j-1}((j-1)\bar{g}^{j-1}) + \langle g^j(\omega^j), \omega^j - \omega^1 \rangle - \frac{m}{2j \sigma } \left\|g^j(\omega^j)\right\|^2_2.
\end{align*}
Rearranging the terms yields the result.
\end{proof}
\subsubsection{Proof of Lemma \ref{lemma:strategy_comparison}}
\begin{proof}
    Recall that the update rule in Algorithm \ref{alg:RDA} is
    \begin{align*}
		\omega^{m+1} = \argmin_{\omega \in W}\left\{\left \langle m\bar{g}^m, \omega - \omega^1 \right \rangle - \sum_{i=1}^n m\rho_i\log(\omega_i) \right\}.
    \end{align*}
    The first-order necessary optimality condition gives:
    \begin{align}\label{eq:strategy_1}
        \left \langle m\bar{g}^m - m\frac{\rho_i}{ \omega^{m+1}}, w^* - \omega^{m+1} \right \rangle \ge 0,
    \end{align}
    % where $\lambda = [\lambda_1, \cdots, \lambda_n]^T$. 
    This means $\langle m\frac{\rho_i}{\omega^{m+1}}, w^* - \omega^{m+1} \rangle\le \langle m\bar{g}^m, w^* - \omega^{m+1} \rangle$.
    The strong convexity of function $-\sum_{i=1}^n \rho_i\log(\omega_i)$ implies
	\begin{align}\label{eq:strategy_2}
		-\sum_{i=1}^n m \rho_i \log(w_i^*) \ge -\sum_{i=1}^n m\rho_i\log(w_i^{m+1}) - m \left\langle \frac{\rho}{\omega^{m+1}}, w^* - \omega^{m+1} \right \rangle + \frac{\sigma}{2}\|w^* - \omega^{m+1} \|_2^2.
	\end{align}
 Rearranging the inequality (\ref{eq:strategy_2}) 
 and combining the inequality (\ref{eq:strategy_1}), it follows that
	\begin{align*}
		&\frac{\sigma}{2}\left\|w^* - \omega^{m+1} \right\|_2^2 \\
        \le& -\sum_{i=1}^n m \rho_i\log(w_i^*)+\sum_{i=1}^n m\rho_i\log(\omega_i^{m+1}) + \left\langle m\frac{\rho}{\omega^{m+1}}, w^* - \omega^{m+1} \right \rangle \\
		\le& -\sum_{i=1}^n m \rho_i \log(w_i^*)+\sum_{i=1}^n m\rho_i\log(\omega_i^{m+1}) + \left \langle m \bar{g}_m,  w^* -\omega^{m+1}\right \rangle \\
		=&-\sum_{i=1}^n m \rho_i\log(w_i^*)+\sum_{i=1}^n m \rho_i\log(\omega_i^{m+1}) + \left \langle  m\bar{g}_m,  w^* + \omega^1 - \omega^{m+1} - \omega^1\right \rangle \\
		=& -\sum_{i=1}^n m \rho_i\log(w_i^*) + \left \langle m \bar{g}_m, w^* - \omega^1 \right \rangle - \min_{\omega \in W}\left\{ \left \langle m \bar{g}_m,  \omega-\omega^1\right \rangle - \sum_{i=1}^n m \rho_i\log(\omega_i)\right \}.
	\end{align*}
    Note that the term $\langle m \bar{g}_m, w^*-\omega^1 \rangle$ can be re-written as:
    \begin{align*}
		&\left \langle m \bar{g}_m, w^*-\omega^1 \right \rangle \\
        =& \sum_{j=1}^m \left \langle g^j(\omega^{j}), w^* - \omega^1 \right \rangle \\
		=& \sum_{j=1}^m \left \langle g^j(\omega^{j}), \omega^j - \omega^1\right \rangle + \sum_{j=1}^m \left \langle g^j(\omega^{j}), w^* - \omega^j\right \rangle- \sum_{j=1}^m \sum_{i=1}^n\rho_i\log(\omega_i^j) + \sum_{j=1}^m \sum_{i=1}^n\rho_i\log(\omega_i^j).
    \end{align*}
    Defining the right-hand side of Equation \eqref{eq:delta} as $\Delta$, we have:
    \begin{align*}
		\frac{\sigma}{2}\left\|w^* - \omega^{m+1} \right\|_2^2 \le& -\sum_{i=1}^n m \rho_i \log(w_i^*) + \sum_{j=1}^m \left \langle g^j(\omega^{j}), \omega^j - \omega^1 \right \rangle -\min_{\omega \in W}\left\{ \left \langle m \bar{g}_m,  \omega-\omega^1\right \rangle - \sum_{i=1}^n m \rho_i\log(\omega_i)\right \} \\
        &+\sum_{j=1}^m \left \langle g^j(\omega^{j}), w^* - \omega^j\right \rangle -\sum_{j=1}^m \sum_{i=1}^n\rho_i\log(\omega_i^j) + \sum_{j=1}^m \sum_{i=1}^n\rho_i\log(\omega_i^j)
        \\
		\le & \Delta + \sum_{j=1}^m \left \langle g^j(\omega^{j}), w^* - \omega^j\right \rangle - \sum_{i=1}^n m\rho_i \log(w_i^*)+\sum_{i=1}^n \sum_{j=1}^m\rho_i \log(\omega_i^j)
\end{align*}
Combining with the definition of sub-gradient convexity,
\begin{align*}
    \hat{p}_j(\omega^j) - \hat{p}_j(w^*) \le \langle g^j(\omega^j), \omega^j - w^* \rangle,
\end{align*}
It follows that
\begin{align*}
    &\sum_{j=1}^m \left \langle g^j(\omega^{j}), w^* - \omega^j\right \rangle - \sum_{i=1}^n m\rho_i \log(w_i^*)+\sum_{i=1}^n \sum_{j=1}^m\rho_i \log(\omega_i^j) \\
    \le &- \left[\sum_{j=1}^m \hat{p}_j(\omega^j) - \sum_{j=1}^m\hat{p}_j(w^*) - \sum_{i=1}^n m\rho_i \log(w_i^*)+\sum_{i=1}^n \sum_{j=1}^m\rho_i \log(\omega_i^j) \right] = - R_{obj}^m
\end{align*}

We then have that
\begin{align*}
    \left\|w^* - \omega^{m+1} \right\|_2^2 \le\frac{\delta}{2}(\Delta - R_{obj}^m) \le \frac{\delta}{2}\Delta \le \frac{m \bar{v}^2}{\sigma^2}\log (m)=O\left(\frac{\log m}{m}\right),
\end{align*}
where the last inequality follows from Theorem \ref{thm:auxiliary_regret}.
It immediately implies
    \begin{align*}
        \left|\omega_i^{m+1} - w_i^*\right| \le O \left(\sqrt{\frac{\log m}{m}} \right)
    \end{align*}

\end{proof}

\subsubsection{Proof of Lemma \ref{lemma:utility_bound}}
\begin{proof}

Fix the value of the winner, buyer 1, as $v_{1, j+1}$. Consider the randomness coming from other buyers' valuations $v_{-1, j+1}$. Assume under optimal multipliers, buyer 1 wins, i.e., $w_1^* v_{1, j+1} > w_i^*v_{i, j+1}$ for all $i \neq 1$.

Recall that on Lemma \ref{lemma:strong_convex}, we know that
\begin{align*}
    \delta = \min_{i\in[n]} \left\{\lambda_i \tau_i^2 \right\} \ge m \underline{\rho}, 
\end{align*}
with the conditions that $\lambda_i = m\rho_i \ge m \underline{\rho}, \tau_i \ge 1,\forall i \in [n]$.

Based on Lemma \ref{lemma:strategy_comparison}, we know that for each buyer, when he uses the multiplier learned by Algorithm \ref{alg:RDA}, the following condition holds.
\begin{align*}
    w_i^* - \bar{v}\sqrt{\frac{ \underline{\rho} \log (j)}{j}} \le \omega_i^{j+1} \le w_i^* + \bar{v}\sqrt{\frac{ \underline{\rho} \log (j)}{j}}.
\end{align*}

We further have that buyer $i$'s bid $\omega_i v_{i, j+1}$ falls in the interval
\begin{gather*}
    w^*_i v_{i, j+1} -  \bar{v}^2\sqrt{\frac{ \underline{\rho} \log (j)}{j}} \le \omega^{j+1}_i v_{i, j+1} \le w_i^* v_{i, j+1}+\bar{v}^2\sqrt{\frac{ \underline{\rho} \log (j)}{j}}
\end{gather*}

Then, define 
\begin{gather*}
    \epsilon_{j} = \bar{v}^2\sqrt{\frac{ \underline{\rho} \log (j)}{j}}
\end{gather*}

Then, under Algorithm \ref{alg:RDA}, false allocation occurs when other buyers' bids overlap buyer $1$'s bid. In detail, for buyer $i \neq 1$,
\begin{gather*}
    w_i^*v_{i,j+1} + \epsilon_j \ge w_1^*v_{1, j+1} - \epsilon_j
\end{gather*}

Also, we know that
\begin{gather*}
    w_i^* v_{i, j+1} \le w_1^*v_{1, j+1}
\end{gather*}

Combining these two inequalities, we immediately have
\begin{gather*}
    \frac{w_1^*}{w_i^*}v_{1, j+1} - 2\frac{\epsilon_j}{w_i^*} \le v_{i,j+1} \le \frac{w_1^*}{w_i^*}v_{1, j+1}
\end{gather*}

Since each buyer's value is drawn from a continuous distribution, with  $f_i(v_i) \le \bar{f},\forall v_i \in \mathcal{Z}$.

Thus, the probability that the above event occurs is
% Check
\begin{gather*}
    \int_{\frac{w_1^*}{w_i^*}v_{1, j+1} - 2\epsilon_j}^{\frac{w_1^*}{w_i^*}v_{1, j+1}}f_i(v_i) d v_i \le 2\bar{f}\epsilon_j
\end{gather*}

Using the union bound over all $n-1$ other buyers:
\begin{gather*}
    \text{Pr(Any False Allocation)} \le \sum_{i\neq 1} \frac{2 \bar{f} \epsilon_j}{w_i^*} \le \frac{2(n-1)}{\underline{w}} \epsilon_j
\end{gather*}

Thus, the utility loss at auction $j+1$ is at most the maximum valuation $\bar{v}$ times the probability of false allocation. 

Summation over all auctions, we have that
\begin{gather*}
    \sum_{j=1}^m \frac{2\bar{v}(n-1) \epsilon_j}{\underline{w}} = \frac{2(n-1)\bar{v}^3\bar{f}}{ \underline{w}}\sum_{j=1}^m \sqrt{\frac{\underline{\rho}\log(j)}{j}}
\end{gather*}
In the following part, we show that
\begin{align*}
    \sum_{j=1}^m \sqrt{\frac{ \log(j)}{j}} &= \sum_{j=2}^m \sqrt{\frac{\log(j)}{j}} \\
    &\le \int_{1}^m \sqrt{\frac{ \log(s)}{s}} \D s \\
    &= 2 \sqrt{m\log(m)} - \int_{1}^m \sqrt{\frac{1}{s \log(s)}} ds \\
    &\le 2 \sqrt{m \log(m)}.
\end{align*}

Thus, we know that each buyer's utility not obtained is bounded by
\begin{align*}
    u_i^* - u_i^{m} \le \frac{4(n-1)\bar{v}^3\bar{f}}{\underline{w}
    }\sqrt{\underline{\rho} m \log(m)} = O\left(\sqrt{m\log(m)}\right)
\end{align*}
\end{proof}

\subsubsection{Proof of Lemma \ref{lemma:seller_revenue}}
\begin{proof}
    Define $k_j, j\in [m]$ as the winner of auction $j$ when all buyers use strategy $w_i^*$.
    Thus, we have
    \begin{align*}
        \sum_{j=1}^m \hat{p_j}(w_i^*) = \sum_{j=1}^m w_{k_j}^* v_{k_j, j}
    \end{align*}
    When in auction $j$, buyer $i \neq k_j$ wins auction under Algorithm \ref{alg:RDA}, meaning that $\omega_i^j v_{i,j} \ge \omega_{k_j}^j v_{k_j,j}$.
    We consider the worst case that for each auction $j$, the following conditions hold
    \begin{align*}
        w_{k_j}^* \ge \omega_{k_j^j} \ge w_{k_j}^*-\bar{v}\sqrt{\frac{\underline{\rho}\log (j)}{j}}
    \end{align*}
    Thus, we have
    \begin{align*}
        \sum_{j=1}^m \hat{p}_j(\omega^j) -\sum_{j=1}^m \hat{p}_j(w^*) &\ge -\bar{v} \sum_{j=1}^m \left|w_{k_j}^* - \omega_{k_j}^j \right| \\
        &=-\bar{v}^2 \sum_{j=1}^m \sqrt{\frac{\underline{\rho}\log(j)}{j}} \\
        &\ge-\bar{v}^2\sqrt{\underline{\rho} m \log m} = -O\left(\sqrt{m \log(m)} \right)
    \end{align*}
    Rearrange we have that
    \begin{align*}
        \sum_{j=1}^m \hat{p}_j(\omega^j) \ge \sum_{j=1}^m \hat{p}_j(w^*) - O\left(\sqrt{m \log(m)} \right)
    \end{align*}
\end{proof}

\end{document}